\definecolor{darkgreen}{rgb}{0.0,0,0.9}
\newcommand{\highlight}[1]{\colorbox{yellow!30}{$\displaystyle #1$}}
\newtcolorbox{wbox}
{
	colback  = white,
}
\theoremstyle{definition}
\newtheorem{theorem}{Theorem}
\newtheorem{lemma}{Lemma}
\newtheorem{corollary}{Corollary}
\newtheorem{definition}{Definition}
\newtheorem{observation}[theorem]{Observation}
\newtheorem{remark}[theorem]{Remark}
\newtheorem{example}{Example}
\newcommand{\calA}{\mathcal{A}}
\newcommand{\calR}{\mathcal{R}}
\title{Fair Rent Division: New Budget and Rent Constraints}
 \author[1]{Rohith Reddy Gangam}
 \author[1]{Shayan Taherijam}
 \author[1]{Vijay V. Vazirani}
 \affil[1]{University of California, Irvine}
\date{}
\begin{document}
\maketitle

\begin{abstract}
We study the classical \emph{rent division problem}, where $n$ agents must allocate $n$ indivisible rooms and split a fixed total rent $R$. The goal is to compute an \emph{envy-free} (EF) allocation, where no agent prefers another agent’s room and rent to their own. This problem has been extensively studied under standard assumptions, where efficient algorithms for computing EF allocations are known.

We extend this framework by introducing two practically motivated constraints: \emph{(i)} lower and upper bounds on room rents, and \emph{(ii)} room-specific budget for agents. We develop efficient combinatorial algorithms that either compute a feasible EF allocation or certify infeasibility.

We further design algorithms to optimize over EF allocations using natural fairness objectives such as \emph{maximin utility}, \emph{leximin utility}, and \emph{minimum utility spread}. Our approach unifies both constraint types within a single algorithmic framework, advancing the applicability of fair division methods in real-world platforms such as \emph{Spliddit}.
\end{abstract}

\section{Introduction}
\label{sec:intro}
Fair rent division is a well-studied problem in economics and algorithmic game theory, and has found practical deployment through platforms such as \emph{Spliddit} and broad public interest via the \emph{New York Times}' interactive coverage. The classical problem involves assigning $n$ agents to $n$ indivisible rooms and splitting a fixed total rent $R$ such that no agent envies another's room and rent. That is, each agent strictly prefers their own room and rent over any other's. An allocation satisfying this property is called \emph{envy-free} (EF).

Under standard assumptions—quasi-linear utilities and no additional constraints—EF allocations are guaranteed to exist and can be computed efficiently via combinatorial or LP-based methods~\cite{svensson1983large, klijn2000}. In this work, we move beyond these assumptions and study the computation of EF allocations under two realistic extensions: bounds on individual room rents and room-specific budgets for agents.

Rent bounds on rooms arise naturally in many real-world housing arrangements. Some rooms may have contractual lower or upper limits on rent—either because landlords impose minimum rents for premium rooms or because roommates collectively agree to cap rents to maintain affordability. For instance, universities that partner with private landlords near campus must allocate housing fairly among students while honoring the minimum contractual rents specified by landlords. Such bounds capture practical and institutional constraints that cannot be represented by valuations alone, and thus warrant explicit modeling in the framework.

Our second extension considers budget constraints at the granularity of agent-room pairs. That is, each agent specifies a distinct upper bound on the rent they are willing to pay for each room. This generalizes prior models such as~\cite{Procaccia_Velez_Yu_2018}, which assume a single budget per agent. Room-specific budgets capture a variety of real-world financial considerations that valuations alone cannot express. Factors such as room-dependent utility surcharges can impose affordability constraints that vary across rooms for each agent. While valuations encode preferences, they do not reflect these heterogeneous financial limits. Incorporating agent-room-specific budgets allows us to retain expressive utilities while modeling realistic payment constraints.

We develop efficient combinatorial algorithms that either find an envy-free allocation satisfying the constraints or certify that none exists. We also provide efficient algorithms that identify EF allocations that are fair according to maximin utility, leximin utility, and minimum utility spread (both absolute and relative). Among the potentially many EF allocations, these fairness notions help identify outcomes that are more equitable and have been well studied in the literature on fair division. Finally, we give a unified algorithm that simultaneously handles both types of constraints—room rent bounds and room-specific budgets.

\textbf{Organization.} \Cref{sec:related} reviews related work. In \Cref{sec:preliminaries}, we present the necessary preliminaries. \Cref{sec:rent_bounds} introduces an algorithm for computing envy-free allocations under bounds on room rents, along with algorithms for selecting fair allocations using maximin, leximin, and minimum spread criteria. In \Cref{sec:agent_budgets}, we give an algorithm for handling room-specific budget constraints for agents. Finally, in \Cref{sec:rent_bounds_budgets}, we combine both types of constraints and provide an algorithm that simultaneously satisfies them. Proofs of theorems marked $\dagger$ are provided in the \Cref{app:proofs}. 

\section{Related Work}
\label{sec:related}

The rent division problem was introduced in the economics literature by Foley~\cite{foley1967resource}, and further formalized by Alkan, Demange, and Gale~\cite{adg91}. The envy-free (EF) solution was shown to always exist under quasi-linear utilities without constraints Svensson~\cite{svensson1983large}, and Klijn~\cite{klijn2000} gave a strongly polynomial algorithm based on envy graphs.

More recently, Procaccia et al.~\cite{gal2016fairest} proposed a maximin envy-free solution and implemented it as the rent division algorithm on \emph{Spliddit}. Their method optimizes fairness among EF solutions via linear programming. In follow-up work, Procaccia et al.~\cite{Procaccia_Velez_Yu_2018} studied rent division with agent budgets, but assumed each agent has a single budget across rooms.

Our work generalizes this model by introducing \emph{agent-room-specific budgets}, which better capture real-world constraints such as utility thresholds and room-specific surcharges. We provide combinatorial procedures for computing fair envy-free allocations under these constraints.

While our focus is on monetary envy-free rent division, there is significant literature on non-monetary settings. Surveys such as by Amanatidis et al.~\cite{survey_fair_division} review recent progress on discrete fair division problems involving indivisible goods, and notions like EF1, EFX, and MMS. Another relevant survey by Salvatierra et al.~\cite{survey_ef_pricing} explores envy-free pricing problems in various economic settings, emphasizing the computational complexity and algorithmic aspects of maximizing seller revenue under envy-freeness constraints.

Extensions to rent division under complex settings have also been considered, including online arrivals~\cite{aleksandrov2020online}, multi-apartment rent division~\cite{procaccia2024multiapartment} and continuous, monotone decreasing, and piecewise-linear utilities \cite{Arunachaleswaran_rent_division}. The practical relevance of this problem is evident from platforms like \emph{Spliddit} and its popular coverage in the \emph{New York Times}~\cite{nyt14}.

% This file is for the preliminaries section

\section{Preliminaries}
\label{sec:preliminaries}
\subsection{Rent Division Problem}
We consider a rent division problem involving a set of $n$ agents $\mathcal{A} = \{1, 2, \dots, n\}$ and a set of $n$ rooms $\mathcal{R} = \{1, 2, \dots, n\}$. Each room must be assigned to exactly one agent, and the room rents must be such that the total rent collected equals a fixed amount $R_t$. Each agent $i \in \mathcal{A}$ has a non-negative valuation $v_{ij} \in \mathbb{R}_+$ for room $j \in \mathcal{R}$, indicating their subjective worth of the room. These valuations are encoded in a valuation matrix $V = [v_{ij}] \in \mathbb{R}_+^{n \times n}$. A rent division instance is thus defined by the tuple $I = (\mathcal{A}, \mathcal{R}, V, R_t)$.

An \emph{assignment} of rooms is a bijection $\sigma: \mathcal{A} \to \mathcal{R}$ and we will use $\sigma(i), i \in \calA$ to denote the room assigned to agent $i$ and abuse the notation by also using $\sigma(j), j \in \calR$ to denote the agent assigned to room $j$. A \emph{rent division} is specified by a vector $r \in \mathbb{R}^n$, where $r_j$ denotes the rent charged for room $j$ and $\sum_{j\in \calA} r_j = R_t$.

Given a rent vector $r$, the \emph{quasi-linear utility} of agent $i$ for room $j$ is defined as $ u_{ij}(r) = v_{ij} - r_j$. When the rent vector is clear from context, we write $u_{ij}$ instead of $u_{ij}(r)$. Under an assignment $\sigma$, agent $i$ is said to \emph{envy} agent $j$ if $v_{i\sigma(i)} - r_{\sigma(i)} < v_{i\sigma(j)} - r_{\sigma(j)}$.

A rent allocation $(\sigma, r)$ consisting of an assignment and a rent division is said to be \emph{envy-free} (EF) if no agent envies any other, i.e.,
$$
\forall i, j \in \mathcal{A},\quad u_{i\sigma(i)} \geq u_{i\sigma(j)}
\quad \text{or equivalently}\quad 
\forall i \in \mathcal{A},j\in \mathcal{R} \quad u_{i\sigma(i)} \geq u_{ij}
$$

\subsection{Welfare-maximizing allocations}

\begin{definition}
    An assignment $\sigma$ is said to be \emph{welfare-maximizing} if it maximizes the total valuation, i.e., $\sum_{i \in \mathcal{A}} v_{i\sigma(i)}.$    
\end{definition}

The following well-known result connects welfare-maximizing assignments with envy-free solutions(for reference, see \cite{mas-colell.whinston.ea95}).

\begin{theorem}[First and Second Welfare Theorems]\label{thm:welfare}
Let $(\sigma, r)$ be an envy-free (EF) allocation under instance $I$. Then:
\begin{enumerate}
    \item $\sigma$ is welfare-maximizing.
    \item For any other welfare-maximizing assignment $\sigma'$, the allocation $(\sigma', r)$ is also envy-free.
\end{enumerate}
\end{theorem}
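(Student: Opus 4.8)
The plan is to prove both parts via an exchange/summation argument that exploits the defining envy-free inequalities directly.

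\textbf{Part 1.}
I would start from an arbitrary EF allocation $(\sigma, r)$ and an arbitrary assignment $\sigma'$, and show $\sum_{i} v_{i\sigma(i)} \ge \sum_i v_{i\sigma'(i)}$. Since $\sigma$ is EF, for every agent $i$ we have $u_{i\sigma(i)} \ge u_{i\sigma'(i)}$, i.e. $v_{i\sigma(i)} - r_{\sigma(i)} \ge v_{i\sigma'(i)} - r_{\sigma'(i)}$. Summing this over all $i \in \mathcal{A}$ gives
\[
\sum_{i\in\mathcal{A}} v_{i\sigma(i)} - \sum_{i\in\mathcal{A}} r_{\sigma(i)} \;\ge\; \sum_{i\in\mathcal{A}} v_{i\sigma'(i)} - \sum_{i\in\mathcal{A}} r_{\sigma'(i)}.
\]
The key observation is that both $\sigma$ and $\sigma'$ are bijections onto $\mathcal{R}$, so $\sum_{i} r_{\sigma(i)} = \sum_{j\in\mathcal{R}} r_j = \sum_{i} r_{\sigma'(i)}$; the rent terms cancel, leaving $\sum_i v_{i\sigma(i)} \ge \sum_i v_{i\sigma'(i)}$. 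As $\sigma'$ was arbitrary, $\sigma$ is welfare-maximizing.

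\textbf{Part 2.}
Let $\sigma'$ be any welfare-maximizing assignment; I want to show $(\sigma', r)$ is EF, i.e. for all $i$ and all $j \in \mathcal{R}$, $v_{i\sigma'(i)} - r_{\sigma'(i)} \ge v_{ij} - r_j$. First I would establish that the per-agent envy-free inequalities of $(\sigma,r)$ must hold with equality along $\sigma'$: from Part 1's summation, each inequality $v_{i\sigma(i)} - r_{\sigma(i)} \ge v_{i\sigma'(i)} - r_{\sigma'(i)}$ contributes to a sum whose two sides are equal (both assignments being welfare-maximizing and the rent sums being equal), so every individual inequality is tight, giving $v_{i\sigma(i)} - r_{\sigma(i)} = v_{i\sigma'(i)} - r_{\sigma'(i)}$ for all $i$. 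Then for any agent $i$ and any room $j$,
\[
v_{i\sigma'(i)} - r_{\sigma'(i)} \;=\; v_{i\sigma(i)} - r_{\sigma(i)} \;\ge\; v_{ij} - r_j,
\]
where the last step is the envy-freeness of the original allocation $(\sigma,r)$. Hence $(\sigma',r)$ is envy-free.

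\textbf{Main obstacle.}
There is no serious technical difficulty here; the one point that needs care is the bijection bookkeeping in Part 1 — namely that reindexing the rent sum over agents via either $\sigma$ or $\sigma'$ yields the same quantity $\sum_{j\in\mathcal{R}} r_j$ — and the logical step in Part 2 that a sum of $\ge$ inequalities with equal totals forces each to be an equality. Everything else is a direct manipulation of the defining inequalities, so the argument is short and self-contained.
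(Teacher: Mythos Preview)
Your proof is correct and is essentially identical to the paper's argument (given for the restated \Cref{thm:welfare_bounds} in the appendix): sum the envy-free inequalities over all agents, cancel the rent terms using that both assignments are bijections, deduce welfare-maximality, and then use tightness of the summed inequalities to force per-agent equalities and hence envy-freeness of $(\sigma',r)$.
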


% \begin{proof}
% Let $\sigma'$ be any welfare-maximizing assignment. Since $(\sigma, r)$ is envy-free, each agent $i$ weakly prefers their allocation under $\sigma$ to that under $\sigma'$:
% \[
% v_{i\sigma(i)} - r_{\sigma(i)} \geq v_{i\sigma'(i)} - r_{\sigma'(i)} \quad \text{for all } i \in \mathcal{A}.
% \]
% Summing over all agents and rearranging,
% \[
% \sum_{i \in \mathcal{A}} v_{i\sigma(i)} - \sum_{j \in \mathcal{R}} r_j \geq \sum_{i \in \mathcal{A}} v_{i\sigma'(i)} - \sum_{j \in \mathcal{R}} r_j.
% \]
% Since the total rent $\sum_j r_j = R_t$ is fixed, we get:
% $
% \sum_{i \in \mathcal{A}} v_{i\sigma(i)} \geq \sum_{i \in \mathcal{A}} v_{i\sigma'(i)}.
% $
% But $\sigma'$ is welfare-maximizing, so this inequality must be tight. Hence, $\sigma$ is also welfare-maximizing, and the initial inequalities for each agent must also be equalities, i.e., for each $i \in \mathcal{A}$,
% $$
% v_{i\sigma(i)} - r_{\sigma(i)} = v_{i\sigma'(i)} - r_{\sigma'(i)} \geq v_{ij} - r_j \quad \forall j \in \mathcal{R}.
% $$
% Thus, $(\sigma', r)$ is also envy-free.
% \end{proof}

\begin{corollary}
\label{cor:equal_utility}
If both $(\sigma, r)$ and $(\sigma', r)$ are envy-free, then each agent receives the same utility in both allocations:
$$
v_{i\sigma(i)} - r_{\sigma(i)} = v_{i\sigma'(i)} - r_{\sigma'(i)} \quad \text{for all } i \in \mathcal{A}.
$$
\end{corollary}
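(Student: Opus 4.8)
The plan is to derive \Cref{cor:equal_utility} directly from \Cref{thm:welfare} together with the observation that envy-freeness pins down each agent's utility once the rent vector is fixed. First I would note that since $(\sigma, r)$ is envy-free under $I$, part 1 of \Cref{thm:welfare} gives that $\sigma$ is welfare-maximizing; likewise $\sigma'$ is welfare-maximizing. Hence both $\sigma$ and $\sigma'$ achieve the same total valuation $W := \sum_{i\in\calA} v_{i\sigma(i)} = \sum_{i\in\calA} v_{i\sigma'(i)}$.

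Next I would use the envy-free inequalities in both directions. Since $(\sigma, r)$ is EF, for every agent $i$ we have $v_{i\sigma(i)} - r_{\sigma(i)} \ge v_{i\sigma'(i)} - r_{\sigma'(i)}$ (comparing agent $i$'s room under $\sigma$ to the room $\sigma'(i)$). Summing this over all $i\in\calA$ and using that $\sigma$ and $\sigma'$ are both bijections onto $\calR$ — so $\sum_i r_{\sigma(i)} = \sum_i r_{\sigma'(i)} = R_t$ — the rent terms cancel, leaving $W = \sum_i v_{i\sigma(i)} \ge \sum_i v_{i\sigma'(i)} = W$. Thus the summed inequality is an equality, which forces each individual inequality $v_{i\sigma(i)} - r_{\sigma(i)} \ge v_{i\sigma'(i)} - r_{\sigma'(i)}$ to hold with equality for every $i$. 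Symmetrically, applying the same argument with the roles of $\sigma$ and $\sigma'$ swapped (using that $(\sigma', r)$ is EF) gives the reverse equalities, but in fact the single direction already yields $v_{i\sigma(i)} - r_{\sigma(i)} = v_{i\sigma'(i)} - r_{\sigma'(i)}$ for all $i$, which is the claim.

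Alternatively — and this is the cleaner route — I would invoke part 2 of \Cref{thm:welfare} to conclude that $(\sigma', r)$ is also envy-free (since $\sigma'$ is welfare-maximizing and $(\sigma,r)$ is EF). Then both $(\sigma, r)$ and $(\sigma', r)$ being EF lets me write, for each $i$, both $v_{i\sigma(i)} - r_{\sigma(i)} \ge v_{i\sigma'(i)} - r_{\sigma'(i)}$ (from envy-freeness of $(\sigma,r)$) and $v_{i\sigma'(i)} - r_{\sigma'(i)} \ge v_{i\sigma(i)} - r_{\sigma(i)}$ (from envy-freeness of $(\sigma',r)$), and the two inequalities together give the desired equality immediately, with no summation needed.

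I do not anticipate a genuine obstacle here; the only point requiring a little care is making sure the hypothesis as stated is enough to apply \Cref{thm:welfare}. The corollary assumes both $(\sigma,r)$ and $(\sigma',r)$ are envy-free, so we may apply part 1 to each to get that both assignments are welfare-maximizing (or simply use the envy-free inequalities of each directly, as in the alternative route), and the conclusion follows. I would present the short two-inequality argument as the main proof and perhaps remark that the summation argument shows the conclusion holds even under the weaker hypothesis that only one of the two allocations is assumed envy-free.
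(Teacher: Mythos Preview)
Your proposal is correct and matches the paper's implicit argument. The paper states the corollary without proof, but the two-inequality route you call ``cleaner'' is exactly the one-line derivation the placement as a corollary suggests: since both $(\sigma,r)$ and $(\sigma',r)$ are envy-free by hypothesis, each agent weakly prefers their own room under each allocation, giving the two opposite inequalities and hence equality. (Note that your invocation of part~2 of \Cref{thm:welfare} in that route is unnecessary---both allocations are already assumed envy-free---but the argument is unaffected.) Your summation argument also appears verbatim in the paper's proof of \Cref{thm:welfare_bounds}, so both of your routes are in line with the paper.
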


Given the result above, the problem of finding an envy-free allocation can be naturally interpreted using a graph-theoretic framework explained below.

\subsection{Envy Graph}

\begin{definition}
    Let $I = (\mathcal{A}, \mathcal{R}, V, R_t)$ be a rent-division instance, and let $(\sigma, r)$ be an allocation of rooms and rents. Define the \emph{envy graph} $G_E(\sigma, r) = (\mathcal{R}, E)$ on the set of rooms as follows: There is a directed edge from room $i$ to room $j$ if the agent assigned to room $i$ weakly envies the agent assigned to room $j$ under rent vector $r$. That is,
    $$
    E = \{ (\sigma(i), \sigma(j)) \mid u_{i\sigma(i)} \leq u_{i\sigma(j)} \}
    $$
\end{definition}

\begin{definition}
    An edge $(i, j) \in E$ is called a \emph{weak envy edge} if $u_{i\sigma(i)} = u_{i\sigma(j)}$, and a \emph{strong envy edge} if $u_{i\sigma(i)} < u_{i\sigma(j)}$.
\end{definition}

This combinatorial structure has been widely utilized in the literature on envy-free rent division. In particular, Klijn~\cite{klijn2000} employed it to design a strongly polynomial-time combinatorial algorithm for computing envy-free allocations. For more details check \Cref{subsec:Klijn} 

% A key property of envy-free allocations is that their corresponding envy graphs contain no strong envy edges. We briefly describe Klijn's algorithm below, as our methods build upon its framework.

\subsection{Model}

We now extend the rent division problem to incorporate the two additional constraints of our setting. First, each agent is subject to a room-specific budget constraint, represented by a matrix $b = (b_{ij})_{(i, j) \in \mathcal{A} \times \mathcal{R}}$, where $b_{ij}$ denotes the maximum rent agent $i$ is willing or able to pay for room $j$. Second, we are given vectors $l,u \in \mathbb{R}^n$ such that each room has associated lower and upper bounds on its rent: for each room $j \in \mathcal{R}$, the rent must satisfy $l_j \leq r_j \leq u_j$. 

A rent division instance with these extended constraints is defined by the tuple $I = (\mathcal{A}, \mathcal{R}, V, R_t, b, l, u)$, and in the case that there is no budget constraint or bound, we drop them from the notation. The goal is to compute an envy-free allocation that satisfies all constraints and, when possible, optimizes one of the following fairness objectives:
\begin{itemize}
    \item \emph{Maximin utility}: maximize the minimum utility among agents;
    \item \emph{Leximin utility}: lexicographically maximize the vector of utilities;
    \item \emph{Minimum absolute spread}: minimize the difference between the maximum and minimum utilities;
    \item \emph{Minimum relative spread}: minimize the ratio of the maximum to minimum utilities.
\end{itemize}

We would like to note that the results in \Cref{sec:rent_bounds} can be obtained using Linear Programming - an optimal matching can be found separately and the conditions for envy-free rents can be written as linear constraints. This enables us to write the problem as an LP and find an envy free rent assignment that maximizes one of the above fairness objectives. The primary goal of the paper is to instead find these assignments using combinatorial means and in doing so, develop structural insights into the problem. Moreover, if the agents have room-specific bounds, these conditions can not be written as linear constraints and so, we will use these combinatorial insights to solve the problem in \Cref{sec:agent_budgets} and \Cref{sec:rent_bounds_budgets}.

We first begin by analyzing the setting with agent budgets but without rent bounds.

\section{Envy-free allocation with bounds on room rents}
\label{sec:rent_bounds}

We begin by observing that the First and Second Welfare Theorems continue to hold even under these additional constraints.

\begin{restatable}{theorem}{welfarebounds}
% \begin{theorem}
\label{thm:welfare_bounds}
$^{\dagger}$
Let $(\sigma, r)$ be an envy-free allocation under instance $I$. Then:
\begin{enumerate}
    \item $\sigma$ is a welfare-maximizing assignment.
    \item For any other welfare-maximizing assignment $\sigma'$, the allocation $(\sigma', r)$ is also envy-free.
\end{enumerate}
% \end{theorem}
\end{restatable}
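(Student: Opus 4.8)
The plan is to observe that \Cref{thm:welfare_bounds} is essentially \Cref{thm:welfare} in disguise: the extra constraints (room-specific budgets $b$ and rent bounds $l,u$) only restrict the set of \emph{feasible} rent vectors, they do not change what it means for a fixed allocation $(\sigma,r)$ to be envy-free. So the proof is really just a matter of checking that the classical argument never uses feasibility of the rents beyond the envy-freeness inequalities themselves.

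For part (1), I would argue directly. Suppose $(\sigma,r)$ is EF and let $\sigma'$ be any other assignment. Envy-freeness gives $v_{i\sigma(i)} - r_{\sigma(i)} \ge v_{i\sigma'(i)} - r_{\sigma'(i)}$ for every agent $i$. Summing over all $i\in\mathcal{A}$ and using that $\sigma'$ is a bijection so $\sum_i r_{\sigma'(i)} = \sum_i r_{\sigma(i)} = R_t$, the rent terms cancel and we get $\sum_i v_{i\sigma(i)} \ge \sum_i v_{i\sigma'(i)}$. Hence $\sigma$ maximizes total valuation. Note this argument is identical to the unconstrained case — it only invokes the EF inequalities for the given allocation, which we are handed by hypothesis, so the presence of $b,l,u$ is irrelevant here.

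For part (2), let $\sigma'$ be welfare-maximizing; I must show $(\sigma',r)$ is EF, i.e. $v_{i\sigma'(i)} - r_{\sigma'(i)} \ge v_{ij} - r_j$ for all $i$ and all $j\in\mathcal{R}$. The standard trick: from EF of $(\sigma,r)$ we know each agent's utility is maximized over \emph{all} rooms at their assigned room, so $v_{i\sigma(i)} - r_{\sigma(i)} \ge v_{i\sigma'(i)} - r_{\sigma'(i)}$; summing and using part (1) (both $\sigma,\sigma'$ are welfare-maximizing, so $\sum_i v_{i\sigma(i)} = \sum_i v_{i\sigma'(i)}$) forces every one of these inequalities to be tight, i.e. $v_{i\sigma'(i)} - r_{\sigma'(i)} = v_{i\sigma(i)} - r_{\sigma(i)}$ for all $i$ (this is exactly \Cref{cor:equal_utility}). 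Then for any room $j$, $v_{i\sigma'(i)} - r_{\sigma'(i)} = v_{i\sigma(i)} - r_{\sigma(i)} \ge v_{ij} - r_j$, where the last inequality is EF of the original allocation. So $(\sigma',r)$ is EF. Again, $b,l,u$ never enter.

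The "main obstacle" is therefore not a mathematical one but a bookkeeping one: I should state explicitly at the outset that envy-freeness of a \emph{given} allocation is defined purely by the utility inequalities and is unaffected by whether $r$ respects the bounds or budgets, and then essentially cite or replay the proof of \Cref{thm:welfare}. If the paper wants the appendix proof to be self-contained, the only real content is the two-line summation argument above; there is no place where the additional constraints could possibly cause trouble, because neither conclusion asserts feasibility of anything new — part (1) is about the combinatorial assignment and part (2) reuses the \emph{same} rent vector $r$, which is feasible by assumption.
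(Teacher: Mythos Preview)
Your proposal is correct and matches the paper's own proof essentially line for line: sum the envy-freeness inequalities over all agents, use that both $\sigma$ and $\sigma'$ are bijections so the rent terms cancel, conclude $\sigma$ is welfare-maximizing, and then use tightness of the summed inequalities to transfer envy-freeness to $(\sigma',r)$. The paper likewise notes at the end that the bound constraints are automatically satisfied since the rent vector $r$ is unchanged, which is exactly your ``bookkeeping'' observation.
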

% \begin{proof}
%     \Cref{proof:welfare_bounds}
% \end{proof}

This result simplifies the problem: it suffices to first compute a welfare-maximizing assignment, and then find rents (within the given bounds) that make the allocation envy-free.

Expanding on these properties, we will describe \Cref{alg:bounds_on_rents} to obtain an envy-free rent satisfying the bounds on room rents. We will then provide algorithms to obtain fair envy-free rents in Sections \ref{sec:maximin_and_leximin}, and \ref{sec:min_spread}.

Let $I = (\mathcal{A}, \mathcal{R}, V, R_t, l, u)$ be a rent division instance with lower and upper bounds on the rents of each room. By \Cref{thm:welfare_bounds}, we may begin with any welfare-maximizing assignment and search for envy-free rents that respect the specified bounds. \Cref{alg:bounds_on_rents} describes an algorithm that, given a welfare-maximizing assignment $\sigma$ and an envy-free rent vector $r$ (e.g., computed via an algorithm such as \cite{klijn2000}), computes an envy-free allocation satisfying all rent bounds, or correctly reports that there is no solution for this instance.

We begin with an observation about the envy graph $G_E(\sigma, r)$ that is an integral part of all our algorithms. Consider an edge $(i,j)$ in $G_E(r)$, indicating that agent $i$ weakly envies agent $j$. If we decrease the rent of room $j$, then to maintain envy-freeness, the rent of room $i$ must decrease by at least the same amount—otherwise, the weak envy becomes strong. Similarly, increasing the rent of room $i$ requires increasing the rent of room $j$ by at least as much. This implies that increasing (resp., decreasing) the rent of a room requires increasing (resp., decreasing) the rents of all rooms reachable from (resp., that can reach) it in $G_E(\sigma, r)$. for $X \subseteq \mathcal{R}$, let $R(X)$ and $R^{-1}(X)$ denote the set of rooms reachable from, and the set of rooms that can reach, some rooms in $X$, respectively.

\Cref{alg:bounds_on_rents} operates by classifying rooms into four categories based on their current rents: $L$, the set of rooms priced strictly above their lower bounds; $L_f$, those exactly at their lower bounds; $U$, rooms priced strictly below their upper bounds; and $U_f$, those exactly at their upper bounds.

If $L \neq \varnothing$, the algorithm attempts to increase the rents of all rooms in $L$, along with all rooms in their reachable set $R(L)$, at unit rate. To keep the total rent fixed at $R_t$, it simultaneously decreases the rents of rooms not in $R(L_f) \cup R(L)$—since decreasing the rent of a room in $R(L_f)$ would eventually require decreasing the rent of a room in $L_f$, which is already at its lower bound and thus infeasible. 

If no such room exists, then the rent can not be altered without violating the lower bounds. the algorithm returns ``No solution,'' as this constitutes a valid certificate that no envy-free allocation exists within the specified bounds.

The rate of decrease is carefully chosen so that the total rent remains constant. This adjustment continues until one of the following events occurs: 
- A room enters or leaves one of the sets $L$, $U$, $L_f$, or $U_f$, or
- A new weak envy edge is introduced into the graph $G_E(\sigma, r)$. The sets and the rates are recomputed and the process is repeated until $L = \varnothing$.

Once $L = \varnothing$, a symmetric procedure is applied to reduce the rents of rooms in $U$ (i.e., those strictly below their upper bounds), continuing until $U = \varnothing$. As shown in \Cref{thm:bounds_on_rents}, this process produces an envy-free rent vector within bounds, if such a solution exists. \Cref{fig:algo_bounds} gives a picture representation of how the algorithm works.

\begin{restatable}{theorem}{boundsonrents}
% \begin{theorem}
$^\dagger$
\label{thm:bounds_on_rents}
\Cref{alg:bounds_on_rents} returns an envy-free rent vector satisfying the given lower and upper bounds, if such a solution exists. Otherwise, it correctly returns "No solution."
% \end{theorem}
\end{restatable}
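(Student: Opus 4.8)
The plan is to prove four statements in sequence: two invariants hold throughout the run of \Cref{alg:bounds_on_rents}; the algorithm terminates in polynomial time; on successful termination it outputs a feasible envy-free allocation; and its ``No solution'' output is a valid certificate of infeasibility. A useful first step is to fix the assignment once and for all using \Cref{thm:welfare_bounds}: since every envy-free allocation uses a welfare-maximizing assignment, and every welfare-maximizing assignment is compatible with exactly the same envy-free rent vectors, the polytope
$$\mathcal{F}=\Bigl\{\,r\in\mathbb{R}^{n}\ :\ \textstyle\sum_{j}r_{j}=R_{t},\ \ u_{i\sigma(i)}\ge u_{ij}\ \text{ for all } i\in\mathcal{A},\ j\in\mathcal{R}\,\Bigr\}$$
is independent of which welfare-maximizing $\sigma$ we pick (it is nonempty, by the classical unconstrained result). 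Consequently ``a feasible envy-free allocation exists'' is equivalent to ``$\mathcal{F}\cap\{\,l\le r\le u\,\}\neq\varnothing$'', and it suffices to run the algorithm from the $\sigma$ produced by \cite{klijn2000} and argue about $\mathcal{F}$.

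\emph{Invariants.} I would show that at every moment $r\in\mathcal{F}$ (i.e. $(\sigma,r)$ is envy-free) and $\sum_{j}r_{j}=R_{t}$. The latter is built in: the decrease rate on the rooms of $\mathcal{R}\setminus(R(L_{f})\cup R(L))$ is chosen so that the aggregate decrease exactly cancels the unit-rate aggregate increase on $R(L)$, and symmetrically in the $U$-phase. For envy-freeness, recall the observation preceding the algorithm: on a weak envy edge from room $p$ to room $q$ we have $u_{\sigma(p)p}=u_{\sigma(p)q}$, and the edge remains a (weak or non-)envy edge precisely when $r_{q}$ changes by at least as much as $r_{p}$. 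The only instantaneous rent changes the algorithm applies are $+1$ on $R(L)$, $0$ on $R(L_{f})\setminus R(L)$, and $-\delta$ on $\mathcal{R}\setminus(R(L_{f})\cup R(L))$; since both $R(L)$ and $R(L_{f})$ are closed under out-edges of $G_{E}(\sigma,r)$, a short case analysis on the position of $p$ shows the change of $r_{p}$ never exceeds that of $r_{q}$, so no weak edge becomes strong and no new strong edge is created. (New \emph{weak} edges may appear; that merely forces a recomputation of the sets and rates.) The $U$-phase argument is symmetric, with sets closed under in-edges.

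\emph{Termination and the success case.} Within each phase the rents move monotonically — non-decreasing on one family of rooms, non-increasing on the rest — so every room reaches each of its two bounds at most once per phase, and the potential equal to the total violation of the currently active family of bounds strictly decreases at a fixed rate between consecutive events. The remaining events are appearances or disappearances of edges of $G_{E}(\sigma,r)$; between two consecutive bound-crossings one shows, as in Klijn's analysis~\cite{klijn2000}, that the acyclic condensation of the envy graph evolves monotonically, which bounds the number of events by a polynomial. Hence each phase, and the whole algorithm, halts in polynomial time. When it halts without declaring infeasibility, its stopping criterion guarantees that every $r_{j}$ lies in $[l_{j},u_{j}]$, and combined with the invariants this means $(\sigma,r)$ is an envy-free allocation satisfying all bounds.

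\emph{Soundness of ``No solution''.} This is the crux. The algorithm declares infeasibility only from a ``stuck'' configuration — in the $L$-phase, when $\mathcal{R}\setminus(R(L_{f})\cup R(L))=\varnothing$, and symmetrically in the $U$-phase — i.e. when the required compensating rooms do not exist. The structural lemma to establish is that the reachability-closed set $S$ exhibited by the stuck configuration, together with the record of which rooms of $S$ and of $\mathcal{R}\setminus S$ currently sit at a box bound, behaves rigidly inside $\mathcal{F}$: for \emph{every} $r'\in\mathcal{F}$ one must have $\sum_{j\in S}r'_{j}=\sum_{j\in S}r_{j}$ — because any room that an envy-free motion is forced to move is itself reachable from (resp.\ can reach) $S$ and hence lies in $S$, while the rooms of $S$ at a lower (resp.\ upper) bound cannot move down (resp.\ up). But this common value is, by the stuck condition, strictly outside $\bigl[\sum_{j\in S}l_{j},\ \sum_{j\in S}u_{j}\bigr]$, so $\mathcal{F}\cap\{\,l\le r\le u\,\}=\varnothing$ and the instance is genuinely infeasible. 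I expect this last step to be the main obstacle: the algorithm only exhibits that its continuous motion is locally blocked, and one must upgrade this to a global statement about the polytope $\mathcal{F}$; both the rigidity lemma and the polynomial termination bound hinge on a careful understanding of how the condensation of $G_{E}(\sigma,r)$ changes, which is the delicate part of the proof.
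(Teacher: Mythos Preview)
Your invariant and success-case arguments are essentially the paper's, but the infeasibility portion misreads the algorithm and the proposed certificate does not work.

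The algorithm's ``No solution'' tests (see the pseudocode) are the intersection conditions $U\cap\bigl(R(L_f)\cup R(L)\bigr)\neq\varnothing$ and $L\cap\bigl(R^{-1}(U_f)\cup R^{-1}(U)\bigr)\neq\varnothing$, not the complement-empty condition $\mathcal{R}\setminus(R(L_f)\cup R(L))=\varnothing$ you describe (the prose in Section~\ref{sec:rent_bounds} is looser than the pseudocode on this point). With the correct condition the infeasibility proof is a short path argument and needs no rigidity lemma: in the first case there is a weak-envy path $j=h_1\to\cdots\to h_t=k$ with $j\in L_f\cup L$ and $k\in U$; any feasible envy-free $r'$ has $r'_k\le u_k<r_k$, and propagating back along the path via envy-freeness of $r'$ (each edge $h_s\to h_{s+1}$ forces $r_{h_s}-r'_{h_s}\ge r_{h_{s+1}}-r'_{h_{s+1}}$) yields $r'_j<r_j\le l_j$, contradicting the lower bound. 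The second case is symmetric. This is exactly the paper's proof.

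Your proposed rigidity statement --- that $\sum_{j\in S}r'_j=\sum_{j\in S}r_j$ for every $r'\in\mathcal{F}$ whenever $S$ is reachability-closed in $G_E(\sigma,r)$ --- is false in general: take an instance whose envy graph at $r$ has no edges, so every singleton is closed, yet rent can be shifted between any two rooms inside $\mathcal{F}$. So even granting your reading of the stuck condition, the certificate you describe would not establish infeasibility. For termination, the paper also takes a more elementary route than tracking the condensation of the envy graph: it shows that $|U\cup L|$ drops after at most $n$ iterations of the while loop by monitoring growth of $|R^{-1}(U)\cup R(L)|$ (or the analogous sets when one of $L,U$ is empty), giving $O(n^{2})$ iterations overall.
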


\subsection{Envy-free rents for maximin and leximin utility among agents}
\label{sec:maximin_and_leximin}

We first describe \Cref{alg:maximin}, which computes an envy-free allocation that maximizes the minimum utility among agents, and later, extend it to obtain leximin utility. The algorithm starts with a rent vector $r$ satisfying envy-freeness and the rent bounds, obtained via \Cref{alg:bounds_on_rents}. It then iteratively increases the minimum utility without violating rent bounds.

Firstly, the algorithm identifies the set $M$ of rooms whose occupants have the minimum utility under $r$. It decreases the rents of rooms in $M$, along with all rooms in $R^{-1}(M)$, at a constant rate to preserve envy-freeness. To maintain the total rent, it increases the rents of rooms that are neither in $R^{-1}(M)$ nor in $R^{-1}(U_f)$—i.e., rooms whose rents can be safely raised while remaining within bounds. The process continues until a new weak envy edge appears or a vertex enters or exits $M$, $L_f$, or $U_f$, at which point the sets and update rates are recomputed. The algorithm terminates when $S_{\text{inc}}$, the set of rooms eligible for rent increase, becomes empty, or when decreasing rents in $M$ would require lowering rents of rooms in $L_f$, violating lower bounds. In \Cref{thm:maximin}, we prove that the final rent vector maximizes the minimum utility subject to envy-freeness and rent bounds. See \Cref{fig:maximin} for details.

\begin{algorithm}[ht]
	\begin{wbox}
    \textsc{Maximin-EF-Rents}$(\sigma,r, \mathcal{A}, \mathcal{R}, V, R_t, l, u)$\\
		% \textbf{Algorithm to find envy-free rents that maximize the minimum utility:} \\
		\textbf{Input:} A rent division instance with bounds on room rents, $I = (\mathcal{A}, \mathcal{R}, V, R_t, l, u)$, a welfare-maximizing assignment $\sigma$ and an envy-free rent division $r$ satisfying the bounds. \\
        \textbf{Output:} An envy-free rent division $r$ that satisfies the bounds on room rents and maximizes the minimum utility.
		\begin{enumerate}
        % [noitemsep,nosep]
		  \item \textbf{Initialization:}
		  \begin{enumerate}
          % [noitemsep,nosep]
                \item $G_E(\sigma, r) := \text{Envy-Graph}(\sigma, r)$
                \item $L_f := \{j\in \calR | r_j = l_j \} $
                \item $U_f := \{j\in \calR | r_j = u_j \} $
                \item $M := \{j\in \mathcal{R}| u_{\sigma^{-1}(j)}=\min_{j'\in \calR} (u_{\sigma^{-1}(j')})\}$
                % \item 
		  \end{enumerate}
            \item \textbf{While} $(R^{-1}(M) \cup R^{-1}(U_f)) \neq \calR $ \textbf{do:}
                \begin{enumerate}
                    \item \textbf{If $ (R(L_f) \cap M) \neq \varnothing$}, \textbf{break};
                    \item Set $S_{dec} := R^{-1}(M)$
                    \item Set $S_{inc} := \calR \setminus (R^{-1}(M) \cup R^{-1}(U_f))$
                    \item Increase rents, $r_j$, of all rooms in $S_{inc}$ at unit rate; \\ Decrease rents, $r_j$, of all rooms in $S_{dec}$ at the rate $\frac{|S_{inc}|}{|S_{dec}|}$ simultaneously;
                    \item Stop when a (weak-envy) edge is added to $G_E(r)$ or a vertex moves into or out of $M,L_f,U_f$.
                    \item Update $M,L_f,U_f \text{ and } G_E(r)$.
            \end{enumerate}
            \item \textbf{Return: $r$} 

		\end{enumerate}
	\end{wbox}
        \caption{Algorithm to find envy-free rents that maximize the minimum utility.}
	\label{alg:maximin} 
\end{algorithm}

\begin{figure}
   \begin{wbox}
       \begin{center}
    
    \begin{tikzpicture}[scale=0.65, >=Stealth, thick, font=\large]

% Define colors
\definecolor{lblue}{RGB}{0,180,255}
\definecolor{lred}{RGB}{255,80,80}
\definecolor{darkgreen}{RGB}{0,130,0}

% Top left group (U and R⁻¹(U)) — inward arrows
\begin{scope}[shift={(-5,2)}]
  \draw[lblue, thick] (0,0) ellipse (2 and 2.6);
  \draw[lblue, thick] (0,1) circle (1.2);
  \node at (0,1) {$M$};
  \node at (0,-2) {$R^{-1}(M)$};
  \draw[->, lblue, thick] (-1,-0.8) .. controls (-0.4,0.2) .. (-0.4,0.5);
  \draw[->, lblue, thick] (0,-0.8) .. controls (0,0.2) .. (0,0.5);
  \draw[->, lblue, thick] (1,-1) .. controls (0.4,0.2) .. (0.4,0.5);
  \node at (2.3,1.6) {\small $r \downarrow$};
\end{scope}

% REST label
 \draw[darkgreen, thick] (1,2) ellipse (3 and 2);
 \node[darkgreen] at (1,2) {\large rest};

% Bottom left: Uf and R(Uf) — inward arrows
\begin{scope}[shift={(-3,-3)}]
  \draw[black, thick] (0,0) ellipse (1.6 and 1.9);
  \draw[black, thick] (0,0.9) circle (0.9);
  \node at (0,1) {$U_f$};
  \node at (0,-1.3) {$R^{-1}(U_f)$};
 \draw[->, lblue, thick] (-0.8,-0.6) .. controls (-0.3,0.2) .. (-0.3,0.5);
  \draw[->, lblue, thick] (0,-0.6) .. controls (0,0.2) .. (0,0.5);
  \draw[->, lblue, thick] (0.8,-0.6) .. controls (0.3,0.2) .. (0.3,0.5);
\end{scope}

% Bottom right: Lf and R⁻¹(Lf) — outward arrows
\begin{scope}[shift={(2,-3)}]
  \draw[black, thick] (0,0) ellipse (1.6 and 1.9);
  \draw[black, thick] (0,0.9) circle (0.9);
  \node at (0,1) {$L_f$};
  \node at (0,-1.3) {$R^(L_f)$};
   \draw[->, lblue, thick] (-0.3,0.5) .. controls (-0.3,0.2) .. (-0.8,-0.6);
  \draw[->, lblue, thick] (0,0.5) .. controls (0,0.2) .. (0,-0.6);
  \draw[->, lblue, thick] (0.3,0.5) .. controls (0.3,0.2) .. (0.8,-0.6);
 
\end{scope}

\end{tikzpicture}
\end{center}

   \end{wbox}
    \caption{Figure describing \Cref{alg:maximin}}
    \label{fig:maximin}
\end{figure}

\begin{restatable}{theorem}{maximin}
% \begin{theorem}
$^{\dagger}$
\label{thm:maximin}
    \Cref{alg:maximin} terminates and returns envy-free rents which maximize the minimum utility among agents. 
% \end{theorem}
\end{restatable}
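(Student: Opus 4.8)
The plan is to establish three things in turn: (i) the algorithm preserves the invariant that $(\sigma,r)$ stays envy-free and within the rent bounds throughout; (ii) it terminates; and (iii) the rent vector it returns actually maximizes the minimum agent utility among all envy-free, bound-respecting allocations. For (i), I would argue that at each infinitesimal step the chosen rate pair keeps total rent fixed (since $|S_{\text{inc}}|\cdot 1 = |S_{\text{dec}}|\cdot \frac{|S_{\text{inc}}|}{|S_{\text{dec}}|}$), that decreasing rents only on $S_{\text{dec}} = R^{-1}(M)$ cannot create a strong-envy edge — using the reachability observation from the text, any weak-envy edge $(i,j)$ with $j$'s rent dropping forces $i\in R^{-1}(M)$ too, so $i$'s rent drops at the same unit-equivalent rate — and symmetrically that raising rents only on $S_{\text{inc}} = \calR\setminus(R^{-1}(M)\cup R^{-1}(U_f))$ is safe because any weak-envy edge into such a room comes from a room also outside $R^{-1}(M)\cup R^{-1}(U_f)$. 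The guard ``if $R(L_f)\cap M\neq\varnothing$ then break'' plus the definition of $S_{\text{inc}}$ as excluding $R^{-1}(U_f)$ ensure no lower bound is breached when decreasing and no upper bound when increasing; I'd also note that whenever an event (edge added, or vertex entering/leaving $M$, $L_f$, $U_f$) occurs, the algorithm recomputes the sets, so the invariant is re-established on the next iteration.

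For termination, I would use a potential/monotonicity argument. The key monotone quantity is the minimum utility $\mu(r) = \min_i u_{i\sigma(i)}$: decreasing rents on $R^{-1}(M)\supseteq M$ strictly increases the utility of every agent in a room of $M$, while agents outside $R(L_f)$ reachable structure are unaffected in a way that lowers $\mu$ below its new target. More carefully, between two consecutive ``set-recomputation'' events $\mu(r)$ is non-decreasing and, over a maximal run with fixed $M$, strictly increasing; each event either enlarges $R^{-1}(M)\cup R^{-1}(U_f)$ (which can happen only finitely often before the while-condition fails), grows $L_f$ (monotonically, since once at the lower bound a room only stays there under this phase — rents in $R^{-1}(M)$ decrease, rents elsewhere increase), or adds a weak-envy edge. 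The edges require a standard argument that the envy graph cannot oscillate: combined with Corollary \ref{cor:equal_utility}, weak-envy edges added along a fixed level of $\mu$ persist, so only finitely many appear. This bounds the number of events, hence the number of iterations, and each iteration runs a linear step; I expect this bookkeeping — showing no event type recurs unboundedly — to be the main obstacle, and I'd handle it by the same amortized accounting Klijn uses for the envy-graph algorithm.

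For optimality, I would characterize the termination condition and show it is exactly the obstruction to further improving $\mu$. The algorithm halts in one of two ways. If it halts because $R^{-1}(M)\cup R^{-1}(U_f) = \calR$, then every room is either at its upper bound or lies in $R^{-1}(U_f)$ or lies in $R^{-1}(M)$; I'd argue that to raise $\mu$ one must lower the rent of every room in $R^{-1}(M)$ by the reachability observation (raising $\mu$ means raising every $M$-agent's utility, i.e. lowering every $M$-room's rent, and then lowering everything that can reach $M$), but to preserve $\sum r_j = R_t$ something must rise, and the only candidates are rooms that can reach $U_f$, which are pinned from above — formally, raising any room in $R^{-1}(U_f)$ eventually forces raising a room in $U_f$, impossible. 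If instead it halts via $R(L_f)\cap M\neq\varnothing$, then some room at its lower bound can reach a room of $M$, so lowering that $M$-room's rent would force lowering a room already at $l_j$ — again impossible. In both cases I'd conclude via an exchange argument: any envy-free bound-respecting $r'$ with $\mu(r') > \mu(r)$ would have to agree with the forced monotone changes above and violate either the total-rent constraint or a bound, contradiction. Since the algorithm only ever increases $\mu$, and stops precisely when $\mu$ cannot be increased, the returned $r$ is maximin-optimal.
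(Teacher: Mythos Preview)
Your optimality argument (part iii) and the invariant-preservation discussion (part i) are essentially the paper's proof: split on the two termination conditions, and in each case propagate rent inequalities along weak-envy paths to show that any putative better envy-free $r'$ either forces $r'_j<l_j$ at some $j\in L_f$ (when $R(L_f)\cap M\neq\varnothing$) or forces $r'_j\le r_j$ everywhere with strict inequality on $M$, contradicting $\sum_j r'_j=R_t$ (when $R^{-1}(M)\cup R^{-1}(U_f)=\calR$). That part is fine.

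The termination argument (part ii) has genuine gaps. First, your claim that $L_f$ grows monotonically is false: a room $j\in L_f$ that lies in $S_{\text{inc}}$ (which is possible---nothing forces $L_f\subseteq R^{-1}(M)\cup R^{-1}(U_f)$) has its rent raised and leaves $L_f$. Second, the claim that ``weak-envy edges added along a fixed level of $\mu$ persist'' is not justified and is not obviously true: an existing edge $(a,b)$ with $a\in S_{\text{dec}}$ and $b\notin S_{\text{dec}}$ becomes slack and disappears as $r_a$ drops. Invoking Corollary~\ref{cor:equal_utility} does not help here, since that corollary concerns two EF allocations with the \emph{same} rent vector, not the evolution of the envy graph under changing rents. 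So your event-counting bookkeeping does not close.

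The paper's termination argument avoids all of this with a single monotone potential: any \emph{new} weak-envy edge created during a step must have its head in $S_{\text{dec}}=R^{-1}(M)$ or its tail in $S_{\text{inc}}$ (since those are the only rooms whose rents move, and rooms within the same set move at equal rates). In either case the tail, previously outside $R^{-1}(M)\cup R^{-1}(U_f)$, now reaches that set; hence $|R^{-1}(M)\cup R^{-1}(U_f)|$ strictly increases at each such event, bounding the number of iterations by $O(n)$. You should replace your multi-quantity bookkeeping with this direct monotonicity.
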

\Cref{alg:leximin} extends the \Cref{alg:maximin} to compute rents that realize the leximin utility vector. After obtaining a rent vector that maximizes the minimum utility as part of Case 1 above, the algorithm freezes the rents of all rooms in $M \cap R(L_f)$—those with minimum utility whose rents can no longer be decreased—by adding them to both $L_f$ and $U_f$. This effectively freezes their rents. It then identifies the rooms with the next lowest utility among the remaining agents and re-applies the maximin subroutine. This process continues until, like in Case 2 above, $R^{-1}(M) \cup R^{-1}(U_f) = \mathcal{R}$ - the point where no room rent can be changed without making the imputation lexicographically worse. \Cref{thm:leximin} proves that the final rent vector achieves the leximin utility profile. See \Cref{fig:leximin} for details.

% \newpage

\begin{algorithm}[ht]
	\begin{wbox}
    \textsc{Leximin-EF-Rents}$(\sigma,r, \mathcal{A}, \mathcal{R}, V, R_t, l, u)$\\
		% \textbf{Algorithm to find envy-free rents corresponding to the leximin utility vector:} \\
		\textbf{Input:} A rent division instance with bounds on room rents, $I = (\mathcal{A}, \mathcal{R}, V, R_t, l, u)$, a welfare-maximizing assignment $\sigma$ and an envy-free rent division $r$ satisfying the bounds.\\
        \textbf{Output:} An envy-free rent division $r$ that satisfies the bounds on room rents and makes the utility vector leximin.
		\begin{enumerate}
        % [noitemsep,nosep]
		  \item \textbf{Initialization:}
		  \begin{enumerate}
          % [noitemsep,nosep]
                \item $G_E(\sigma, r) := \text{Envy-Graph}(\sigma, r)$.
                \item $L_f := \{j\in \calR | r_j = l_j \} $
                \item $U_f := \{j\in \calR | r_j = u_j \} $
                \item $F := \varnothing $
                % \item 
		  \end{enumerate}
            \item \textbf{While} $(R^{-1}(M) \cup R^{-1}(U_f)) \neq \cal R $ \textbf{do:}
                \begin{enumerate}
                    \item $M = \{ j\in \calR | u_{\sigma^{-1}(j)} = \min_{j'\in ({\calR\setminus F}) } (u_{\sigma(j')})\}$
                    \item \textbf{If $(R(L_f) \cap M) \neq \varnothing$}
                    \begin{enumerate}
                        \item $F := F \cup ((R(L_f) \cap M)$\hfill\emph{(freeze the rents)}
                        \item $L_f:= L_f \cup F $
                        \item $U_f:= U_f \cup F $
                        \item \textbf{continue};
                    \end{enumerate}
                    \item Set $S_{dec} = R^{-1}(M)$
                    \item Set $S_{inc} = \calR \setminus (R^{-1}(M) \cup R^{-1}(U_f))$
                    \item Increase rents, $r_j$, of all rooms in $S_{inc}$ at unit rate; \\ Decrease rents, $r_j$, of all rooms in $S_{dec}$ at the rate $\frac{|S_{inc}|}{|S_{dec}|}$ simultaneously;
                    \item Stop when a (weak-envy) edge is added to $G_E(r)$ or a vertex moves into or out of $M,L_f,U_f$.
                    \item Update $M,L_f,U_f \text{ and } G_E(r)$.
            \end{enumerate}
            \item \textbf{Return: $r$} 

		\end{enumerate}
	\end{wbox}
        \caption{Algorithm to find envy-free rents corresponding to the leximin utility vector.}
	\label{alg:leximin} 
\end{algorithm}
\begin{figure}[ht]
   
    \begin{wbox}

    \begin{center}
    
    \begin{tikzpicture}[scale=0.65, >=Stealth, thick, font=\large]

% Define colors
\definecolor{lblue}{RGB}{0,180,255}
\definecolor{lred}{RGB}{255,80,80}
\definecolor{darkgreen}{RGB}{0,130,0}

% Top left group (U and R⁻¹(U)) — inward arrows
\begin{scope}[shift={(-5,2)}]
  \draw[lblue, thick] (0,0) ellipse (2 and 2.6);
  \draw[lblue, thick] (0,1) circle (1.2);
  \node at (0,1) {$M$};
  \node at (0,-2) {$R^{-1}(M)$};
  \draw[->, lblue, thick] (-1,-0.8) .. controls (-0.4,0.2) .. (-0.4,0.5);
  \draw[->, lblue, thick] (0,-0.8) .. controls (0,0.2) .. (0,0.5);
  \draw[->, lblue, thick] (1,-1) .. controls (0.4,0.2) .. (0.4,0.5);
  \node at (2.3,1.6) {\small $r \downarrow$};
\end{scope}

% REST label
 \draw[darkgreen, thick] (1,2) ellipse (3 and 2);
 \node[darkgreen] at (1,2) {\large rest};

% Bottom left: Uf and R(Uf) — inward arrows
\begin{scope}[shift={(-3,-3)}]
  \draw[black, thick] (0,0) ellipse (1.6 and 1.9);
  \draw[black, thick] (0,0.9) circle (0.9);
  \node at (0,1) {$U_f$};
  \node at (0,-1.3) {$R^{-1}(U_f)$};
 \draw[->, lblue, thick] (-0.8,-0.6) .. controls (-0.3,0.2) .. (-0.3,0.5);
  \draw[->, lblue, thick] (0,-0.6) .. controls (0,0.2) .. (0,0.5);
  \draw[->, lblue, thick] (0.8,-0.6) .. controls (0.3,0.2) .. (0.3,0.5);
\end{scope}

% Bottom right: Lf and R⁻¹(Lf) — outward arrows
\begin{scope}[shift={(2,-3)}]
  \draw[black, thick] (0,0) ellipse (1.6 and 1.9);
  \draw[black, thick] (0,0.9) circle (0.9);
  \node at (0,1) {$L_f$};
  \node at (0,-1.3) {$R^(L_f)$};
   \draw[->, lblue, thick] (-0.3,0.5) .. controls (-0.3,0.2) .. (-0.8,-0.6);
  \draw[->, lblue, thick] (0,0.5) .. controls (0,0.2) .. (0,-0.6);
  \draw[->, lblue, thick] (0.3,0.5) .. controls (0.3,0.2) .. (0.8,-0.6);

\end{scope}
\begin{scope}[shift={(-7,-3)}]
  \draw[thick] (0,0) ellipse (1.6 and 1.9);
  \node at (0,0) {$F$};
  \end{scope}

%\node[font=\small, anchor=west] at (-11.6,3.8) (defU) {min utilities not fixed};
% \node[font=\small, anchor=west] at (-11,-4) (defU) {fixed rent};
\end{tikzpicture}
\end{center}
\end{wbox}

    \caption{Figure describing \Cref{alg:leximin}}
    \label{fig:leximin}
\end{figure}

\begin{restatable}{theorem}{leximin}
% \begin{theorem}
$^\dagger$
\label{thm:leximin}
    \Cref{alg:leximin} terminates and returns envy-free rents corresponding to the leximin utility vector. 
% \end{theorem}
\end{restatable}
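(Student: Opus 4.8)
The plan is to prove \Cref{thm:leximin} in two parts: (i) termination, and (ii) correctness, i.e.\ that the returned rent vector $r$ is envy-free, within bounds, and realizes the leximin-optimal utility vector among all such allocations. For termination, I would argue that the outer loop has two kinds of iterations: those that execute the ``freeze'' branch (step 2b), and those that execute the maximin-style update (steps 2c--2g). Each freeze iteration strictly enlarges $F$ by at least one room, and $F$ never shrinks and is bounded by $|\mathcal{R}| = n$, so there can be at most $n$ freeze iterations. Between consecutive freeze iterations, the algorithm behaves exactly like \Cref{alg:maximin} applied to the sub-instance on $\mathcal{R}\setminus F$ (with the frozen rooms acting as both lower- and upper-bound-tight), so termination of that phase follows from \Cref{thm:maximin} (which I am allowed to assume). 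The only subtlety is to confirm that each maximin phase makes genuine progress — the minimum utility on $\mathcal{R}\setminus F$ strictly increases, or the phase ends because a room of $M$ hits $R(L_f)$ and gets frozen — so the phases cannot cycle; I would package this via a potential/lexicographic-progress argument on the sorted utility vector restricted to unfrozen rooms.

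For correctness, I would first establish the invariant maintained throughout: $r$ is always envy-free and within the bounds. Envy-freeness is preserved because, by the envy-graph observation quoted before \Cref{alg:bounds_on_rents}, decreasing rents on $S_{dec} = R^{-1}(M)$ and increasing on $S_{inc} = \mathcal{R}\setminus(R^{-1}(M)\cup R^{-1}(U_f))$ cannot turn a weak envy edge into a strong one (no edge leaves $S_{dec}$ to a lower-rent room, and $S_{inc}$ is closed under $R^{-1}$ in the appropriate sense), and the process halts the moment a new weak envy edge appears; bounds are preserved because the algorithm halts whenever a room enters $L_f$ or $U_f$, and rooms in $R^{-1}(U_f)$ are never raised while rooms in $R(L_f)\cap M$ are frozen before being lowered. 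Freezing a room by putting it in both $L_f$ and $U_f$ is consistent with these invariants since $r_j$ for $j \in F$ never changes again. Also, I would note that by \Cref{cor:equal_utility}, the per-agent utilities are independent of which welfare-maximizing assignment is used, so it is legitimate to speak of ``the'' leximin utility vector of the instance.

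The heart of the proof — and the main obstacle — is showing leximin optimality. I would prove this by induction on the ``levels'' of the leximin vector, mirroring the standard argument for leximin algorithms. Suppose at the start of a maximin phase the set of unfrozen rooms $\mathcal{R}\setminus F$ has minimum utility $m$, realized by $M$. The phase raises $m$ as much as possible; it stops either because $S_{inc}=\varnothing$ (i.e.\ $R^{-1}(M)\cup R^{-1}(U_f)=\mathcal{R}$) or because some room of $M$ reaches $R(L_f)$. The key claim is: in any envy-free within-bounds allocation, the rooms in the final $M\cap R(L_f)$ (which then get frozen) cannot receive utility higher than their current value — because raising their utility requires, via the envy graph, lowering some rent in $R^{-1}(M)$, which chains down to a room already at its lower bound $l_j$, contradicting the bound; and it cannot help to instead lower their own rent below $l_j$. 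This is exactly the ``blocking'' structure: $R(L_f)\cap M\ne\varnothing$ certifies that the current minimum level is frozen at its optimum. I would then argue that once these rooms are frozen, the remaining problem is a faithful sub-instance, so by induction the algorithm produces the leximin-optimal continuation, and concatenating gives the global leximin vector. The delicate points I expect to wrestle with are: (a) handling weak-envy edges carefully, since a chain of weak-envy edges can force several rooms' utilities to move in lockstep, so ``the minimum level'' may actually be an equivalence class of rooms that must be treated together — I would formalize this using strongly connected components of the weak-envy subgraph; and (b) verifying that when the phase ends with $S_{inc}=\varnothing$ via $R^{-1}(U_f)$ rather than $L_f$, the rooms pinned against their upper bounds genuinely cannot be improved either, which again follows from the envy-graph reachability observation but needs to be stated cleanly. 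Once these structural facts are in place, leximin optimality follows by the same exchange/contradiction argument used in \Cref{thm:maximin}, lifted level by level.
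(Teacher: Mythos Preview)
Your proposal is correct and follows essentially the same approach as the paper: both arguments rest on the identical two structural cases (a room in $M\cap R(L_f)$ cannot have its rent lowered in any feasible envy-free allocation without forcing some $j\in L_f$ below its bound or below its already-frozen value, and $R^{-1}(M)\cup R^{-1}(U_f)=\mathcal{R}$ yields a total-rent contradiction), and your termination argument via at most $n$ freeze iterations with a bounded maximin phase between them matches the paper's $O(n^2)$ bound. The only difference is packaging---you frame correctness as an induction on leximin levels with the frozen rooms defining a faithful sub-instance, whereas the paper runs a single contradiction argument by taking a hypothetical leximin optimum $r'\neq r$ and looking at the \emph{first} iteration where the algorithm freezes a rent differing from $r'$; both framings invoke exactly the same envy-graph chaining.
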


\subsection{Envy-free rents that minimize the absolute and relative spread of utility among agents}
\label{sec:min_spread}

\begin{algorithm}[ht]
	\begin{wbox}
    \textsc{MinSpread-EF-Rents}$(\sigma,r, \mathcal{A}, \mathcal{R}, V, R_t, l, u)$\\
		% \textbf{Algorithm to find envy-free rents that minimize the absolute spread of utility among agents:} \\
		\textbf{Input:} A rent division instance with bounds on room rents, $I = (\mathcal{A}, \mathcal{R}, V, R_t, l, u)$, a welfare-maximizing assignment $\sigma$ and an envy-free rent division $r$ satisfying the bounds. \\
        \textbf{Output:} An envy-free rent division $r$ that satisfies the bounds on room rents and minimizes the absolute spread of utilities.
		\begin{enumerate}
        % [noitemsep,nosep]
		  \item \textbf{Initialization:}
		  \begin{enumerate}
          % [noitemsep,nosep]
                \item $G_E(\sigma, r) := \text{Envy-Graph}(\sigma, r)$
                \item $L_f := \{j\in \calR | r_j = l_j \} $
                \item $U_f := \{j\in \calR | r_j = u_j \} $
                \item $M := \{j\in \mathcal{R}| u_{\sigma^{-1}(j)}=\min_{j'\in \calR} (u_{\sigma^{-1}(j')})\}$
                % \item 
		  \end{enumerate}
            \item \textbf{Find maximin solution:} \\ 
            \textbf{While} $(R^{-1}(M) \cup R^{-1}(U_f)) \neq \calR $ \textbf{do:}
                \begin{enumerate}
                    \item \textbf{If $ (R(L_f) \cap M) \neq \varnothing$}, \textbf{break};
                    \item Set $S_{dec} := R^{-1}(M)$
                    \item Set $S_{inc} := \calR \setminus (R^{-1}(M) \cup R^{-1}(U_f))$
                    \item Increase rents, $r_j$, of all rooms in $S_{inc}$ at unit rate; \\ Decrease rents, $r_j$, of all rooms in $S_{dec}$ at the rate $\frac{|S_{inc}|}{|S_{dec}|}$ simultaneously;
                    \item Stop when a (weak-envy) edge is added to $G_E(r)$ or a vertex moves into or out of $M,L_f,U_f$.
                    \item Update $M,L_f,U_f \text{ and } G_E(r)$.
                % \item 
		  \end{enumerate}
            \item \textbf{Fix agents with maximin utilities:} 
            \begin{enumerate}
                \item $F := M$
                \item $L_f:= L_f \cup F $
                \item $U_f:= U_f \cup F $
                \item $M := \{j\in \mathcal{R}| u_{\sigma^{-1}(j)}=\min_{j'\in (\calR \setminus F)} (u_{\sigma^{-1}(j')})\}$
                % \item $M := \{ \argmax_{j'\in ({\calR\setminus F}) } (u_{\sigma(j')})\}$
            \end{enumerate}
                
            \item \textbf{Find minimax solution \emph{among the rest}:} \\
            \textbf{While} $(R(L_f) \cup R(M)) \neq \calR $ \textbf{do:}
                \begin{enumerate}
                    \item \textbf{If $(R^{-1}(U_f) \cap M) \neq \varnothing$}, \textbf{break};
                    \item Set $S_{dec} = \calR \setminus (R(M) \cup R(L_f))$
                    \item Set $S_{inc} = R(M)$
                    \item Increase rents, $r_j$, of all rooms in $S_{inc}$ at unit rate; \\ Decrease rents, $r_j$, of all rooms in $S_{dec}$ at the rate $\frac{|S_{inc}|}{|S_{dec}|}$ simultaneously;
                    \item Stop when a (weak-envy) edge is added to $G_E(r)$ or a vertex moves into or out of $M,L_f,U_f$.
                    \item Update $M,L_f,U_f \text{ and } G_E(r)$.
            \end{enumerate}
            \item \textbf{Return: $r$} 

		\end{enumerate}
	\end{wbox}
        \caption{Algorithm to find envy-free rents corresponding to the leximin utility vector.}
	\label{alg:min_spread} 
\end{algorithm}
\begin{figure}[ht]
   \begin{wbox}
       \begin{center}
    
    \begin{tikzpicture}[scale=0.65, >=Stealth, thick, font=\large]

% Define colors
\definecolor{lblue}{RGB}{0,180,255}
\definecolor{lred}{RGB}{255,80,80}
\definecolor{darkgreen}{RGB}{0,130,0}

% Top left group (U and R⁻¹(U)) — inward arrows
\begin{scope}[shift={(-5,2)}]
  \draw[lred, thick] (0,0) ellipse (2 and 2.6);
  \draw[lred, thick] (0,1) circle (1.2);
  \node at (0,1) {$M$};
  \node at (0,-2) {$R(M)$};
  \draw[->, lblue,, thick] (-0.4,0.5) .. controls (-0.4,0.2) .. (-1,-0.8);
  \draw[->, lblue,, thick] (0,0.5) .. controls (0,0.2) .. (0,-0.8);
  \draw[->, lblue,, thick] (0.4,0.5) .. controls (0.4,0.2) .. (1,-1);
  \node at (-3,0.5) {\small $r \uparrow$};
\end{scope}

% REST label
 \draw[darkgreen, thick] (1,2) ellipse (3 and 2);
 \node[darkgreen] at (1,2) {\large rest};

% Bottom left: Uf and R(Uf) — inward arrows
\begin{scope}[shift={(-2,-3)}]
  \draw[black, thick] (0,0) ellipse (1.6 and 1.9);
  \draw[black, thick] (0,0.9) circle (0.9);
  \node at (0,1) {$U_f$};
  \node at (0,-1.3) {$R^{-1}(U_f)$};
 \draw[->, lblue, thick] (-0.8,-0.6) .. controls (-0.3,0.2) .. (-0.3,0.5);
  \draw[->, lblue, thick] (0,-0.6) .. controls (0,0.2) .. (0,0.5);
  \draw[->, lblue, thick] (0.8,-0.6) .. controls (0.3,0.2) .. (0.3,0.5);
\end{scope}

% Bottom right: Lf and R⁻¹(Lf) — outward arrows
\begin{scope}[shift={(2,-3)}]
  \draw[black, thick] (0,0) ellipse (1.6 and 1.9);
  \draw[black, thick] (0,0.9) circle (0.9);
  \node at (0,1) {$L_f$};
  \node at (0,-1.3) {$R^(L_f)$};
   \draw[->, lblue, thick] (-0.3,0.5) .. controls (-0.3,0.2) .. (-0.8,-0.6);
  \draw[->, lblue, thick] (0,0.5) .. controls (0,0.2) .. (0,-0.6);
  \draw[->, lblue, thick] (0.3,0.5) .. controls (0.3,0.2) .. (0.8,-0.6);
 
\end{scope}

\begin{scope}[shift={(-6,-3)}]
  \draw[thick] (0,0) ellipse (1.6 and 1.9);
  \node at (0,0) {$F$};
 
\end{scope}
%\node[font=\small, anchor=west] at (-11.6,3.8) (defU) {max utilities not fixed};
% \node[font=\small, anchor=west] at (-10.5,-3.5) (defU) {fixed rent};

\end{tikzpicture}
\end{center}

   \end{wbox}
    \caption{Figure describing \Cref{alg:min_spread}}
    \label{fig:min-spread}
\end{figure}
\Cref{alg:min_spread} computes the rents that minimize the absolute spread in utilities. It first maximizes the minimum utility using \Cref{alg:maximin} as a subroutine. Then, it freezes the rents of \emph{these} rooms and minimizes the maximum utility among the \emph{remaining} agents using an algorithm analogous to \Cref{alg:maximin}. As \Cref{thm:min_absolute_spread} shows, this procedure minimizes the absolute spread in utilities. Furthermore, \Cref{thm:min_relative_spread} shows that the rent vector minimizing the absolute spread also minimizes the relative spread. Hence, \Cref{alg:min_spread} simultaneously minimizes both absolute and relative utility spread.

\begin{restatable}{theorem}{minabsolutespread}
% \begin{theorem}
$^\dagger$
\label{thm:min_absolute_spread}
    \Cref{alg:min_spread} returns envy-free rents that minimizes the absolute spread of utility among the agents. 
% \end{theorem}
\end{restatable}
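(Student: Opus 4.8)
The plan is to split the argument into three parts: (a) the algorithm preserves envy-freeness and the rent bounds and terminates; (b) its output attains $m^*:=\max_{u\in P}\min_i u_i$ and, subject to that, the smallest possible value of $\max_i u_i$, where $P$ is the set of envy-free rent vectors respecting the bounds; and (c) any vector achieving (b) minimizes the spread $\max_i u_i-\min_i u_i$. Throughout I use that on $P$ the sum $\sum_i u_i$ is constant and that, by \Cref{cor:equal_utility}, the choice of welfare-maximizing assignment is irrelevant.

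For (a): the first phase (Steps 1--2) is verbatim \Cref{alg:maximin}, so it is covered by \Cref{thm:maximin}. The second phase (Steps 3--4) is the maximin procedure applied to $-u$ restricted to the non-frozen agents: $M$ is now the set of rooms of maximum non-frozen utility, and one raises the rents of $R(M)$ (which contains $M$), a move forced by the reachability observation preceding \Cref{alg:bounds_on_rents} in order to keep envy-freeness, while lowering the rents of $\calR\setminus(R(M)\cup R(L_f))$ at the balancing rate $|S_{inc}|/|S_{dec}|$ so that $\sum_j r_j=R_t$ and no room drops below its lower bound. The break condition $R^{-1}(U_f)\cap M\neq\varnothing$ fires exactly when some room of $R(M)$ would be pushed above its upper bound, since $R^{-1}(U_f)\cap R(M)\neq\varnothing$ forces $R^{-1}(U_f)\cap M\neq\varnothing$ (if $M$ reaches $j$ and $j$ reaches $U_f$ then $M$ reaches $U_f$). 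Termination holds because only finitely many events occur (a new edge in $G_E$, or a vertex entering/leaving $M,L_f,U_f$) and, between events, the controlled quantity — the minimum utility in Phase 1, the maximum non-frozen utility in Phase 2 — changes strictly monotonically.

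Part (c) is the conceptual core. Let $s^*=\min_{u\in P}(\max_i u_i-\min_i u_i)$ and let $u\in P$ attain it; suppose $\min_i u_i<m^*$. Run \Cref{alg:maximin} from $u$. At each instant its minimum utility rises at rate $|S_{inc}|/|S_{dec}|$ (the current minimizers $M$ all lie in $S_{dec}=R^{-1}(M)$, and by the stopping rule no other coordinate reaches the minimum during a step), while every utility coordinate changes at rate at most $|S_{inc}|/|S_{dec}|$, so the maximum utility also rises at rate at most $|S_{inc}|/|S_{dec}|$; hence the spread is non-increasing along the run. Since the run stays inside $P$ and $u$ is spread-minimal there, the spread stays equal to $s^*$, and by \Cref{thm:maximin} the run terminates at some $u''\in P$ with $\min_i u''_i=m^*$, whence $\max_i u''_i=m^*+s^*$. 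Therefore $\min\{\max_i u_i : u\in P,\ \min_i u_i=m^*\}\le m^*+s^*$, and the reverse inequality holds because any such $u$ has spread $\max_i u_i-m^*\ge s^*$. Thus minimizing $\max_i u_i$ over $\{u\in P:\min_i u_i=m^*\}$ gives spread exactly $s^*$.

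It remains to connect Phase 2 to that subproblem. After Phase 1 ends at $r^*$ with $\min_i u_i=m^*$ (\Cref{thm:maximin}), Step 3 freezes the rents of all rooms in $M$; as in the termination analysis of Phase 1 the minimum cannot be raised, so each such room must keep utility $m^*$ in any continuation, and since all frozen agents sit at the global minimum $m^*$, minimizing the maximum among the non-frozen agents coincides with minimizing $\max_i u_i$ over $\{u\in P:\min_i u_i=m^*\}$ — exactly the reflected maximin problem solved by Steps 3--4, for which I would invoke the mirror image of the optimality argument of \Cref{thm:maximin}. One also checks $\min_i u_i$ stays at $m^*$ throughout Phase 2: the only utilities that decrease are those of rooms in $R(M)$, and any room of $R(M)$ with a weak-envy edge into a frozen room lies in $R^{-1}(U_f)$, so the break triggers before such a move. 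Combining (b) and (c), the returned vector is envy-free within bounds with $\min_i u_i=m^*$ and $\max_i u_i=m^*+s^*$, hence spread $s^*$, which is the theorem. I expect the main obstacle to be precisely this last step: rigorously justifying that freezing all of $M$ (rather than only the genuinely pinned rooms, as in \Cref{alg:leximin}) and that $\min_i u_i$ remains $m^*$ do not compromise the subsequent minimax; the spread-non-increasing lemma of part (c) is short once the rate computations are written out.
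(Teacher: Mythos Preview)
Your plan differs from the paper's: the paper argues by a direct path-chasing contradiction at the termination configuration, whereas you reduce to a minimax subproblem via the pleasant observation in part (c) that the maximin flow never increases the spread. That lemma is correct and is not isolated in the paper.

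The gap is in part (b), precisely where you flag it, and your proposed fix does not work. The claim that $\min_i u_i$ stays at $m^*$ during Phase~2 is false. Your argument confuses two different equalities: a weak-envy edge $(j,k)$ into a frozen room $k$ records that \emph{agent $\sigma^{-1}(j)$} is indifferent between rooms $j$ and $k$, i.e.\ $v_{\sigma^{-1}(j),j}-r_j=v_{\sigma^{-1}(j),k}-r_k$; it says nothing about whether $u_j$ (the utility of $j$'s occupant) has reached $u_k=m^*$. Concretely, take $v_1=(20,20,0,0)$, $v_2=(0,2,0,0)$, $v_3=(0,0,10,0)$, $v_4=(0,0,0,5)$, total rent $R_t=10$, with the single bound $l_4=5$. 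The diagonal assignment is welfare-optimal, $m^*=0$, and $r^*=(1,1,3,5)$ (utilities $(19,1,7,0)$, $F=\{4\}$) is a valid Phase~1 output. In Phase~2 one has $M=\{1\}$, the only envy edge is $(1,2)$, so $S_{\text{inc}}=\{1,2\}$ and $S_{\text{dec}}=\{3\}$; at time $t$ the utilities are $(19-t,\,1-t,\,7+2t,\,0)$, and the first event is the new edge $(2,3)$ at $t=4/3$, by which point $u_2=-1/3<0=m^*$. The algorithm terminates with utilities $(53/3,-1/3,29/3,0)$, which is not maximin, so Phase~2 has left the set $\{u\in P:\min_i u_i=m^*\}$ on which your reduction relies.

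Two remarks. First, the paper's own proof makes the same unjustified assertion (``$r$ is a maximin utility rent vector'') and therefore shares this gap. Second, in the example above the output still has spread $18$, which is the true minimum (the surviving edge $(1,2)$ forces $u_1-u_2\ge 18$ for every feasible envy-free $r$), so the counterexample breaks the proofs but not the theorem; a correct argument must analyze the path structure at termination without assuming the output is maximin.
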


\begin{restatable}{theorem}{minrelativespread}
% \begin{theorem}
$^\dagger$
\label{thm:min_relative_spread}
    \Cref{alg:min_spread} returns envy-free rents that also minimizes the relative spread of utility among the agents. 
% \end{theorem}
\end{restatable}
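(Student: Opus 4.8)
Fix the welfare-maximizing assignment $\sigma$ and let $\mathcal{P}$ be the (polyhedral, nonempty since a feasible $r$ is supplied as input) set of rent vectors that are envy-free under $\sigma$ and satisfy $l \le r \le u$. For $r \in \mathcal{P}$ write $a(r) := \min_{i\in\mathcal{A}} u_{i\sigma(i)}(r)$ and $b(r) := \max_{i\in\mathcal{A}} u_{i\sigma(i)}(r)$, so the absolute spread is $b(r)-a(r)$, the relative spread is $b(r)/a(r)$, and the latter is the meaningful quantity exactly when $a(r)>0$. Let $r^\ast$ be the output of \Cref{alg:min_spread}, and set $a^\ast:=a(r^\ast)$, $b^\ast:=b(r^\ast)$. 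If $a^\ast=b^\ast$ the relative spread at $r^\ast$ is $1$ and there is nothing to prove; otherwise $a^\ast<b^\ast$. We also assume some allocation in $\mathcal{P}$ has positive minimum utility (otherwise relative-spread minimization is not meaningfully posed); by Fact (F2) below this forces $a^\ast>0$. The plan is to reduce the statement to two facts already available: \emph{(F1)} $r^\ast$ minimizes the absolute spread over $\mathcal{P}$, i.e. $b(r)-a(r)\ge b^\ast-a^\ast$ for all $r\in\mathcal{P}$ (this is \Cref{thm:min_absolute_spread}); and \emph{(F2)} $r^\ast$ also maximizes the minimum utility over $\mathcal{P}$, i.e. $a(r)\le a^\ast$ for all $r\in\mathcal{P}$.

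To establish (F2): Step~2 of \Cref{alg:min_spread} is verbatim \Cref{alg:maximin}, so by \Cref{thm:maximin} the rent vector $r^{(2)}$ at the end of Step~2 satisfies $a(r^{(2)})=m^\ast:=\max_{r\in\mathcal{P}}a(r)$. Steps~3--4 then freeze, by inserting the set $F$ of minimum-utility rooms into both $L_f$ and $U_f$, exactly those rooms whose occupants attain $m^\ast$, so those utilities stay equal to $m^\ast$; among the remaining rooms — all of whose occupants have utility \emph{strictly} above $m^\ast$ — the procedure only drives the maximum utility down while raising the smaller ones, and the stopping/break conditions prevent any such utility from reaching $m^\ast$. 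Hence $a(r^\ast)=m^\ast$, which is (F2). (This last preservation claim is precisely what the correctness proof of \Cref{thm:min_absolute_spread} verifies, via the envy-graph reachability sets $R(\cdot),R^{-1}(\cdot)$ and the break rule $R^{-1}(U_f)\cap M\neq\varnothing$.)

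Granting (F1) and (F2), the theorem follows by a short estimate. Let $r\in\mathcal{P}$ with $a(r)>0$. Since $b^\ast-a^\ast\ge 0$ and $0<a(r)\le a^\ast$ by (F2),
\[
\frac{b(r)}{a(r)} \;=\; 1+\frac{b(r)-a(r)}{a(r)} \;\ge\; 1+\frac{b^\ast-a^\ast}{a(r)} \;\ge\; 1+\frac{b^\ast-a^\ast}{a^\ast} \;=\; \frac{b^\ast}{a^\ast},
\]
where the first inequality is (F1) and the second uses that $x\mapsto (b^\ast-a^\ast)/x$ is nonincreasing for $x>0$ together with $a(r)\le a^\ast$. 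Thus $r^\ast$ attains the smallest relative spread among all envy-free within-bounds allocations, which is \Cref{thm:min_relative_spread}.

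The main obstacle is (F2): the passage from absolute to relative spread is the one-line computation above, but it genuinely requires $a(r^\ast)=\max_{\mathcal{P}}a$ — if $r^\ast$ failed to be maximin, one could have $a(r)>a^\ast$ for some competitor and the estimate would break. So the work is in confirming that Steps~3--4 of \Cref{alg:min_spread} never pull a non-frozen agent's utility down to (or below) the maximin value $m^\ast$; this uses the fine structure of the algorithm and is the same bookkeeping already done in the proof of \Cref{thm:min_absolute_spread}, so in the write-up I would cite it rather than redo it. A secondary, purely cosmetic point is to state the standing hypothesis $a(\cdot)>0$ under which "relative spread" is defined, so that all denominators above are legitimate.
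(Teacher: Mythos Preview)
Your proposal is correct and follows essentially the same route as the paper: both arguments rest on the two facts that the output of \Cref{alg:min_spread} is simultaneously min-absolute-spread (\Cref{thm:min_absolute_spread}) and maximin, and then derive min-relative-spread via the identical inequality $\frac{b(r)-a(r)}{a(r)}\ge\frac{b^\ast-a^\ast}{a^\ast}$. Your write-up is a bit more careful than the paper's in justifying the maximin property (F2) and in flagging the positivity hypothesis $a(\cdot)>0$ needed for the ratio to be defined, but the underlying argument is the same.
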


\begin{remark}
    Note that the algorithms discussed in this section change the rents at constant rate and ensure that the algorithms terminate. It is easy to compute when the next events - vertices movement to different sets or appearance of new weak-envy edges - happens. And so, we can adjust these rents discreetly to the next event. This way, all algorithms can be made to run in strongly polynomial time. 
\end{remark}

\section{Envy-Free Allocation with Room-Specific Agent Budgets}
\label{sec:agent_budgets}

In this section, we consider settings with only room-specific budgets for agents. There are no bounds on the room rents.
Let $I = (\mathcal{A}, \mathcal{R}, V, R_t, b)$ be an instance of the rent division problem where $b = (b_{ij})_{(i, j) \in \mathcal{A} \times \mathcal{R}}$ specifies agent-specific budget constraints for each room. Define $EF(I)$ as the set of all envy-free allocations and $EF_b(I)$ as those that also respect the budget constraints.
% Since the problem, techniques and algorithms, described in this section closely parallels the framework studied in \cite{Procaccia_Velez_Yu_2018}, we adopt some of their notation for consistency and clarity.
For a given envy-free allocation $(\sigma, r)$, let $G_
E(\sigma, r)$ denotes its corresponding envy graph. Since $(\sigma, r)$ is envy-free, all edges in $G_E(\sigma, r)$ must be weak envy edges.

The structure of these envy graphs reveals several important properties. Notably, as established in \cite{AES14components}, the strongly connected components of the envy graph are invariant across all allocations in $EF(I)$:

\begin{lemma}[\cite{AES14components}]
\label{lem:conn_components}
The strongly connected components of $G_E(\sigma, r)$ remain unchanged for all $(\sigma, r) \in EF(I)$.
\end{lemma}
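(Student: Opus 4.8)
The plan is to show both inclusions: that the partition into strongly connected components (SCCs) induced by one envy-free allocation refines, and is refined by, that induced by any other, so the two partitions coincide. By Corollary~\ref{cor:equal_utility}, every agent receives the same utility in all allocations in $EF(I)$; call this common utility $u_i^*$ for agent $i$. The central observation I would use is that a directed edge $(\sigma(i),\sigma(j))$ being present in $G_E(\sigma,r)$ means $u_{i\sigma(i)} = u_{i\sigma(j)}$, i.e.\ $v_{i\sigma(i)} - r_{\sigma(i)} = v_{i\sigma(j)} - r_{\sigma(j)}$; combined with $u_i^* = v_{i\sigma(i)} - r_{\sigma(i)}$ this says $v_{i\sigma(j)} - r_{\sigma(j)} = u_i^*$, meaning agent $i$ is \emph{indifferent} between its own room and room $\sigma(j)$. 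So room-level edges encode indifferences of agents, and indifference is relative to the fixed utility profile $u^*$.

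The key step is a ``reachability transfer'' argument. Suppose rooms $p$ and $q$ lie in the same SCC of $G_E(\sigma,r)$, witnessed by a directed cycle through them. I want to show $p$ and $q$ are also in the same SCC of $G_E(\sigma',r')$ for any other $(\sigma',r') \in EF(I)$. Take any edge $(a,b)$ on that cycle in the first graph; it certifies that agent $\sigma^{-1}(a)$ is indifferent between rooms $a$ and $b$ under rent $r$, hence $v_{\sigma^{-1}(a),b} - r_b = u_{\sigma^{-1}(a)}^*$. Now in the allocation $(\sigma',r')$, let $i' = \sigma'^{-1}(a)$ be whoever holds room $a$; I need to produce a short directed path from $a$ to $b$ in $G_E(\sigma',r')$. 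The natural route: because $(\sigma',r')$ is envy-free, $i'$ gets utility $u_{i'}^*$ on room $a$, and considering where the \emph{original} holder of $a$ (agent $\sigma^{-1}(a)$) sits in $\sigma'$, chase indifferences along the permutation $\sigma'\circ\sigma^{-1}$ to relink the cycle. Concretely, I would argue that the edge set of $G_E$, viewed as "the set of (room, room) pairs $(j,k)$ such that the agent in $j$ is indifferent to room $k$" — which depends on the assignment — nonetheless has the property that its \emph{transitive-closure-symmetric-part} (the SCC partition) is assignment-independent, by showing each edge of one graph is "simulated" by a path in the other using the equal-utility corollary and the fact that both assignments are welfare-maximizing (Theorem~\ref{thm:welfare}).

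I would organize it as: (1) fix $u^*$ via Corollary~\ref{cor:equal_utility}; (2) reformulate SCC membership purely in terms of indifference relations among rooms relative to $u^*$ and the current assignment; (3) prove that if there is a directed path from room $p$ to room $q$ in $G_E(\sigma,r)$ then there is one in $G_E(\sigma',r')$ — the main lemma, done by induction on path length, reducing to the single-edge case and handling it by tracking how the two bijections differ; (4) apply (3) in both directions to get that "same SCC" is a symmetric, assignment-independent relation, concluding the partitions agree. The main obstacle is step (3), the single-edge simulation: an edge $(a,b)$ in $G_E(\sigma,r)$ is a statement about the agent $\sigma^{-1}(a)$, but in $G_E(\sigma',r')$ room $a$ may be held by a different agent, so I cannot directly copy the edge; I expect to need to follow a chain of rooms determined by the orbit structure of $\sigma' \circ \sigma^{-1}$ and repeatedly invoke that every agent's utility is pinned to $u^*$, which keeps all the relevant indifferences alive under the new rents. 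An alternative, possibly cleaner, route for step (3) is to cite the structural result of \cite{AES14components} more directly and instead supply a short self-contained argument via exchange along cycles: swapping two agents along a weak-envy cycle preserves envy-freeness and welfare-maximality, which lets me move between assignments within an SCC without leaving $EF(I)$, making the component structure manifestly invariant.
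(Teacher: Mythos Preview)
The paper does not prove this lemma; it is stated with a citation to \cite{AES14components} and used as a black box. So there is no in-paper argument to compare against, and your proposal has to be judged on its own.

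There is a genuine gap at the very first step. Corollary~\ref{cor:equal_utility} asserts equal utilities only for two envy-free allocations that share the \emph{same} rent vector~$r$; it does not say each agent's utility is constant across all of $EF(I)$. And indeed it is not: with two agents, two rooms, valuations $v_1=(10,0)$, $v_2=(0,10)$ and total rent $10$, both $r=(5,5)$ and $r=(3,7)$ are envy-free under the identity assignment, yet agent~$1$'s utility is $5$ in the first and $7$ in the second. The global profile $u^*$ you want to fix in step~(1) therefore does not exist, and the reformulation in step~(2) and the single-edge simulation in step~(3) both rest on it. Your ``alternative route'' via cycle swaps is likewise incomplete as stated: it shows you can move among certain assignments at a fixed rent $r$, but says nothing about matching SCCs across different rent vectors.

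A repair that keeps much of your structure is to decouple the two sources of variation. By Theorem~\ref{thm:welfare}, if $(\sigma,r),(\sigma',r')\in EF(I)$ then also $(\sigma,r')\in EF(I)$, so it suffices to handle (i) fixed assignment, varying rent, and (ii) fixed rent, varying assignment, separately. For (i) there is a clean telescoping argument: a cycle $a_1\to\cdots\to a_k\to a_1$ in $G_E(\sigma,r)$ records equalities $r_{a_{j+1}}-r_{a_j}=v_{i_j a_{j+1}}-v_{i_j a_j}$ with $i_j=\sigma^{-1}(a_j)$; envy-freeness of $(\sigma,r')$ gives the corresponding inequalities $r'_{a_{j+1}}-r'_{a_j}\ge v_{i_j a_{j+1}}-v_{i_j a_j}$, and summing around the cycle forces every one of them tight, so the same cycle sits in $G_E(\sigma,r')$. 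Step (ii) is where Corollary~\ref{cor:equal_utility} legitimately applies and where your orbit-chasing along $\sigma'\circ\sigma^{-1}$ belongs; treat it as its own lemma rather than folding it into a nonexistent global $u^*$.
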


This observation implies that the strongly connected components (SCCs) of the envy graph depend only on the instance $(\mathcal{A}, \mathcal{R}, V, R_t)$, and is invariant across all envy-free allocations. In fact, the total rent $R_t$ has no bearing on the SCC decomposition: envy-free rent vectors can be adjusted to accommodate different total rents without changing the underlying envy relations. Specifically, if $(\sigma, r) \in EF(\mathcal{A}, \mathcal{R}, V, R_t)$ is an envy-free allocation, then for any alternative total rent $R_t'$, we can construct a new rent vector $r'$ defined by $r'_j = r_j + \frac{R_t' - R_t}{|\mathcal{A}|}$ for all $j \in \mathcal{R}$ such that $(\sigma, r')$ is also envy-free. Since the envy graph remains unchanged, the set of SCCs is preserved.

We denote this canonical partition by $\mathcal{C}(\mathcal{A}, \mathcal{R}, V)$, which splits the room set $\mathcal{R}$ into SCCs shared across all envy-free allocations. As argued in \cite{Procaccia_Velez_Yu_2018}, instances in which the envy graph consists of a single SCC are particularly well-structured and exhibit the following properties.

\begin{lemma}[\cite{Procaccia_Velez_Yu_2018}]
\label{lem:single_SCC}
Let $I = (\mathcal{A}, \mathcal{R}, V, R_t)$ be an instance such that $\mathcal{C}(\mathcal{A}, \mathcal{R}, V) = \{\mathcal{R}\}$. Then, for all $R_t \in \mathbb{R}$:
\begin{enumerate}
    \item If $\{(\sigma, r), (\mu, r')\} \subseteq EF(\mathcal{A}, \mathcal{R}, V, R_t)$, then $r = r'$.
    \item Each agent has the same utility among all allocations in $EF(\mathcal{A}, \mathcal{R}, V, R_t)$.
    \item Let $(\sigma, r) \in EF(\mathcal{A}, \mathcal{R}, V, R_t)$. Then for any $R'_t \in \mathbb{R}$, and $(\mu, r') \in EF(\mathcal{A}, \mathcal{R}, V, R'_t)$, it holds that for all $a \in \mathcal{R}$, $r'_a = r_a + (R'_t - R_t)/|\mathcal{R}|$.
    \item The rent of each room in any allocation in $EF(\mathcal{A}, \mathcal{R}, V, R_t)$ is an increasing function of $R_t$.
\end{enumerate}
\end{lemma}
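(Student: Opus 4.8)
The plan is to prove part~1 first and derive parts 2--4 from it with short arguments. For part~1, suppose $(\sigma,r)$ and $(\mu,r')$ both lie in $EF(\mathcal{A},\mathcal{R},V,R_t)$. Applying \Cref{thm:welfare}: since $(\mu,r')$ is envy-free, $\mu$ is welfare-maximizing, and since $(\sigma,r)$ is envy-free and $\mu$ is welfare-maximizing, the ``mixed'' allocation $(\mu,r)$ is also envy-free. Thus $(\mu,r)\in EF(\mathcal{A},\mathcal{R},V,R_t)$, and by \Cref{lem:conn_components} together with the hypothesis $\mathcal{C}(\mathcal{A},\mathcal{R},V)=\{\mathcal{R}\}$, the envy graph $G_E(\mu,r)$ is strongly connected on all of $\mathcal{R}$.

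Next I would introduce the height vector $d_a := r'_a - r_a$ and show it is non-decreasing along every edge of $G_E(\mu,r)$. An edge $(a,b)$ with $a=\mu(i)$ encodes that agent $i$ is exactly indifferent between room $a$ and room $b$ under $r$ (envy-freeness forces the weak-envy edge to be tight), i.e.\ $v_{ib}-v_{ia}=r_b-r_a$; substituting this into the no-envy inequality for agent $i$ under $r'$, namely $v_{ia}-r'_a\ge v_{ib}-r'_b$, rearranges to $d_b\ge d_a$. Since $G_E(\mu,r)$ is strongly connected, any two rooms are joined by directed paths in both directions, so $d$ is constant; and $\sum_{a\in\mathcal{R}}d_a = R_t-R_t=0$ then forces $d\equiv 0$, i.e.\ $r=r'$. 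This height-function argument is the crux and the only place the single-SCC hypothesis is genuinely used; the points to be careful about are invoking the welfare theorems correctly so that $(\mu,r)$ lands in the same envy-free set, and keeping the directions of the indifference equality and the no-envy inequality straight.

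Part~2 is then immediate: for $(\sigma,r),(\mu,r')\in EF(I)$, part~1 gives $r=r'$, so \Cref{cor:equal_utility} applied to the two envy-free allocations $(\sigma,r)$ and $(\mu,r)$ shows every agent gets the same utility. For part~3, take $(\sigma,r)\in EF(\mathcal{A},\mathcal{R},V,R_t)$ and set $r''_a := r_a + (R'_t-R_t)/|\mathcal{R}|$; adding a common constant to all room rents shifts each agent's utility for every room by the same amount, so $(\sigma,r'')$ is still envy-free, and $\sum_a r''_a = R'_t$, hence $(\sigma,r'')\in EF(\mathcal{A},\mathcal{R},V,R'_t)$. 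Because the SCC decomposition $\mathcal{C}(\mathcal{A},\mathcal{R},V)$ does not depend on the total rent (as noted just before the lemma), part~1 applies at total rent $R'_t$ and gives $r''=r'$ for any $(\mu,r')\in EF(\mathcal{A},\mathcal{R},V,R'_t)$, which is exactly the claimed formula. Finally, part~4 follows by reading part~3: by parts~1 and~3 the rent of room $a$ is a well-defined function $r_a(\cdot)$ of the total rent with $r_a(R'_t)-r_a(R_t)=(R'_t-R_t)/|\mathcal{R}|$, which is strictly increasing in its argument since $|\mathcal{R}|>0$.
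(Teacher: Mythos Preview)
The paper does not supply its own proof of this lemma: it is quoted from \cite{Procaccia_Velez_Yu_2018} and used as a black box. So there is no in-paper argument to compare against.

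Your proof is correct. The height-function step is exactly right: in an envy-free allocation every edge of the envy graph is tight, so an edge $(a,b)=(\mu(i),\mu(j))$ in $G_E(\mu,r)$ gives $v_{ib}-v_{ia}=r_b-r_a$; substituting into the no-envy inequality for agent $i$ under $r'$ yields $d_b\ge d_a$, and strong connectivity plus $\sum_a d_a=0$ forces $d\equiv 0$. Your derivations of parts 2--4 from part~1 are also sound, and you correctly flag the one subtlety in part~3, namely that the single-SCC hypothesis survives the change of total rent (which the paper itself observes in the paragraph preceding the lemma).
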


In fact, as shown below, welfare-maximizing solutions also have nice properties corresponding to the structure of the strongly connected components.

\begin{lemma}[\cite{Procaccia_Velez_Yu_2018}]
\label{lem:matchings_and_conn_components}
Let $I = (\mathcal{A}, \mathcal{R}, V, R_t)$ be a rent division instance. Then, for any two welfare-maximizing assignments $\sigma$ and $\mu$, the allocation of rooms to agents within each component $C \in \mathcal{C}(\mathcal{A}, \mathcal{R}, V)$ remains the same, i.e., $\sigma(C) = \mu(C)$.
\end{lemma}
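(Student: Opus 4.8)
The plan is to pin down a single rent vector that is envy-free \emph{simultaneously} for $\sigma$ and $\mu$, and then extract the statement from the SCC-invariance of the envy graph. Since this section assumes no rent bounds, some envy-free allocation exists (e.g.\ by \cite{svensson1983large}); fix one, say $(\sigma_0, r)$. By \Cref{thm:welfare} (First Welfare Theorem) $\sigma_0$ is welfare-maximizing, and by the Second Welfare Theorem both $(\sigma, r)$ and $(\mu, r)$ are envy-free, because $\sigma$ and $\mu$ are welfare-maximizing and we are reusing the rent vector $r$.

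The key step is to show that for every agent $i$, the envy graph $G_E(\sigma, r)$ contains the edge from room $\sigma(i)$ to room $\mu(i)$. Envy-freeness of $(\sigma,r)$ gives $u_{i\sigma(i)} \ge u_{i\mu(i)}$, and envy-freeness of $(\mu,r)$ gives $u_{i\mu(i)} \ge u_{i\sigma(i)}$ (equivalently, one may just invoke \Cref{cor:equal_utility}), so $u_{i\sigma(i)} = u_{i\mu(i)}$. Since the agent occupying room $\sigma(i)$ under $\sigma$ is precisely $i$, this equality is exactly the (weak-envy) edge $(\sigma(i), \mu(i))$ in $G_E(\sigma, r)$ according to its definition.

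Now consider the permutation $\pi$ of $\mathcal{R}$ given by $\pi(\sigma(i)) = \mu(i)$ for all $i \in \mathcal{A}$ (that is, $\pi = \mu \circ \sigma^{-1}$ under the bijective reading of the assignments), and decompose it into cycles. On any cycle $j_1 \to j_2 \to \cdots \to j_k \to j_1$ each step $j_t \to j_{t+1} = \pi(j_t)$ is an edge of $G_E(\sigma, r)$ by the previous paragraph, so the cycle is a directed closed walk in $G_E(\sigma, r)$ and hence all of its rooms lie in one strongly connected component. By \Cref{lem:conn_components} these components are exactly the canonical partition $\mathcal{C}(\mathcal{A},\mathcal{R},V)$, so $\sigma(i)$ and $\mu(i)$ lie in the same component $C \in \mathcal{C}(\mathcal{A},\mathcal{R},V)$ for every $i$. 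Therefore $\sigma(i) \in C \iff \mu(i) \in C$ for every component $C$ and every agent $i$, which is precisely the claim $\sigma(C) = \mu(C)$.

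I expect the only delicate point to be the bookkeeping in the middle step: correctly matching the abstract edge ``$(\sigma(i),\sigma(j)) \in E$ iff $u_{i\sigma(i)} \le u_{i\sigma(j)}$'' from the definition of $G_E$ to the concrete arc from $\sigma(i)$ to $\mu(i)$, and being careful that we are entitled to use the \emph{same} rent vector $r$ for both assignments — this is exactly where the Second Welfare Theorem does the work. Once one observes that $\pi$ is supported on a union of directed cycles of the envy graph, the rest is immediate.
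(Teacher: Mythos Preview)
The paper does not actually prove this lemma: it is quoted verbatim as a result of \cite{Procaccia_Velez_Yu_2018} and no argument is supplied. So there is nothing in the paper to compare your proposal against.

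That said, your argument is correct and is essentially the natural proof. The only steps worth double-checking are (i) that the Second Welfare Theorem indeed lets you reuse the single rent vector $r$ for both $\sigma$ and $\mu$, which it does, and (ii) that each arc $\sigma(i)\to\mu(i)$ really is an edge of $G_E(\sigma,r)$ under the paper's definition, which follows from $u_{i\sigma(i)}=u_{i\mu(i)}$ as you note. The cycle decomposition of $\pi=\mu\circ\sigma^{-1}$ then places $\sigma(i)$ and $\mu(i)$ in the same SCC, and \Cref{lem:conn_components} identifies those SCCs with the canonical partition $\mathcal C(\mathcal A,\mathcal R,V)$. Under the paper's notational abuse ($\sigma(C)$ meaning the agents assigned to rooms in $C$), your final equivalence $\sigma(i)\in C\iff\mu(i)\in C$ is exactly $\sigma(C)=\mu(C)$.
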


Consequently, in every welfare-optimal assignment, the rooms in any strongly connected component $C$ are allocated exclusively among a fixed subset of agents. This property allows us to decompose the global problem into smaller subproblems—one for each strongly connected component—and solve them independently. The resulting component-wise solutions can then be combined to form a solution for the original instance.

Consider a strongly connected component $C \in \mathcal{C}(\mathcal{A}, \mathcal{R}, V)$. Let $\mathcal{A}(C)$ be the set of agents who are assigned rooms in $C$ under some envy-free allocation. Define the restricted valuation and budget functions as $v_C = (v_{ij})_{i \in \mathcal{A}(C), j \in C}$ and $b_C = (b_{ij})_{i \in \mathcal{A}(C), j \in C}$. Let $\delta_C$ denote the maximum total rent with individual rationality for which an envy-free allocation respecting the budgets exists in this subproblem:
$$
EF_{b_C}(\mathcal{A}(C), C, v_C, \delta_C) \neq \emptyset. 
$$
Continuity of utility functions guarantees the existence of such a threshold $\delta_C$.

\begin{lemma}
\label{lem:join_matchings}
Let $I = (\mathcal{A}, \mathcal{R}, V, R_t)$ be a rent division instance with agent-room-specific budgets $b = (b_{ij})_{i,j}$. Suppose that for each strongly connected component $C \in \mathcal{C}(\mathcal{A}, \mathcal{R}, V)$, there exists a valid envy-free allocation $(\sigma_C, r_C) \in EF_{b_C}(\mathcal{A}(C), C, v_C, \delta_C)$. Construct a global assignment $\sigma$ by setting $\sigma(i) = \sigma_C(i)$ for each agent $i \in \mathcal{A}(C)$. Then, for any envy-free allocation $(\sigma', r') \in EF_b(I)$, it holds that $(\sigma, r') \in EF_b(I)$ as well.
\end{lemma}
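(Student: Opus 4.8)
The plan is to exploit the component structure established in Lemmas \ref{lem:conn_components} and \ref{lem:matchings_and_conn_components}, together with the utility-invariance facts from Corollary \ref{cor:equal_utility} and Lemma \ref{lem:single_SCC}. First I would argue that the glued assignment $\sigma$ is itself welfare-maximizing: by construction $\sigma$ agrees on each component $C$ with the local welfare-maximizing assignment $\sigma_C$ (each $(\sigma_C, r_C)$ is envy-free on the subinstance, hence welfare-maximizing there by Theorem \ref{thm:welfare}/\ref{thm:welfare_bounds}), and since $\sigma(C) = \sigma'(C)$ for the given global envy-free $\sigma'$ by Lemma \ref{lem:matchings_and_conn_components}, $\sigma$ maps exactly the agent set $\mathcal{A}(C)$ onto $C$ for every component. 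Because $\sigma$ and $\sigma'$ restricted to each $C$ are both welfare-maximizing on the subinstance $(\mathcal{A}(C), C, v_C)$, summing over components shows $\sum_i v_{i\sigma(i)} = \sum_i v_{i\sigma'(i)}$, so $\sigma$ is globally welfare-maximizing.

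Next, apply the Second Welfare Theorem (Theorem \ref{thm:welfare}, or \ref{thm:welfare_bounds}): since $(\sigma', r') \in EF(I)$ and $\sigma$ is another welfare-maximizing assignment, $(\sigma, r') \in EF(I)$ — envy-freeness of the glued allocation under the \emph{same} rent vector $r'$ is immediate. It then remains only to check the budget constraints, i.e. that $r'_{\sigma(i)} \le b_{i\sigma(i)}$ for every agent $i$. Here I would use Corollary \ref{cor:equal_utility}: since $(\sigma, r')$ and $(\sigma', r')$ are both envy-free under the same rents, each agent $i$ receives the same utility $u_{i\sigma(i)}(r') = u_{i\sigma'(i)}(r')$. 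Within a single component $C$, Lemma \ref{lem:single_SCC}(1)–(3) pins down the rent vector on $C$ up to a uniform additive shift determined by the total rent allocated to $C$; so the rents $r'|_C$ that the global envy-free $(\sigma', r')$ induces on $C$ differ from the local rents $r_C$ of $(\sigma_C, r_C)$ only by a constant, and in fact — comparing with the local optimum $\delta_C$ — the total rent on $C$ under $r'$ is at most $\delta_C$ (otherwise no budget-respecting envy-free allocation on $C$ would exist at that total, contradicting that $(\sigma', r')$ respects budgets on $C$). Monotonicity of room rents in the component total (Lemma \ref{lem:single_SCC}(4)) then gives $r'_a \le (r_C)_a$ for every room $a \in C$, and since $(\sigma_C, r_C)$ respects the budgets, $r'_{\sigma(i)} = r'_{\sigma_C(i)} \le (r_C)_{\sigma_C(i)} \le b_{i\sigma_C(i)} = b_{i\sigma(i)}$ for every $i \in \mathcal{A}(C)$. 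Doing this for each component establishes $(\sigma, r') \in EF_b(I)$.

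The main obstacle I anticipate is the budget verification step — specifically making rigorous the claim that the per-component total rent under the \emph{globally} envy-free $(\sigma', r')$ is at most $\delta_C$, and cleanly relating $r'|_C$ to the local solution $r_C$ via Lemma \ref{lem:single_SCC}. One has to be careful that the subinstance $(\mathcal{A}(C), C, v_C)$ really does form a single SCC (so that Lemma \ref{lem:single_SCC} applies) — this should follow because $C$ is an SCC of the envy graph of the global instance and envy edges within $C$ only depend on the restricted valuations — and that "individual rationality" in the definition of $\delta_C$ is compatible with whatever rents $r'$ assigns; I would state these compatibility points as short observations before invoking Lemma \ref{lem:single_SCC}. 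The envy-freeness half of the argument, by contrast, is essentially a one-line appeal to the Second Welfare Theorem once $\sigma$ is shown to be welfare-maximizing.
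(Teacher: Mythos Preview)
Your proposal is correct and follows essentially the same approach as the paper: use the Second Welfare Theorem to get $(\sigma, r') \in EF(I)$ once $\sigma$ is seen to be welfare-maximizing, then verify budgets component-by-component by invoking \Cref{lem:single_SCC} to compare $r'|_C$ with $r_C$ via the uniform shift and the inequality $\delta'_C \le \delta_C$. You are more explicit than the paper about why $\sigma$ is welfare-maximizing and why the subinstance on $C$ is a single SCC, which is fine; the paper simply asserts these and proceeds directly to $r'_C = r_C - (\delta_C - \delta'_C)\mathbf{1} \preceq r_C$.
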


\begin{proof}
Assume that $(\sigma', r') \in EF_b(I)$. We aim to show that $(\sigma, r') \in EF_b(I)$.

First, observe that $\sigma$ is a welfare-maximizing matching, and $(\sigma', r') \in EF(I)$. By \cref{thm:welfare}, it follows that $(\sigma, r') \in EF(I)$ as well. It remains to verify that $(\sigma, r')$ satisfies the budget constraints.

We now prove this property for each strongly connected component (SCC) $C \in \mathcal{C}(\mathcal{A}, \mathcal{R}, V)$ separately. Let $\mathcal{A}(C)$ denote the set of agents assigned to rooms in $C$ under $\sigma$, and define:
$$
\delta'_C := \sum_{i \in \mathcal{A}(C)} r'_{\sigma'(i)}.
$$
By \cref{lem:single_SCC}, the unique rent vector $r_C$ corresponding to $\sigma_C$ (the restriction of $\sigma$ to $C$) satisfies:
\begin{math}
r_C = r'_C + (\delta_C - \delta'_C)\mathbf{1},
\end{math}
where, $\delta_C$ denotes the maximum total rent for component $C$ such that a solution exists in $EF_{b_C}(\mathcal{A}(C), C, v_C, \delta_C)$. Since $\delta'_C \leq \delta_C$, it follows that
$$
r'_C = r_C - (\delta_C - \delta'_C)\mathbf{1} \preceq r_C.
$$
Because $(\sigma_C, r_C)$ satisfies the upper budget bounds by construction, and $r'_C$ is obtained by uniformly decreasing each coordinate of $r_C$, the upper bounds remain satisfied. Therefore, $(\sigma_C, r'_C)$ also satisfies the budget constraints. Hence, $(\sigma, r') \in EF_b(I)$.
\end{proof}

\textbf{Budget–aware envy graph.} For a profile $(\sigma,r)$ and upper–bound vector $b=(b_{ij})_{(i,j)\in\mathcal A\times\mathcal R}$, define
$$
  G_b(\sigma,r) \;=\; (\mathcal R,E), 
  \quad
  E \;=\;
  \bigl\{\,
      (\sigma(i),\sigma(j))
      \;\bigm|\;
      u_{i\sigma(i)} \le u_{i\sigma(j)} 
      \text{ and } 
      r_{\sigma(j)} < b_{i\sigma(j)}
  \bigr\}.
$$

Note that, an edge from room $i$ to room $j$ implies that the agent in room $i$, can move to room $j$ with neither decreasing their utility nor paying over their budget - representing a feasible change of rooms.  

% ------------------------------------------------------------------
% Algorithm 1
\begin{algorithm}[ht]
  \begin{wbox}
  \textsc{SCC-EF-Rents}$(\mathcal{A}, \mathcal{R}, V, b)$\\
    % \textbf{Maximum–rent EF allocation in a strongly–connected component}\\[0.4em]
    \textbf{Input:} A rent–division instance
      $(\mathcal A,\mathcal R,V,b)$ with upper bounds
      $b=(b_{ij})_{i\in\mathcal A,j\in\mathcal R}$.\\
    \textbf{Output:} The maximum feasible total rent $R_t$ and an
      envy-free allocation $(\sigma,r)$ that attains it.
    \begin{enumerate}
      \item \textbf{Initial EF solution:}\;
            Compute $(\sigma,r)\in EF(\mathcal A,\mathcal R,V,R_t)$ for
            some $R_t\in\mathbb R$.
      \item Choose $\Delta\in\mathbb R$ such that
            \[
              \bigl(\sigma,(r_a-\Delta)_{a\in\mathcal R}\bigr)\in
              EF_b(\mathcal A,\mathcal R,V,R_t-n\Delta)
              \quad\text{and}\quad
              \exists\,i\in\mathcal A:\;
               r_{\sigma(i)} = b_{i\sigma(i)}.
            \]
      \item $r \leftarrow (r_{\sigma(i)}-\Delta)_{i\in\mathcal A}$.
      \item $R_t \leftarrow R_t-n\Delta$.
      \item \textbf{while} there is \emph{no} $i\in\mathcal A$ with
            $r_{\sigma(i)} = b_{i\sigma(i)}$
            that is \emph{not} on a cycle of $G_b(\sigma,r)$ \textbf{do}
            \begin{enumerate}
              \item \textbf{if} There is no agent $i$ such that $r_i = b$ 
                    \textbf{then}\hfill\emph{(Case 1)}
                    \begin{enumerate}
                      \item $\displaystyle
                              \Delta \leftarrow
                              \min_{i\in\mathcal A}
                              \bigl(b_{i\sigma(i)}-r_{\sigma(i)}\bigr)$
                      \item $r \leftarrow (r_a+\Delta)_{a\in\mathcal R}$
                      \item $R_t \leftarrow R_t+n\Delta$
                    \end{enumerate}
              \item \textbf{else}\hfill\emph{(Case 2)}
                    \begin{enumerate}
                      \item Find $i\in\mathcal A$ with
                            $r_{\sigma(i)} = b_{i\sigma(i)}$
                            that lies on a cycle
                            $C$ of $G_b(\sigma,r)$.
                      \item $\sigma \leftarrow$ reshuffle of $\sigma$ along $C$
                    \end{enumerate}
            \end{enumerate}
      \item \textbf{end while}
      \item \textbf{Return} $R_t$ and $(\sigma,r)$.
    \end{enumerate}
  \end{wbox}
  \caption{Maximum-rent envy-free allocation inside one strongly-connected
           component; $G_b(\sigma,r)$ is the budget-aware envy graph.}
  \label{alg:max_rent_EF_budgets}
\end{algorithm}

% ------------------------------------------------------------------
% Lemma: polynomial-time and optimality
\begin{lemma}\label{lem:best_matching}
Let $(\mathcal A,\mathcal R,V)$ be such that the entire instance forms
a single strongly-connected component,
$\mathcal C(\mathcal A,\mathcal R,V)=\{\mathcal R\}$, and let
$b=(b_{ij})_{i,j}$ be the upper-bound matrix.
Then \Cref{alg:max_rent_EF_budgets} runs in polynomial time.  If
$(\sigma,r)$ is its output with total rent $R_t$, then
$(\sigma,r)\in EF_b(\mathcal A,\mathcal R,V,R_t)$ and the value
$\delta_{\mathcal A}=R_
t$ is the maximum: for every $R'_t>R_t$,
$EF_b(\mathcal A,\mathcal R,V,R'_t)=\emptyset$.
\end{lemma}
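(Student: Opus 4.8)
I would prove \Cref{lem:best_matching} in three parts: (i) correctness of the output, i.e. $(\sigma,r)\in EF_b$; (ii) maximality of $R_t$; and (iii) polynomial running time. The conceptual heart is the maximality argument, which rests on interpreting a \emph{cycle-free budget-tight agent} as a certificate of optimality.

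\textbf{Setup and correctness.} First I would record the loop invariant: at the top of every iteration, $(\sigma,r)\in EF_b(\mathcal A,\mathcal R,V,R_t)$. This holds after Steps~1--4 by the choice of $\Delta$, and is preserved by each case. In Case~1 the rents of all rooms go up uniformly by $\Delta=\min_i(b_{i\sigma(i)}-r_{\sigma(i)})$; envy-freeness is preserved because a uniform shift changes all utilities $u_{i\sigma(i)}=v_{i\sigma(i)}-r_{\sigma(i)}$ by the same $-\Delta$ (so \Cref{thm:welfare}/\Cref{cor:equal_utility}-type reasoning applies), and by the definition of $\Delta$ no room overshoots its tightest budget, so budgets still hold. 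In Case~2 we reshuffle $\sigma$ along a cycle $C$ of $G_b(\sigma,r)$ while keeping $r$ fixed: every edge of $G_b$ is in particular a weak-envy edge, so each agent on the cycle is indifferent between its old and new room, hence the reshuffled $\sigma$ is still welfare-maximizing and $(\sigma,r)$ stays envy-free by \Cref{thm:welfare}; and each edge $(\sigma(i),\sigma(j))$ of $G_b$ satisfies $r_{\sigma(j)}<b_{i\sigma(j)}$, so after the reshuffle every agent pays strictly below budget on its new room — in particular the agent that was budget-tight is no longer tight. This shows the output, when the loop terminates, lies in $EF_b(\mathcal A,\mathcal R,V,R_t)$.

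\textbf{Maximality.} Suppose the loop terminates because some agent $i^\star$ is budget-tight, $r_{\sigma(i^\star)}=b_{i^\star\sigma(i^\star)}$, and $\sigma(i^\star)$ lies on no cycle of $G_b(\sigma,r)$. I claim $R_t$ is then maximal. Suppose for contradiction $(\mu,r')\in EF_b(\mathcal A,\mathcal R,V,R'_t)$ with $R'_t>R_t$. Since the whole instance is one SCC, \Cref{lem:single_SCC}(3) gives $r'_a=r_a+(R'_t-R_t)/n$ for all rooms $a$ — so \emph{every} room's rent strictly increases, in particular $r'_{\sigma(i^\star)}>b_{i^\star\sigma(i^\star)}$. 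Thus in $(\mu,r')$ agent $i^\star$ must occupy a \emph{different} room $\mu(i^\star)\neq\sigma(i^\star)$ with $r'_{\mu(i^\star)}\le b_{i^\star\mu(i^\star)}$. Now I run the standard exchange/shifting argument on the envy graph: by \Cref{cor:equal_utility} all EF allocations at a fixed $R_t$ give each agent the same utility, and one shows (tracking utilities room by room, as in Klijn's analysis) that the existence of such a $\mu$ forces a directed path in $G_b(\sigma,r)$ from $\sigma(i^\star)$ back to $\sigma(i^\star)$ — i.e. a cycle through $\sigma(i^\star)$ — contradicting cycle-freeness of that vertex. Concretely: consider the permutation $\sigma^{-1}\mu$ restricted to the agents; it decomposes into cycles, and the cycle containing $i^\star$ yields a sequence of agents each of whom (in $(\sigma,r)$) weakly envies the next and can afford the next room (the budget slack coming from $r'\succ r$ together with $\mu$'s feasibility), which is exactly a cycle in $G_b(\sigma,r)$ through $\sigma(i^\star)$.

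\textbf{Termination and running time.} For polynomiality I would argue that Case~2 cannot repeat indefinitely without an intervening Case~1 or termination: each Case~2 reshuffle strictly decreases the number of budget-tight agents (the cycle it uses passes through a tight agent who becomes slack, and by the edge condition $r_{\sigma(j)}<b_{i\sigma(j)}$ no new tightness is created on the cycle), so after at most $n$ consecutive Case~2 steps we reach a state with no tight agent, triggering Case~1. Each Case~1 step creates at least one new budget-tight agent and raises $R_t$; I would show a tight agent, once created, either stays tight or is "used up" by a Case~2 reshuffle that removes it, and that the combinatorial configuration (which agents are tight, the SCC/cycle structure of $G_b$) cannot recur, bounding the total number of iterations by a polynomial in $n$. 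Each iteration needs only a welfare-maximizing matching / reachability computation on an $n$-vertex graph, and the $\Delta$'s are computed by taking minima, all in strongly polynomial time; Step~1 uses Klijn's algorithm \cite{klijn2000}.

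\textbf{Main obstacle.} The delicate point is the maximality argument — specifically, turning "$i^\star$ is budget-tight and off every cycle of $G_b$" into a genuine infeasibility certificate for all $R'_t>R_t$. One has to be careful that the path extracted from the hypothetical better allocation $(\mu,r')$ really is a path in $G_b(\sigma,r)$ (both the weak-envy condition \emph{and} the strict budget-slack condition), and that it closes into a cycle through exactly the vertex $\sigma(i^\star)$; the uniform-shift structure from \Cref{lem:single_SCC}(3) is what makes the budget-slack direction work, and I expect the write-up to lean heavily on it. A secondary subtlety is ruling out infinite looping between Cases~1 and~2, for which the "number of tight agents" potential is suggestive but needs to be combined with the monotonic growth of $R_t$ to be airtight.
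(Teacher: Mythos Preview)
Your approach is essentially the paper's: the maximality argument (uniform rent shift via \Cref{lem:single_SCC}(3), then extract a cycle in $G_b(\sigma,r)$ through $\sigma(i^\star)$ from the permutation $\nu=\sigma^{-1}\circ\mu$) and the bound of at most $n$ consecutive Case~2 steps both match exactly. Your explicit loop-invariant verification of $(\sigma,r)\in EF_b$ is a welcome addition the paper leaves implicit.

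The one place where the paper is sharper is the Case~1 count, which resolves precisely the ``secondary subtlety'' you flag at the end. Rather than arguing that configurations cannot recur, the paper observes that rents are monotone nondecreasing throughout the run (Case~1 raises all rents, Case~2 leaves them fixed), and that each Case~1 step makes some agent--room pair $(i,\sigma(i))$ tight, i.e.\ sets $r_{\sigma(i)}=b_{i\sigma(i)}$. By monotonicity that equality can never recur for the same pair, so Case~1 executes at most $n^2$ times in total. Combined with the $n$-step bound on consecutive Case~2 runs, this gives a clean polynomial bound without any appeal to non-recurrence of ``combinatorial configurations''.
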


\begin{proof}
First, we prove that the algorithm terminates in polynomial time. The initial envy-free (EF) solution can be computed in polynomial time using the algorithm of \cite{klijn2000}. The main loop in line~5 has two cases. Case 1 can occur at most $n^2$ times, since in each iteration, the rent of a room is increased to match the budget of an agent for that room. Because rents are monotonically increasing, this adjustment cannot occur again for the same agent-room pair. As there are at most $n^2$ such pairs, Case 1 is executed at most $n^2$ times. Case 2 can occur at most $n$ times consecutively after which, either Case 1 must be triggered or the algorithm must terminate. Therefore, the total number of iterations is bounded by a polynomial in $n$.\\
Next, we claim that for each $R'_t > R_t$, $\mathsf{EF}_b(\mathcal{A}, \mathcal{R}, V, R'_t) = \emptyset$. Suppose for the sake of contradiction that there exists $R'_t > R_t$ and $(\mu, r') \in \mathsf{EF}_b(\mathcal{A}, \mathcal{R}, V, R'_t)$. Let $\Delta \coloneqq (R'
_t- R_t)/n$. By \cref{lem:single_SCC}, for each $a \in \mathcal{R}$, we have $r'_a = r_a + \Delta$. Recall that $(\mu, r) \in \mathsf{EF}_b(\mathcal{A}, \mathcal{R}, V, R_t)$. Due to \cref{cor:equal_utility}, we have that for all $i \in \mathcal{A}$,
$$
v_{i\sigma(i)} - r_{\sigma(i)} = v_{i\mu(i)} - r_{\mu(i)}.
$$
Along with
$$
r_{\mu(i)} = r'_{\mu(i)} - \Delta < b_i,
$$
it follows that $(\sigma(i), (\mu(i)))$ must be an edge of $G_b(\sigma, r)$ for any $i$. Let $\nu = \sigma^{-1} \circ \mu$, then for all $i \in \mathcal{A}$, we have
$$
\sigma(i) \rightarrow \nu(\sigma(i)) \rightarrow \nu^2(\sigma(i)) \rightarrow \cdots \rightarrow \sigma(i)
$$
is a cycle in $G_b(\sigma, r)$. Also, notice that the while loop only terminates when there is some $i \in \mathcal{A}$ such that this condition fails, contradicting the assumption that $r_{\sigma(i)} = b_i$. Therefore, agent $i$ satisfies the conditions of Case 2, and the algorithm should not have terminated.
\end{proof}

% ------------------------------------------------------------------
% Placeholder for multi-component algorithm and theorem
\medskip
\noindent\textbf{Combining components.}\;
Apply \Cref{alg:max_rent_EF_budgets} independently to every SCC
of $\mathcal C(\mathcal A,\mathcal R,V)$ and merge the results as
described below.\\
\begin{algorithm}[ht]
  \begin{wbox}
    \textsc{BudgetAware-EF}$(\mathcal{A}, \mathcal{R}, V, R_t, b)$ \\[0.5em]
    \textbf{Input:} A rent division instance $(\mathcal{A}, \mathcal{R}, V, R_t, b)$ with target total rent $R_t$ and agent-specific budgets $b = (b_{ij})$. \\
    \textbf{Output:} An envy-free allocation $(\mu, r)$ satisfying the budgets and summing to $R_t$, or ``No solution''.
    \begin{enumerate}
        \item Compute an initial EF allocation $(\sigma, r) \in EF(\mathcal{A}, \mathcal{R}, V, R')$ for some $R'$.
        \item Let $\mathcal{C} \gets$ strongly connected components of $G_E(\sigma, r)$.
        \item \textbf{For each} component $C \in \mathcal{C}$:
        \begin{enumerate}
            % \item $(\mu_C, p_C) \gets$ output of \cref{alg:max_rent_EF_budgets} on $C$.
            \item $(\mu_C, r_C) \gets$ \textsc{SCC-EF-Rents}$(\mathcal{A_C}, \mathcal{R_C}, V, b)$\hfill\Cref{alg:max_rent_EF_budgets}
        \end{enumerate}
        \item Combine $\mu_C$ into a global assignment $\mu$
        \item  Compute rent division $r$ such that $(\mu,r) \in EF(\mathcal{A},\mathcal{R},V,R_t)$.
        % \item $p$ $\gets$ output of \cref{alg:bounds_on_rents} on $(\mathcal{A,R},V,R_t,-\infty,b_{i\mu(i)})$ with matching $\mu$
        \item $r$ $\gets$ \textsc{BoundsAware-EF-Rents}$(\mu,r, \mathcal{A}, \mathcal{R}, V, R_t, -\infty, b_{i\mu(i)})$\hfill\Cref{alg:bounds_on_rents}
        \item \textbf{Return} $(\mu, r)$.
    \end{enumerate}
  \end{wbox}
  \caption{Computing EF allocation under room-specific budgets for agents.}
  \label{alg:global_budget}
\end{algorithm}

\begin{theorem}\label{thm:budget_algo}
\Cref{alg:global_budget} runs in polynomial time and returns an allocation
in $EF_b(\mathcal A,\mathcal R,V,R_t)$ whenever such an allocation exists;
otherwise it outputs ``no solution''.
\end{theorem}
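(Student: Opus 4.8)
The plan is to prove the two assertions — polynomial running time and correctness of the output, including the ``no solution'' verdict — by combining the structural lemmas of this section, the decisive one being \Cref{lem:join_matchings}, which certifies that the assignment $\mu$ produced by the algorithm is the ``right'' one.

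\textbf{Running time.} I would walk through the seven steps of \Cref{alg:global_budget}. Step~1 is strongly polynomial by Klijn's algorithm~\cite{klijn2000}, and Step~2 is a linear-time strongly-connected-components computation. For Step~3 I would first note that for each $C\in\mathcal{C}$ the subinstance $(\mathcal{A}(C),C,v_C)$ is itself a single strongly connected component — the standard locality of envy graphs, essentially \Cref{lem:conn_components} applied to the restriction — so \Cref{lem:best_matching} applies and \textsc{SCC-EF-Rents} runs in polynomial time on each of the at most $n$ components. Steps~4 and~5 are immediate; Step~5 is feasible for every target $R_t$ because $\mu$ is welfare-maximizing (shown next), so by the Second Welfare Theorem an envy-free rent for $\mu$ exists and can be shifted uniformly to total rent $R_t$. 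Step~6 invokes \Cref{alg:bounds_on_rents}, which is strongly polynomial by the remark in \Cref{sec:min_spread}. Hence the whole procedure is polynomial.

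\textbf{$\mu$ is welfare-maximizing, and correctness.} Let $(\sigma,r)$ be the initial envy-free allocation from Step~1, so $\sigma$ is welfare-maximizing by \Cref{thm:welfare}, and by \Cref{lem:conn_components} the set $\mathcal{C}$ of Step~2 equals $\mathcal{C}(\mathcal{A},\mathcal{R},V)$. By \Cref{lem:matchings_and_conn_components} every welfare-maximizing assignment sends the agent set $\mathcal{A}(C)$ onto $C$ for each $C\in\mathcal{C}$; since $\sigma$ is globally welfare-maximizing, its restriction to $\mathcal{A}(C)$ is therefore welfare-maximizing within $C$, and so is the output $\mu_C$ of \textsc{SCC-EF-Rents}, because $(\mu_C,r_C)$ is envy-free in the subinstance (apply \Cref{thm:welfare} there). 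Gluing the $\mu_C$ over the partition $\{C\}$ of $\mathcal{R}$ thus gives $\mu$ with $\sum_i v_{i\mu(i)}=\sum_C\sum_{i\in\mathcal{A}(C)}v_{i\mu_{C}(i)}=\sum_C\sum_{i\in\mathcal{A}(C)}v_{i\sigma(i)}=\sum_i v_{i\sigma(i)}$, so $\mu$ is welfare-maximizing. Now put $l_j:=-\infty$ and $u_j:=b_{\mu^{-1}(j),\,j}$ for each room $j$. By \Cref{thm:bounds_on_rents}, Step~6 returns an envy-free rent $r$ with $\sum_j r_j=R_t$ and $l_j\le r_j\le u_j$ whenever the assignment $\mu$ admits such a rent, and ``no solution'' otherwise; and such a rent is exactly an allocation $(\mu,r)\in EF_b(I)$. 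Hence it suffices to show that $EF_b(I)\neq\emptyset$ implies $(\mu,r')\in EF_b(I)$ for some $r'$. Suppose $(\sigma',r')\in EF_b(I)$. By \Cref{lem:best_matching}, for every $C$ the pair $(\mu_C,r_C)$ computed in Step~3 lies in $EF_{b_C}(\mathcal{A}(C),C,v_C,\delta_C)$ with $\delta_C$ the maximum feasible total rent for $C$; so the hypotheses of \Cref{lem:join_matchings} hold for the glued assignment $\mu$, and that lemma yields $(\mu,r')\in EF_b(I)$. Therefore Step~6 finds an envy-free within-budget rent for $\mu$ of total $R_t$ and returns a valid element of $EF_b(I)$, while if $EF_b(I)=\emptyset$ no such rent exists and Step~6 correctly reports ``no solution''.

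\textbf{Main obstacle.} I expect the quantitative work to be already contained in the earlier lemmas, so the one genuinely non-obvious point in assembling them is that $\mu$, built greedily by pushing each component to its own maximal feasible total rent — independently of the other components and of the global target $R_t$ — is nevertheless a correct assignment for the global instance at total rent $R_t$. This is precisely what \Cref{lem:join_matchings} delivers: it transfers budget feasibility from an arbitrary witness in $EF_b(I)$ to $(\mu,r')$, and its hypothesis is furnished by the maximality clause of \Cref{lem:best_matching}. The remaining care is to confirm that each SCC subinstance is itself a single SCC (so that \Cref{lem:best_matching} is applicable) and that \Cref{alg:bounds_on_rents} preserves the total rent, so that Step~6 decides feasibility for $\mu$ at total rent exactly $R_t$ rather than at some other value.
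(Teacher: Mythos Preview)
Your proof is correct and follows essentially the same approach as the paper: both rely on \Cref{lem:join_matchings} (with \Cref{lem:best_matching} supplying its hypothesis) to certify that $\mu$ is the ``right'' assignment, and then invoke \Cref{thm:bounds_on_rents} to decide feasibility of the rent vector. Your write-up is in fact more careful than the paper's, explicitly verifying that $\mu$ is welfare-maximizing and that each restricted subinstance is a single SCC---points the paper leaves implicit.
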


\begin{proof}
The algorithm processes each strongly connected component (SCC) by first running \cref{alg:max_rent_EF_budgets}, followed by a single execution of \cref{alg:bounds_on_rents}. Since both subroutines run in polynomial time, the overall algorithm is polynomial-time. The final output is the result of \cref{alg:bounds_on_rents}, and is therefore envy-free; Moreover, by construction of the instance, the solution respects the budget constraints: for each agent $i$ assigned to room $\mu(i)$, we impose the upper bound $b_{i\mu(i)}$, ensuring that the resulting rent respects $i$'s budget. By \cref{lem:join_matchings} and \cref{thm:welfare_bounds}, if a feasible solution exists, it must exist for the assignment $\mu$. Thus, if the algorithm returns ``no solution'' in line 5, no feasible EF allocation exists for any other assignment.
\end{proof}

\begin{corollary}
    By substituting \cref{alg:bounds_on_rents} subroutine in line 5 of \cref{alg:global_budget} with \cref{alg:maximin}, \cref{alg:leximin}, or \cref{alg:min_spread}, the algorithm computes the maximin, leximin, or minimum-spread envy-free allocations, respectively, under room-specific agent budget constraints.
\end{corollary}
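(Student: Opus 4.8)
The plan is to reduce the claim to the correctness theorems already proved for the rent‑bounds algorithms (Theorems~\ref{thm:maximin}, \ref{thm:leximin}, \ref{thm:min_absolute_spread}, \ref{thm:min_relative_spread}), by observing that once the assignment $\mu$ produced by \Cref{alg:global_budget} is fixed, the room‑specific budget constraints are literally a collection of upper bounds on room rents. Concretely, after $\mu$ is computed, set $u_{\mu(i)} := b_{i\mu(i)}$ for each agent $i$ and $l_j := -\infty$ for every room $j$; then a rent vector $r$ makes $(\mu,r)$ a budget‑feasible envy‑free allocation if and only if $(\mu,r)$ is envy‑free and $l_j \le r_j \le u_j$ for all $j$. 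Hence the \Cref{alg:bounds_on_rents} subroutine call inside \Cref{alg:global_budget} is already an instance of the bounded‑rent problem with assignment $\mu$, and replacing \textsc{BoundsAware-EF-Rents} by \textsc{Maximin-EF-Rents}, \textsc{Leximin-EF-Rents}, or \textsc{MinSpread-EF-Rents} — run on this same bounded instance, started from a feasible envy‑free rent that \Cref{alg:bounds_on_rents} supplies (or, if none exists, after \Cref{alg:bounds_on_rents} has already reported ``no solution'') — returns, by the cited theorems, the envy‑free rent with assignment $\mu$ respecting the bounds that is maximin, leximin, or minimum‑spread, respectively.

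The one substantive step is to argue that optimizing over envy‑free rents with the single fixed assignment $\mu$ loses nothing relative to optimizing over all of $EF_b(I)$. For this I would combine \Cref{lem:join_matchings} with \Cref{cor:equal_utility}: given any $(\sigma',r')\in EF_b(I)$, \Cref{lem:join_matchings} gives $(\mu,r')\in EF_b(I)$, and since $\sigma'$ and $\mu$ are both welfare‑maximizing while $(\sigma',r')$ and $(\mu,r')$ are both envy‑free under the same rent vector $r'$, \Cref{cor:equal_utility} yields $v_{i\sigma'(i)}-r'_{\sigma'(i)} = v_{i\mu(i)}-r'_{\mu(i)}$ for every agent $i$. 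Thus every utility profile attainable in $EF_b(I)$ is attained by some envy‑free rent with assignment $\mu$ that respects the bounds, and the reverse inclusion is immediate from the definition of $EF_b$. Since each of the four fairness objectives is a function of the vector (resp.\ multiset) of agent utilities only, the optimum over $EF_b(I)$ equals the optimum over the bounded‑rent subproblem with assignment $\mu$ — which is exactly what the substituted subroutine computes.

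Finally I would assemble the running‑time statement: by \Cref{thm:budget_algo} everything in \Cref{alg:global_budget} outside the replaced step runs in polynomial time, and the three fair‑rent routines run in strongly polynomial time (see the Remark at the end of \Cref{sec:min_spread}), so the composed algorithm is polynomial; and if $EF_b(\mathcal A,\mathcal R,V,R_t)=\varnothing$ it still outputs ``no solution'' exactly as before. The main obstacle I anticipate is bookkeeping rather than conceptual: verifying that the fair‑rent algorithms behave correctly under the degenerate lower bound $l\equiv-\infty$ — in particular that $L_f$ stays empty throughout the run, so that the ``break''/``freeze'' branches guarded by $R(L_f)\cap M\neq\varnothing$ are never triggered and termination is governed solely by $R^{-1}(M)\cup R^{-1}(U_f)=\mathcal R$ (and its analogues in \Cref{alg:leximin} and \Cref{alg:min_spread}) — and confirming that the fair‑rent subroutine is actually fed a feasible envy‑free starting rent, which is why the substitution is best read as following, rather than literally overwriting, the \Cref{alg:bounds_on_rents} call.
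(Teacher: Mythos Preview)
Your proposal is correct and matches the paper's reasoning. The paper actually states this corollary without proof, treating it as immediate from \Cref{thm:budget_algo}; the argument you give --- fix the assignment $\mu$, turn the budgets $b_{i\mu(i)}$ into room upper bounds, then invoke \Cref{lem:join_matchings} and \Cref{cor:equal_utility} to see that every utility profile in $EF_b(I)$ is realized with assignment $\mu$ --- is exactly the reasoning the paper spells out later in the proof of \Cref{thm:combined} (there phrased via \Cref{lem:best_matching} and \Cref{thm:welfare_bounds}). Your added bookkeeping remarks about $l\equiv-\infty$ keeping $L_f$ empty and about needing a feasible starting rent are sound and more careful than anything the paper writes for this corollary.
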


% \subsection{Envy-free allocation with bounds on room rents and room-specific agent budgets}
% \label{sec:rent_bounds_budgets}
% In this section, we are going to combine all results and solve the problem in a general form.\\
% Let $I = (\mathcal{A,R},V,R_t,b,l,u)$ be an instance for rent division problem where $b=(b_{ij})_{(i,j)\in \mathcal{A}\times \mathcal{R}}$ indicates budget constraint of agent $i$ for room $j$, and $u$ and $l$ are vector indicating upper and lower bounds for rooms' rents.
% \begin{theorem}
% \label{thm:combined}
%     There exists a polynomial algorithm to find an EF solution that satisfies budgets and bound constraints.  
% \end{theorem}

% \begin{proof}
% without considering $l$ and $u$ run the \cref{alg:global_budget} to find an assignment $\sigma$ then fix the assignment and replace $u$ to $u' = (\min(u_j,b_{\sigma^{-1}(j)j})$. Now we can update our instance to $I'=(\mathcal{A,R},V,l,u')$ and we can find an envy-free solution even with optimization objectives for this instance with previous algorithms.\\
% For instance, let us prove that 
% \end{proof}

\section{Envy-free allocation with bounds on room rents and room-specific agent budgets}
\label{sec:rent_bounds_budgets}

In this section, we investigate the most general setting of the rent division problem, where we are given both room-specific lower and upper bounds on rents and agent-specific budgets for rooms. Formally, the input instance is $I = (\mathcal{A}, \mathcal{R}, V, R_t, b, l, u)$, where $b = (b_{ij})$ specifies the maximum rent agent $i$ is willing to pay for room $j$, and $l = (l_j)$ and $u = (u_j)$ define lower and upper bounds on each room’s rent.

\begin{theorem}
\label{thm:combined}
There exists a polynomial-time algorithm that computes an envy-free allocation $(\sigma, r)$ satisfying both the budget constraints $b$ and the room rent bounds $l, u$. Furthermore, the algorithm can be modified to return an envy-free allocation optimizing any of the following objectives among all feasible envy-free allocations:
    (i) lexicographically maximal utilities (leximin),
    (ii) lexicographically minimal utilities (leximax),
    (iii) maximum among minimum utilities (maximin),
    (iv) minimum utility spread (min-spread).
\end{theorem}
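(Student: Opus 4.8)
The plan is to combine the SCC-decomposition machinery of \Cref{sec:agent_budgets} with the bounds-handling algorithm of \Cref{sec:rent_bounds}, just as \Cref{alg:global_budget} does, but now carrying the room rent bounds $l,u$ through the whole pipeline. First I would note that \Cref{thm:welfare_bounds} still applies verbatim in the combined setting: any envy-free allocation respecting $b,l,u$ uses a welfare-maximizing assignment, and we may freely substitute any other welfare-maximizing assignment without losing envy-freeness (the budget and bound feasibility of the resulting rents is what we must police separately). So the overall structure is: compute an initial EF allocation, decompose into the canonical SCCs $\mathcal C(\mathcal A,\mathcal R,V)$ (invariant by \Cref{lem:conn_components}), and — because \Cref{lem:matchings_and_conn_components} says each component's rooms are always matched within a fixed agent set — solve each component independently and glue via \Cref{lem:join_matchings}.

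Within a single SCC the hard coupling appears: the lower bounds $l_j$ interact with the budgets $b_{ij}$ in a way that the pure-budget argument of \Cref{lem:best_matching} does not see. The key observation I would use is that, by \Cref{lem:single_SCC}, once the assignment inside a component $C$ is fixed, the entire rent vector is determined up to a uniform additive shift: $r_C(t) = r_C^0 + t\mathbf 1$. Hence feasibility of component $C$ at total rent $\delta$ reduces to a one-dimensional interval condition — each budget constraint $r_{\sigma(i)}(t)\le b_{i\sigma(i)}$ gives an upper bound on $t$, each lower bound $l_j\le r_j(t)$ gives a lower bound on $t$, and each upper bound $r_j(t)\le u_j$ another upper bound — so the set of feasible total rents for $C$ is an interval $[\alpha_C,\beta_C]$ (possibly empty, in which case we output ``no solution''), and $\beta_C$ is exactly the $\delta_C$ of \Cref{lem:join_matchings} when only budgets are present. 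Crucially, changing the assignment inside $C$ by reshuffling along a cycle of the budget-aware envy graph (as in \Cref{alg:max_rent_EF_budgets}, Case 2) can only help: it relaxes whichever budget constraint was binding; the lower bounds $l_j$ are attached to rooms, not agents, so they are unaffected by reshuffling. Thus I would run a variant of \Cref{alg:max_rent_EF_budgets} that, instead of only pushing rents up toward budgets, tracks the interval $[\alpha_C,\beta_C]$: it still reshuffles along cycles to relax binding budget edges, and it halts either when $\beta_C<\alpha_C$ (certified infeasibility, because reshuffling cannot further raise $\beta_C$ once no binding budget lies on a cycle — the same cycle-in-$G_b$ argument as \Cref{lem:best_matching}) or with the maximal $\beta_C$.

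Having obtained each $\beta_C$ and a witness assignment $\sigma_C$, I would glue the $\sigma_C$'s into a global welfare-maximizing assignment $\mu$ (legitimate by \Cref{lem:matchings_and_conn_components}), and then the remaining task is purely a bounds problem: find rents $r$ with $\sum_j r_j = R_t$, envy-free under $\mu$, with $l_j\le r_j\le \min(u_j, b_{i\mu(i)}\text{ where }i=\mu^{-1}(j))$ — i.e.\ tighten each room's upper bound to the minimum of its own cap and the occupant's budget for it — and invoke \Cref{alg:bounds_on_rents}, whose correctness is \Cref{thm:bounds_on_rents}. For the optimization variants (i)–(iv), I would replace that final call by \Cref{alg:maximin}, \Cref{alg:leximin}, or \Cref{alg:min_spread} with the same tightened bounds; leximax is the mirror image of leximin and handled by the symmetric version of \Cref{alg:leximin} (swapping the roles of $L_f,U_f$ and of increase/decrease). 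One subtlety to check is that these fairness algorithms were stated for instances with only room bounds, so I must argue that after the tightening step the budget constraints are fully absorbed into the effective upper bounds $u'_j$ and never re-enter — which holds precisely because $\mu$ is fixed and each room has a single occupant, so ``agent $i$'s budget for room $\mu(i)$'' is just a constant cap on $r_{\mu(i)}$.

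The main obstacle I expect is the infeasibility certificate inside an SCC: I must show that if the interval $[\alpha_C,\beta_C]$ computed by the modified \Cref{alg:max_rent_EF_budgets} is empty (or if gluing the $\beta_C$'s cannot reach $R_t$), then genuinely no feasible EF allocation exists under \emph{any} welfare-maximizing assignment, not merely the one the algorithm settled on. This rests on two facts I would need to nail down carefully: that $\beta_C$ as returned is the supremum of feasible total rents for component $C$ over all admissible intra-component assignments (the cycle argument of \Cref{lem:best_matching}, now additionally observing that lower bounds are reshuffle-invariant so they never obstruct raising $\beta_C$), and the analogous statement that $\alpha_C$ is assignment-independent. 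Given those, $EF_{b,l,u}(I)\ne\emptyset$ iff $\sum_C\alpha_C \le R_t \le \sum_C\beta_C$, and \Cref{alg:bounds_on_rents} on the glued instance succeeds exactly in that regime, giving the claimed dichotomy.
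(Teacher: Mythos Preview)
Your overall architecture --- run the budget-aware SCC procedure to fix a good assignment $\mu$, then absorb each occupant's budget into a tightened room upper bound $\tilde u_j=\min\{u_j,\,b_{\mu^{-1}(j),j}\}$ and hand the resulting pure-bounds instance to \Cref{alg:bounds_on_rents} (or to \Cref{alg:maximin}, \Cref{alg:leximin}, \Cref{alg:min_spread} for the optimization variants) --- is exactly what the paper does in \Cref{alg:combined}. The paper is simpler in one respect: it runs \Cref{alg:global_budget} \emph{unmodified}, ignoring $l,u$ entirely during the SCC phase, because (as you yourself observe) room bounds are attached to rooms and are invariant under reshuffling of agents along cycles. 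Your interval-tracking variant of \Cref{alg:max_rent_EF_budgets} is not wrong, but it is unnecessary: the pure-budget assignment it would output is the same, and any per-component infeasibility it detects would be caught anyway by the final call to \Cref{alg:bounds_on_rents}.

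There is, however, a genuine gap in your correctness argument. The concluding characterization
\[
EF_{b,l,u}(I)\ne\emptyset \iff \sum_C\alpha_C\le R_t\le\sum_C\beta_C
\]
is false, because it ignores the envy-freeness constraints \emph{between} components. Within each SCC the rent vector is fixed up to a shift $t_C$, but a weak-envy edge from a room in $C$ to a room in $C'$ in the envy graph forces $t_{C'}\ge t_C$; these inter-component inequalities can rule out shift vectors whose component-rent sums nonetheless lie in $[\sum_C\alpha_C,\sum_C\beta_C]$. (For a concrete obstruction: three singleton components with feasible shift intervals $\{0\}$, $\{5\}$, $[0,10]$ and a weak-envy edge from the second to the third forcing $t_3\ge t_2=5$; then $\sum_C\alpha_C=5$ but $R_t=7$ is infeasible.) Consequently your claim that \Cref{alg:bounds_on_rents} ``succeeds exactly in that regime'' is also unjustified. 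The paper's argument bypasses any such characterization: if some $(\mu',r')\in EF_{b,l,u}(I)$ exists, then by \Cref{lem:join_matchings} (together with \Cref{thm:welfare_bounds}) the \emph{same} rent vector $r'$ makes $(\mu,r')\in EF_b(I)$ for the assignment $\mu$ produced by the SCC phase; since $r'$ already satisfies $l$ and $u$, it witnesses feasibility of the tightened bounds instance $(l,\tilde u)$ under the fixed assignment $\mu$, and hence by \Cref{thm:bounds_on_rents} the final call cannot return ``no solution''. That one-line reduction is what you should use in place of the $\sum_C\alpha_C$ condition.
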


\begin{algorithm}[ht]
\caption{Combined EF allocation with budgets and rent bounds}
\label{alg:combined}
\begin{wbox}
\textsc{BudgetAndBoundsAware-EF-Rents}$(\mathcal{A}, \mathcal{R}, V, R_t, b,  l, u)$ \\
\textbf{Input:} Instance $I = (\mathcal{A}, \mathcal{R}, V, R_t, b, l, u)$\\
\textbf{Output:} Envy-free allocation $(\sigma, r)$ satisfying all constraints
\begin{enumerate}
    \item $\sigma \gets$ \textsc{BudgetAware-EF}$(\mathcal{A}, \mathcal{R}, V, R_t, b)$\hfill(\cref{alg:global_budget})
    \item \textbf{for each} room $j \in \mathcal{R}$:
    \begin{itemize}
        \item Let $i = \sigma^{-1}(j)$ be the agent assigned to room $j$
        \item Set $\tilde{u}_j \gets \min\{ u_j,\ b_{ij} \}$
    \end{itemize}
    \item $r \gets$ \textsc{BoundsAware-EF-Rents}$(\sigma, \mathcal{A}, \mathcal{R}, V, R_t, l, \tilde{u})$\hfill(\cref{alg:bounds_on_rents})
    \item \textbf{Return} $(\sigma, r)$
\end{enumerate}
\end{wbox}
\end{algorithm}

\begin{proof}
Run \cref{alg:global_budget} on $(\mathcal{A,R},V,b,R_t)$ to computes an envy-free allocation $(\sigma, r')$ that satisfies the agent-specific budget constraints $b$ but may violate the room rent bounds $l$ and $u$. 

To ensure that the final rents also respect the room bounds, we define new upper bounds $\tilde{u}_j := \min\{u_j, b_{\sigma^{-1}(j)\, j}\}$ for each room $j$. These bounds guarantee that the rent paid for room $j$ will not exceed either the agent's budget for room $j$ or the specified room cap.

Next, we use \cref{alg:bounds_on_rents} to compute a new rent vector $r$ satisfying the bounds $l_j \le r_j \le \tilde{u}_j$, while preserving envy-freeness with respect to the fixed assignment $\sigma$. This ensures that the final allocation $(\sigma, r)$ satisfies both types of constraints.

Since each step uses polynomial-time subroutines and the number of rooms is $n$, the overall algorithm runs in polynomial time.

For the optimization objectives: Once the assignment $\sigma$ is fixed, the problem becomes finding an envy-free solution with a fixed assignment and bounds on rents $l_j \le r_j \le \tilde u_j$, that in \Cref{sec:rent_bounds} we have provided solutions for different problems in this setting such as: \emph{maximin} can be solved via \Cref{alg:maximin}, \emph{leximin} can be solved via \Cref{alg:leximin}, and \emph{min-spread} can be solved via \Cref{alg:min_spread}.\\
% and all can be computed efficiently. Hence, Step 3 of the algorithm can be replaced with any of these optimizers while maintaining polynomial-time complexity.\\
Note that if there exists an envy-free allocation $(\mu, r')$ that satisfies both types of constraints, then by \cref{lem:best_matching} and \cref{thm:welfare_bounds}, the allocation $(\sigma, r')$ must also satisfy both types of constraints. Therefore, if the algorithm outputs ``no solution'' for the assignment $\sigma$, it implies that no feasible envy-free solution exists for any assignment.
\end{proof}

\begin{remark}
   Given a fixed assignment of agents to rooms, all envy-free, rent bounds, and budget constraints can be written as linear constraints. Consequently, once we have found the matching, the fair envy-free rents can be computed under that assignment and can also be computed via linear programming, refer to Procaccia et al.~\cite{Procaccia_Velez_Yu_2018} for more details.
\end{remark}

\begin{example}[Leximin vs. Min-Spread]
Consider the following rent division instance with four agents and four rooms. The valuation matrix $V \in \mathbb{R}^{4 \times 4}_+$  and its unique maximum weight matching is shown below:

$$
V =
\begin{bmatrix}
\highlight{20} & 0 & 20 & 0 \\
0 & \highlight{19} & 0 & 0 \\
5 & 0 & \highlight{5} & 0 \\
0 & 0 & 0 & \highlight{2}
\end{bmatrix}
$$

The total rent is $4$ and each room $j$ must be priced within the bounds $[l_j, u_j]$ shown below:

$$
[l_1, u_1] = [0,2], \quad [l_2, u_2] = [0,2], \quad [l_3, u_3] = [0,2], \quad [l_4, u_4] = [2,2]
$$

Two rent vectors (EF under the same assignment) are:

$$
\text{Leximin:} \quad r = (0, 2, 0, 2) \quad \Rightarrow \quad u = (20, 8, 5, 0)
$$
$$
\text{Min-Spread:} \quad r = (1, 0, 1, 2) \quad \Rightarrow \quad u = (19, 10, 4, 0)
$$

Both are envy-free and satisfy the rent bounds, but they differ in their fairness objectives:
\begin{itemize}
    \item The first maximizes the lexicographically minimum utility vector (\textbf{leximin}).
    \item The second minimizes the difference between maximum and minimum utilities (\textbf{min-spread}).
\end{itemize}

This illustrates that the leximin solution does not always yield the minimum spread.
\end{example}

% This file is for discussion section

% \section{Discussion}

%\input{ack}

\section*{Acknowledgments}
This work was supported in part by the National Science Foundation Grant CCF-2230414.

\bibliographystyle{plainurl}
\bibliography{refs}

\begin{thebibliography}{10}

\bibitem{aleksandrov2020online}
Dimitar Aleksandrov and Toby Walsh.
\newblock Online fair division: Survey and directions.
\newblock In {\em Proceedings of the 29th International Joint Conference on Artificial Intelligence (IJCAI)}, 2020.

\bibitem{adg91}
Mehmet Alkan, Gabrielle Demange, and David Gale.
\newblock Fair allocation of indivisible goods and money.
\newblock {\em Journal of Economic Theory}, 55(2):321--333, 1991.

\bibitem{survey_fair_division}
Georgios Amanatidis, Haris Aziz, Georgios Birmpas, Aris Filos-Ratsikas, Bo~Li, Hervé Moulin, Alexandros~A. Voudouris, and Xiaowei Wu.
\newblock Fair division of indivisible goods: Recent progress and open questions.
\newblock {\em Artificial Intelligence}, 322:103965, 2023.
\newblock URL: \url{https://www.sciencedirect.com/science/article/pii/S000437022300111X}, \href {https://doi.org/10.1016/j.artint.2023.103965} {\path{doi:10.1016/j.artint.2023.103965}}.

\bibitem{AES14components}
Tommy Andersson, Lars Ehlers, and Lars-Gunnar Svensson.
\newblock Budget balance, fairness, and minimal manipulability.
\newblock {\em Theoretical Economics}, 9(3):753--777, 2014.
\newblock URL: \url{https://onlinelibrary.wiley.com/doi/abs/10.3982/TE1346}, \href {https://arxiv.org/abs/https://onlinelibrary.wiley.com/doi/pdf/10.3982/TE1346} {\path{arXiv:https://onlinelibrary.wiley.com/doi/pdf/10.3982/TE1346}}, \href {https://doi.org/10.3982/TE1346} {\path{doi:10.3982/TE1346}}.

\bibitem{Arunachaleswaran_rent_division}
Eshwar~Ram Arunachaleswaran, Siddharth Barman, and Nidhi Rathi.
\newblock Fully polynomial-time approximation schemes for fair rent division.
\newblock In {\em Proceedings of the Thirtieth Annual ACM-SIAM Symposium on Discrete Algorithms}, SODA '19, page 1994–2013, USA, 2019. Society for Industrial and Applied Mathematics.

\bibitem{foley1967resource}
Duncan Foley.
\newblock Resource allocation and the public sector.
\newblock {\em Yale Economic Essays}, 7(1):45--98, 1967.

\bibitem{gal2016fairest}
Kobi Gal, Moshe Mash, Ariel~D Procaccia, and Yair Zick.
\newblock Which is the fairest (rent division) of them all?
\newblock {\em ACM Transactions on Economics and Computation (TEAC)}, 4(4):1--27, 2016.

\bibitem{klijn2000}
Flip Klijn.
\newblock An algorithm for envy-free allocations in an economy with indivisible objects and money.
\newblock {\em Social Choice and Welfare}, 17(2):201--215, 2000.
\newblock \href {https://doi.org/10.1007/s003550050015} {\path{doi:10.1007/s003550050015}}.

\bibitem{mas-colell.whinston.ea95}
Andreu Mas-Colell, Michael~D. Whinston, and Jerry~R. Green.
\newblock {\em Microeconomic Theory}.
\newblock Oxford University Press, New York, 1995.

\bibitem{Procaccia_Velez_Yu_2018}
Ariel Procaccia, Rodrigo Velez, and Dingli Yu.
\newblock Fair rent division on a budget.
\newblock {\em Proceedings of the AAAI Conference on Artificial Intelligence}, 32(1), Apr. 2018.
\newblock URL: \url{https://ojs.aaai.org/index.php/AAAI/article/view/11465}, \href {https://doi.org/10.1609/aaai.v32i1.11465} {\path{doi:10.1609/aaai.v32i1.11465}}.

\bibitem{procaccia2024multiapartment}
Ariel~D Procaccia, Benjamin Schiffer, and Shirley Zhang.
\newblock Multi-apartment rent division.
\newblock {\em arXiv preprint arXiv:2403.08051}, 2024.

\bibitem{survey_ef_pricing}
Marcos Salvatierra, Juan~G. Colonna, Mario~Salvatierra Jr., and Alcides de~C.~Amorim~Neto.
\newblock {On Complexity Classes of Envy-Free Pricing Problems: A Short Survey}.
\newblock {\em SN Operations Research Forum}, 5(4):1--25, December 2024.
\newblock URL: \url{https://ideas.repec.org/a/spr/snopef/v5y2024i4d10.1007_s43069-024-00373-1.html}, \href {https://doi.org/10.1007/s43069-024-00373-1} {\path{doi:10.1007/s43069-024-00373-1}}.

\bibitem{svensson1983large}
Lars-Gunnar Svensson.
\newblock Large indivisibles: An analysis with respect to price equilibrium and fairness.
\newblock {\em Econometrica}, 51(4):939--954, 1983.

\bibitem{nyt14}
{The New York Times}.
\newblock To divide the rent, start with a triangle.
\newblock https://www.nytimes.com/interactive/2014/science/rent-division-calculator.html, 2014.
\newblock Accessed June 2025.

\end{thebibliography}

\appendix
% This file is for appendix section

\section{Appendix}

\subsection{Klijn's Algorithm}
\label{subsec:Klijn}
Given an instance of the rent division problem $I = (\calA,\calR,V,R_t)$, begin with an arbitrary allocation of rooms and rents, $(\sigma,r)$, and construct the corresponding envy graph $G_E(\sigma, r)$. If the graph contains a cycle with at least one strong envy edge, then reassign(/permute) agents along the cycle so that each agent moves to the room they weakly prefer within the cycle.

If there exists a strong envy edge $(i,j)$ that is not part of any cycle, proceed as follows. Let $R^{-1}(i)$ denote the set of rooms from which there is a directed path to room $i$ (the inverse reachability set), and let $R(j)$ denote the set of rooms reachable from room $j$ (the reachability set). Since $(i,j)$ is not part of any cycle, these two sets are disjoint.

Now, increase the rents of all rooms in $R^{-1}(i)$ at unit rate, and decrease the rents of all rooms in $R(j)$ at rate $\frac{|R^{-1}(i)|}{|R(j)|}$. This adjustment preserves the total rent. As rents evolve under this scheme, one of two things will happen: either the strong edge $(i,j)$ becomes weak, or new weak edges appear that create a cycle containing at least one strong edge or increase the size of $R^{-1}(i) \cup R(j)$. 

In either case, Klijn~\cite{klijn2000} proves that this iterative procedure makes progress by an elimination of a strong edge and hence the algorithm terminates in strongly polynomial time. For a full description and analysis, see~\cite{klijn2000}.

\begin{figure}
    \centering

    \begin{wbox}
        \begin{center}
    \begin{tikzpicture}[scale=0.65, >=Stealth, thick, font=\large]

% Define colors
\definecolor{lblue}{RGB}{0,180,255}
\definecolor{lred}{RGB}{255,80,80}
\definecolor{darkgreen}{RGB}{0,130,0}
\begin{scope}[shift={(-5,2)}]
 \draw[thick] (0,1) circle (0.5);
  \node at (0,1) {$i$};
\end{scope}
\begin{scope}[shift={(5,2)}]
    \draw[ thick] (0,1) circle (0.5);
  \node at (0,1) {$j$};
  \end{scope}
  \draw[->, thick, red] (-4.5,2+1) -- (4.5+0,2+1); % from (i) to (j)
  
% Top left group (U and R⁻¹(U)) — inward arrows
\begin{scope}[shift={(-5,2)}]
  \draw[ thick] (0,0) ellipse (2 and 2.6);
 
  \node at (0,-2) {$R^{-1}(i)$};
  \draw[->, lblue, thick] (-1,-0.8) .. controls (-0.4,0.2) .. (-0.4,0.5);
  \draw[->, lblue, thick] (0,-0.8) .. controls (0,0.2) .. (0,0.5);
  \draw[->, lblue, thick] (1,-1) .. controls (0.4,0.2) .. (0.4,0.5);
  \node at (-2.5,0) { $r \downarrow$};
\end{scope}

% Top right group (L and R(L)) — outward arrows
\begin{scope}[shift={(5,2)}]
  \draw[thick] (0,0) ellipse (2 and 2.6);

  \node at (0,-2) {$R(j)$};
  \draw[->, lblue, thick] (-0.4,0.5) .. controls (-0.4,0.2) .. (-1,-0.8);
  \draw[->, lblue, thick] (0,0.5) .. controls (0,0.2) .. (0,-0.8);
  \draw[->, lblue, thick] (0.4,0.5) .. controls (0.4,0.2) .. (1,-1);
  \node at (2.5, 0) { $r \uparrow$};
\end{scope}
% REST label
 \draw[darkgreen, thick] (0,-4) ellipse (2.5 and 1.8);
 \node[darkgreen] at (0,-4) {\large rest};

\end{tikzpicture}
\end{center}

    \end{wbox}
    
    \caption{Figure describing algorithm in \Cref{subsec:Klijn}}
    % \label{fig:klijn}
\end{figure}
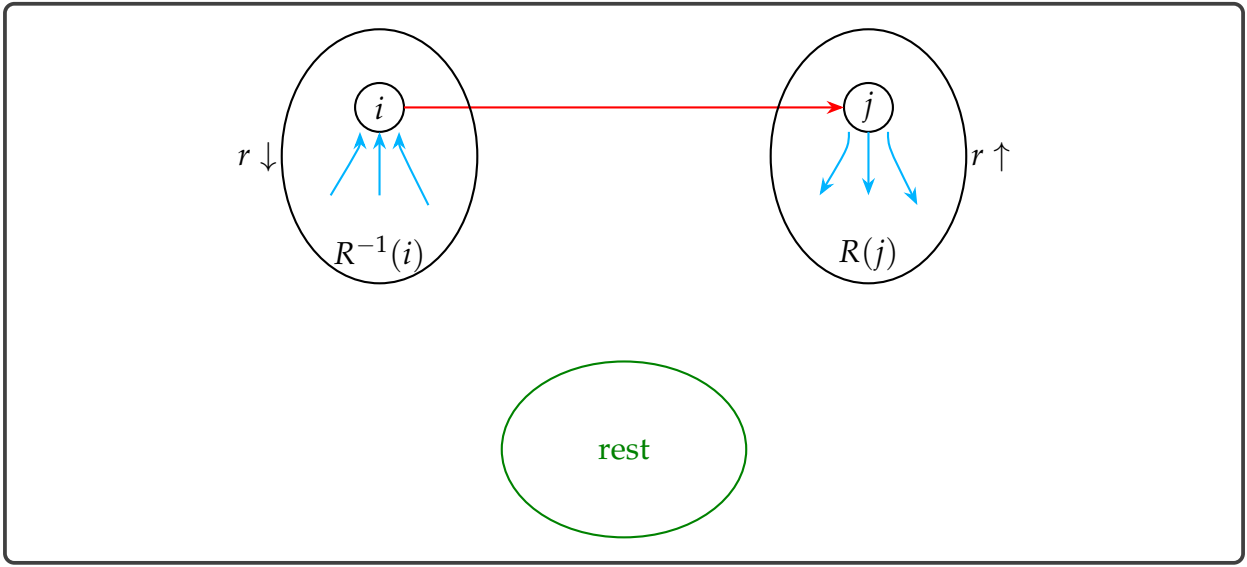

\begin{observation}
For any rent vector $r$, the envy graph corresponding to a welfare-maximizing assignment contains no cycles with strong envy edges. Consequently, if Klijn's algorithm \cite{klijn2000} is initialized with a welfare-maximizing assignment, the matching remains fixed throughout, and only the rents need to be adjusted to eliminate all strong envy edges and achieve envy-freeness.
\end{observation}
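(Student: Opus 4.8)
The plan is to prove the structural claim first—that no welfare-maximizing assignment admits a cycle with a strong envy edge under any rent vector—and then obtain the statement about Klijn's algorithm as an immediate consequence of the one way that algorithm is allowed to modify its matching.

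For the structural claim I would argue by contradiction via a cyclic exchange argument. Suppose $\sigma$ is welfare-maximizing yet $G_E(\sigma,r)$ contains a directed cycle of distinct rooms $a_1 \to a_2 \to \cdots \to a_k \to a_1$ that includes at least one strong envy edge. Let $i_t := \sigma^{-1}(a_t)$ be the agent occupying room $a_t$; since the rooms are distinct, so are these agents. By the definition of the envy graph, each edge $a_t \to a_{t+1}$ (indices mod $k$, so $a_{k+1}=a_1$) means agent $i_t$ weakly envies room $a_{t+1}$, i.e. $v_{i_t a_t} - r_{a_t} \le v_{i_t a_{t+1}} - r_{a_{t+1}}$, with strict inequality on whichever edge is strong.

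The key computation is to sum these $k$ inequalities around the cycle. Rewriting each as $v_{i_t a_t} - v_{i_t a_{t+1}} \le r_{a_t} - r_{a_{t+1}}$ and summing over $t=1,\dots,k$, the right-hand sides telescope to $0$, so $\sum_t v_{i_t a_t} \le \sum_t v_{i_t a_{t+1}}$, and the presence of a strong edge makes this strict. Now define $\sigma'$ by cyclically reassigning each $i_t$ to room $a_{t+1}$ and leaving every other agent fixed; this is a valid bijection precisely because the $a_t$ are distinct. Its welfare exceeds that of $\sigma$ by exactly $\sum_t v_{i_t a_{t+1}} - \sum_t v_{i_t a_t} > 0$, contradicting the welfare-maximality of $\sigma$. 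This is the standard "no positive-gain cyclic exchange" optimality condition for maximum-weight assignments; the only real care needed—and the main place an argument like this goes wrong—is matching the orientation of the envy edge to the direction of the reassignment and checking that the one strict inequality indeed survives the summation, which it does because the telescoping kills all rent terms.

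For the consequence, recall from the description of Klijn's algorithm that it alters the matching \emph{only} in its cycle step, where it permutes agents along a cycle containing a strong envy edge; the rent-adjustment step leaves $\sigma$ unchanged. Initialized with a welfare-maximizing $\sigma$, the structural claim—which holds for \emph{every} rent vector $r$—guarantees that no such cycle is ever present, so the cycle step never fires. Hence $\sigma$ remains fixed, and therefore welfare-maximizing, throughout the run; the claim then continues to apply to each successive rent vector produced by the rent adjustments, so the algorithm simply keeps eliminating acyclic strong envy edges by rent changes until none remain, terminating at an envy-free allocation with the matching unchanged.
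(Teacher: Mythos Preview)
Your proof is correct. The paper states this result as an observation without proof, so there is no paper argument to compare against; your cyclic-exchange contradiction is the standard and natural way to establish it, and your handling of both the telescoping of rent terms and the derivation of the Klijn consequence is clean and complete.
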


\section{Omitted Proofs}
\label{app:proofs}
\welfarebounds*
\begin{proof}
% {Proof of \Cref{thm:welfare_bounds}}
\label{proof:welfare_bounds}
Let $\sigma'$ be any welfare-maximizing assignment. Since $(\sigma, r)$ is envy-free, each agent $i$ weakly prefers their allocation under $\sigma$ to that under $\sigma'$:
$$
v_{i\sigma(i)} - r_{\sigma(i)} \geq v_{i\sigma'(i)} - r_{\sigma'(i)} \quad \text{for all } i \in \mathcal{A}.
$$
Summing over all agents and rearranging,
$$
\sum_{i \in \mathcal{A}} v_{i\sigma(i)} - \sum_{j \in \mathcal{R}} r_j \geq \sum_{i \in \mathcal{A}} v_{i\sigma'(i)} - \sum_{j \in \mathcal{R}} r_j.
$$
Since the total rent $\sum_j r_j = R_t$ is fixed, we get:
$
\sum_{i \in \mathcal{A}} v_{i\sigma(i)} \geq \sum_{i \in \mathcal{A}} v_{i\sigma'(i)}.
$
But $\sigma'$ is welfare-maximizing, so this inequality must be tight. Hence, $\sigma$ is also welfare-maximizing, and the initial inequalities for each agent must also be equalities, i.e., for each $i \in \mathcal{A}$,
$$
v_{i\sigma(i)} - r_{\sigma(i)} = v_{i\sigma'(i)} - r_{\sigma'(i)} \geq v_{ij} - r_j \quad \forall j \in \mathcal{R}.
$$
Thus, $(\sigma', r)$ is also envy-free. It also satisfies the bound constraints because $(\sigma,r)$ does, and we have the same renting here as well.
\end{proof}

\boundsonrents*
\begin{proof}
% {Proof of \Cref{thm:bounds_on_rents}}
\label{proof:bounds_on_rents}
    We will prove the above theorem in multiple parts.

\begin{algorithm}[ht]
	\begin{wbox}
		\textsc{BoundsAware-EF-Rents}$(\sigma,r, \mathcal{A}, \mathcal{R}, V, R_t, l, u)$ \\
		\textbf{Input:} A rent division instance with bounds on room rents, $I = (\mathcal{A}, \mathcal{R}, V, R_t, l, u)$, a welfare-maximizing assignment $\sigma$ and some envy-free rent division $r$. \\
        \textbf{Output:} An envy-free rent division $r$ that satisfies the bounds on room rents.
		\begin{enumerate}
        % [noitemsep,nosep]
		  \item \textbf{Initialization:}
		  \begin{enumerate}
          % [noitemsep,nosep]
                \item $G_E(\sigma, r) := \text{Envy-Graph}(\sigma, r)$
                \item $L := \{j\in \calR | r_j < l_j \} $
                \item $U := \{j\in \calR | r_j > u_j \} $
                \item $L_f := \{j\in \calR | r_j = l_j \} $
                \item $U_f := \{j\in \calR | r_j = u_j \} $

                % \item 
		  \end{enumerate}  
            \item \textbf{While} $L\cup U \neq \varnothing$ \textbf{do:}
                \begin{enumerate}
                % [noitemsep,nosep]
                    \item \textbf{If} $U \cap \{ R(L_F) \cup R(L) \} \neq \varnothing $, \textbf{return} ``No solution'';
                    \item \textbf{If} $L \cap \{ R^{-1}(U_F) \cup R^{-1}(U) \} \neq \varnothing $, \textbf{return} ``No solution'';
                    \item Set $S_{inc} := \begin{cases}
                                      R(L) & \text{if $L\neq  \varnothing$ }\\
                                      \calR \setminus \{ R^{-1}(U_F) \cup R^{-1}(U) \} & \text{otherwise}
                                    \end{cases}$
                    \item Set $S_{dec} := \begin{cases}
                                      R^{-1}(U) & \quad\quad\text{if $U\neq  \varnothing$ }\\
                                      \calR \setminus \{ R(L_F) \cup R(L) \} & \quad\quad\text{otherwise}
                                    \end{cases}$
                    \item Increase rents, $r_j$, of all rooms in $S_{inc}$ at unit rate; \\ Decrease rents, $r_j$, of all rooms in $S_{dec}$ at the rate $\frac{|S_{inc}|}{|S_{dec}|}$ simultaneously;
                    \item Stop when a (weak-envy) edge is added to $G_E(r)$ or a vertex moves into or out of $L,U,L_f,U_f$.
                    \item Update $L,U,L_f,U_f \text{ and } G_E(r)$.
            \end{enumerate}
            \item \textbf{Return: $r$} 

		\end{enumerate}
	\end{wbox}
        \caption{Algorithm to find envy-free rents satisfying lower and upper bounds on room rents; $R(X)$ and $R^{-1}(X)$ correspond to the set of vertices that are reachable from, and the set of vertices that can reach some vertex of $X$ in the graph $G_E(r)$ respectively.}
	\label{alg:bounds_on_rents} 
\end{algorithm}

\begin{figure}
    \centering
    \begin{wbox}

\begin{center}
     \begin{tikzpicture}[scale=0.65, >=Stealth, thick, font=\large]
% Define colors
\definecolor{lblue}{RGB}{0,180,255}
\definecolor{lred}{RGB}{255,80,80}
\definecolor{darkgreen}{RGB}{0,130,0}
\begin{scope}[shift={(-5,2)}]
 \draw[lblue, thick] (0,1) circle (1.2);
  \node at (0,1) {$U$};
\end{scope}

\begin{scope}[shift={(5,2)}]
  \draw[lred, thick] (0,1) circle (1.2);
  \node at (0,1) {$L$};
\end{scope}

% Definition: U (top-left, curved to U)
\node[font=\small, anchor=west] at (-9,4.8) (defU) {$U = \{j \in \mathcal{R} : r_j > u_j\}$};

% Definition: L (top-right, curved to L)
\node[font=\small, anchor=east] at (9,4.8) (defL) {$L = \{j \in \mathcal{R} : r_j < \ell_j\}$};

% Top left group (U and R⁻¹(U)) — inward arrows
\begin{scope}[shift={(-5,2)}]
  \draw[lblue, thick] (0,0) ellipse (2 and 2.6);
 
  \node at (0,-2) {$R^{-1}(U)$};
  \draw[->, lblue, thick] (-1,-0.8) .. controls (-0.4,0.2) .. (-0.4,0.5);
  \draw[->, lblue, thick] (0,-0.8) .. controls (0,0.2) .. (0,0.5);
  \draw[->, lblue, thick] (1,-1) .. controls (0.4,0.2) .. (0.4,0.5);
  \node at (-2.5,1) {\small $r \downarrow$};
\end{scope}

% Top right group (L and R(L)) — outward arrows
\begin{scope}[shift={(5,2)}]
  \draw[lred, thick] (0,0) ellipse (2 and 2.6);

  \node at (0,-2) {$R(L)$};
  \draw[->, lblue, thick] (-0.4,0.5) .. controls (-0.4,0.2) .. (-1,-0.8);
  \draw[->, lblue, thick] (0,0.5) .. controls (0,0.2) .. (0,-0.8);
  \draw[->, lblue, thick] (0.4,0.5) .. controls (0.4,0.2) .. (1,-1);
  \node at (2.5,1) {\small $r \uparrow$};
\end{scope}

\begin{scope}[shift={(-3,-3)}]
 \draw[black, thick] (0,0.9) circle (0.9);
  \node at (0,1) {$U_f$};
\end{scope}

\begin{scope}[shift={(3,-3)}]
  \draw[black, thick] (0,0.9) circle (0.9);
  \node at (0,1) {$L_f$};
\end{scope}

% Definition: Uf (bottom-left, curved to Uf)
\node[font=\small, anchor=west] at (-8.9,-3) (defUf) {$U_f = \{j : r_j = u_j\}$};

% Definition: Lf (bottom-right, curved to Lf)
\node[font=\small, anchor=east] at (8.9,-3) (defLf) {$L_f = \{j : r_j = \ell_j\}$};

% Bottom left: Uf and R(Uf) — inward arrows
\begin{scope}[shift={(-3,-3)}]
  \draw[black, thick] (0,0) ellipse (1.6 and 1.9);
  \node at (0,-1.3) {$R^{-1}(U_f)$};
 \draw[->, lblue, thick] (-0.8,-0.6) .. controls (-0.3,0.2) .. (-0.3,0.5);
  \draw[->, lblue, thick] (0,-0.6) .. controls (0,0.2) .. (0,0.5);
  \draw[->, lblue, thick] (0.8,-0.6) .. controls (0.3,0.2) .. (0.3,0.5);
\end{scope}

% Bottom right: Lf and R⁻¹(Lf) — outward arrows
\begin{scope}[shift={(3,-3)}]
  \draw[black, thick] (0,0) ellipse (1.6 and 1.9);
  \node at (0,-1.3) {$R^(L_f)$};
   \draw[->, lblue, thick] (-0.3,0.5) .. controls (-0.3,0.2) .. (-0.8,-0.6);
  \draw[->, lblue, thick] (0,0.5) .. controls (0,0.2) .. (0,-0.6);
  \draw[->, lblue, thick] (0.3,0.5) .. controls (0.3,0.2) .. (0.8,-0.6);
 
\end{scope}

% REST label
 \draw[darkgreen, thick] (0,3) ellipse (1.5 and 1.3);
 \node[darkgreen] at (0,3) {\large rest};
 \node at (0,5) {\small $r \downarrow \text{or} \uparrow$};

\label{fig:algo_bounds}

\end{tikzpicture}
\end{center}

\end{wbox}

    \caption{Figure describing \Cref{alg:bounds_on_rents}}
    % \label{fig:algo_bounds}
\end{figure}

\begin{lemma}
\label{lem:bounds_no_solution}
    If \Cref{alg:bounds_on_rents} returns ``No solution'', then there exists no envy-free rents that satisfies the given bounds.
\end{lemma}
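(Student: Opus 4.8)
The plan is to argue by contradiction. Suppose, at the moment \Cref{alg:bounds_on_rents} outputs ``No solution'' (with current assignment $\sigma$ and current rent vector $r$), there nonetheless exists an envy-free allocation $(\sigma', r^{*})$ with $l_j \le r^{*}_j \le u_j$ for all $j$. The first step is to reduce to the fixed assignment $\sigma$: since the algorithm receives a welfare-maximizing $\sigma$ and never modifies it, $\sigma$ is welfare-maximizing, so by \Cref{thm:welfare_bounds}(2) the pair $(\sigma, r^{*})$ is also envy-free, and it still respects the bounds (same rent vector). From here on everything is with respect to $\sigma$, and I may freely compare $r^{*}$ against the current $r$. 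Define the difference vector $d_j := r^{*}_j - r_j$ for $j \in \mathcal{R}$.

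The core of the argument is a monotonicity property of $d$ along the envy graph: for every directed edge $(a,b) \in G_E(\sigma,r)$ one has $d_a \le d_b$. Indeed, writing $i = \sigma^{-1}(a)$, the defining inequality of the edge gives $v_{ia} - r_a \le v_{ib} - r_b$, while envy-freeness of $(\sigma, r^{*})$ gives $v_{ia} - r^{*}_a \ge v_{ib} - r^{*}_b$; subtracting the second from the first makes the valuations cancel and yields $r^{*}_a - r_a \le r^{*}_b - r_b$, i.e.\ $d_a \le d_b$. Hence $d$ is non-decreasing along every directed path, so $b \in R(a)$ (equivalently $a \in R^{-1}(b)$) implies $d_a \le d_b$. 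Next I record the sign data forced by the bounds: $j \in L$ gives $r_j < l_j \le r^{*}_j$, hence $d_j > 0$; $j \in L_f$ gives $d_j \ge 0$; $j \in U$ gives $d_j < 0$; and $j \in U_f$ gives $d_j \le 0$.

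Finally I dispatch the only two lines on which the algorithm can return ``No solution.'' If line~2(a) fires, some $b \in U$ is reachable from some $a \in L \cup L_f$; then $0 \le d_a \le d_b < 0$, a contradiction. If line~2(b) fires, some $a \in L$ can reach some $b \in U \cup U_f$; then $0 < d_a \le d_b \le 0$, again a contradiction. In either case the hypothetical $(\sigma', r^{*})$ cannot exist, which proves the lemma. I should note that the argument never uses the total-rent identity $\sum_j d_j = 0$; only the monotonicity lemma and the sign bookkeeping are needed.

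I expect the only delicate point — the ``main obstacle'' — to be pinning down the reduction to the fixed assignment cleanly: one must observe that the algorithm keeps $\sigma$ unchanged and welfare-maximizing throughout (so \Cref{thm:welfare_bounds} applies to any hypothetical solution, regardless of its own assignment), and that the sets $L, U, L_f, U_f$ and the graph $G_E(\sigma,r)$ tested in lines~2(a)--2(b) are the ones consistent with the current $r$ (they are refreshed in the update step of the preceding iteration, or in initialization). Once this is in place, the monotonicity lemma is a two-line computation and the case analysis is immediate.
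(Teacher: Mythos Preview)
Your proof is correct and follows essentially the same approach as the paper: both argue by contradiction, propagating the rent difference along a directed path in the envy graph to force a bound violation at the far endpoint. Your presentation is slightly cleaner---you abstract the path-propagation into a single monotonicity lemma for the difference vector $d$ and you make explicit the reduction to the fixed assignment $\sigma$ via \Cref{thm:welfare_bounds}, which the paper leaves implicit---but the underlying argument is the same.
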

\begin{proof}

Assume that the rents computed by the algorithm before returning ``No solution'' is denoted by $r$, and let $r'$ be an envy-free (EF) solution that satisfies the bounds.

When the algorithm returns ``No solution'', one of the following two events must have occurred:
\begin{enumerate}
    \item $U \cap \left( R(L_F) \cup R(L) \right) \neq \varnothing$
    \item $L \cap \left( R^{-1}(U_F) \cup R^{-1}(U) \right) \neq \varnothing$
\end{enumerate}

In case (1), there exists a room $j \in L_F \cup L$ and a room $k \in U$ such that there is a path from $j$ to $k$ in the weak envy graph under rent $r$. Let this path be $j, h_1, h_2, \dots, h_t, k$.

Since $r'$ is an EF rent that satisfies the bounds, we have $r'_k < r_k$. Because there is a weak envy edge from $h_t$ to $k$, it follows that $r'_{h_t} < r_{h_t}$. Continuing this reasoning along the path, we obtain $r'_j < r_j$.

However, since $j \in L_F \cup L$, we know that $r_j \leq \ell_j$, and thus $r'_j < \ell_j$, which contradicts our assumption that $r'$ satisfies the bounds.

An analogous argument applies for case (2), completing the proof.    
\end{proof}
% \begin{proof}
%     **Assume there exists a solution $r'$ but the algorithm returned ``No solution''. Consider the appropriate path and the change in rents between $r'$ and $r$. Rents should all go up or down implying the bounds fail at some vertex.**
% \end{proof}

\begin{lemma}
    If \Cref{alg:bounds_on_rents} returns a rent $r$, then $r$ is an envy-free rent satisfying the bounds on rents.
\end{lemma}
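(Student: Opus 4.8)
The plan is to prove the statement by establishing, by induction on the iterations of the \textbf{while} loop in \Cref{alg:bounds_on_rents}, the invariant that $(\sigma,r)$ is envy-free and $\sum_{j\in\calR}r_j = R_t$ at all times, and then to observe that control reaches the final ``\textbf{Return} $r$'' only after the loop condition has failed, i.e.\ only when $L\cup U=\varnothing$, at which point every room $j$ satisfies $l_j\le r_j\le u_j$ by the definitions of $L$ and $U$. Combining the two gives exactly the claim. The base case is the hypothesis: the input $(\sigma,r)$ is envy-free, and being a rent division it has $\sum_j r_j=R_t$.

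For the total-rent part: in a single iteration the sets $S_{inc}$ and $S_{dec}$ are disjoint. When $L\neq\varnothing$ and $U\neq\varnothing$, a room in $R(L)\cap R^{-1}(U)$ would give a path from $L$ to $U$, so $U\cap R(L)\neq\varnothing$, contradicting that line~2(a) did not fire (the algorithm returned a rent, not ``No solution''); when exactly one of $L,U$ is nonempty, disjointness is immediate from the definitions of $S_{inc},S_{dec}$. Hence the rooms of $S_{inc}$ rise at rate $1$ and those of $S_{dec}$ fall at rate $|S_{inc}|/|S_{dec}|$ along disjoint coordinates, so $\sum_j r_j$ changes at rate $|S_{inc}|\cdot 1-|S_{dec}|\cdot\frac{|S_{inc}|}{|S_{dec}|}=0$ and stays $R_t$.

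The heart of the argument is preservation of envy-freeness within one iteration, during which the stopping rule in line~2(f) keeps $\sigma$, $G_E(\sigma,r)$, and the sets $L,U,L_f,U_f$ (hence $S_{inc},S_{dec}$) fixed: the continuous update is halted the instant a new weak-envy edge would appear or a vertex would change membership, so the combinatorial structure defining $S_{inc},S_{dec}$ is valid throughout the phase. It suffices to show that no existing (weak-envy) edge $a\to b$ of $G_E(\sigma,r)$ — meaning the agent $\alpha$ in room $a$ has $v_{\alpha a}-r_a=v_{\alpha b}-r_b$ — turns strong, i.e.\ that $r_b$ never falls in rate strictly more than $r_a$. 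The structural facts used are that $R(X)$ is closed under out-edges and $R^{-1}(X)$ under in-edges, so (i) $S_{inc}$ is out-edge-closed in every regime: it equals $R(L)$ when $L\ne\varnothing$, and the complement of the in-edge-closed set $R^{-1}(U_f\cup U)$ when $L=\varnothing$; and (ii) the complement of $S_{dec}$ (namely $R(L_f\cup L)$ when $U=\varnothing$, and $R^{-1}(U_f\cup U)$ when $U\ne\varnothing$) contains the ``rest'' of $\calR$ and admits no edge into $S_{dec}$. Now case split on the membership of $a$: if $a\in S_{inc}$ then $b\in S_{inc}$ by (i), so $r_a,r_b$ both rise at rate $1$ and the gap is unchanged; if $a\in S_{dec}$ then $r_a$ falls, while $b\in S_{inc}$ makes $r_b$ rise, $b\in S_{dec}$ makes $r_b$ fall at the same rate, and $b\in\text{rest}$ leaves $r_b$ fixed — in no sub-case does $r_b$ fall faster than $r_a$; if $a\in\text{rest}$, then $r_a$ is fixed and by (ii) $b\notin S_{dec}$, so $r_b$ does not fall. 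In every case $u_{\alpha a}\ge u_{\alpha b}$ is maintained, so the edge never becomes strong and $(\sigma,r)$ stays envy-free. By induction the invariant holds at every iteration, in particular when the algorithm returns; since the loop exits only when $L\cup U=\varnothing$, the returned $r$ is then envy-free and satisfies $l_j\le r_j\le u_j$ for all $j$, as claimed.

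I expect the main obstacle to be precisely this envy-freeness step: one must pin down the four regimes of the $S_{inc}/S_{dec}$ definitions (the splits on $L=\varnothing$ and $U=\varnothing$), verify the reachability-closure properties (i) and (ii) in each, and justify that the stopping rule in line~2(f) genuinely freezes the combinatorial structure for the entire continuous phase so those properties remain valid. A minor additional point is checking that $S_{inc}$ and $S_{dec}$ are nonempty, so that the rate $|S_{inc}|/|S_{dec}|$ is well defined and the phase makes progress — this holds because $S_{inc}\supseteq L$ when $L\ne\varnothing$ and $S_{dec}\supseteq U$ when $U\ne\varnothing$, and in the remaining cases the relevant complement is nonempty unless the ``No solution'' branch would have fired.
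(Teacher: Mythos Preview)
Your proposal is correct and follows the same high-level approach as the paper's (very brief) proof: maintain the invariant that $(\sigma,r)$ is envy-free with total rent $R_t$ throughout the loop, and observe that the loop exits only when $L\cup U=\varnothing$; your contribution is to actually carry out the case analysis on edge endpoints that the paper merely asserts. One small slip: in your fact~(ii), when $U\ne\varnothing$ the complement of $S_{dec}=R^{-1}(U)$ is $\calR\setminus R^{-1}(U)$, not $R^{-1}(U_f\cup U)$ as written in the parenthetical---but the property you actually use (that $R^{-1}(U)$ is in-edge-closed, so no edge enters it from outside) is correct, and your case analysis goes through unchanged.
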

\begin{proof}
    The algorithm begins with an envy-free rent vector $r$ that satisfies the total rent constraint. Throughout the while loop, the rents are adjusted in a way that preserves the total rent. Whenever a new weak-envy edge appears, the relevant sets are updated to ensure that no strong envy is introduced. The algorithm terminates and returns $r$ only when the while loop condition fails—that is, when all room rents lie within their respective bounds. Therefore, the returned rent vector $r$ is envy-free, respects the total rent constraint, and satisfies the given lower and upper bounds.

\end{proof}

\begin{lemma}
    The \Cref{alg:bounds_on_rents} terminates.
\end{lemma}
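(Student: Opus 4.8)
\emph{Plan.} I would show that the \textbf{while}-loop of \Cref{alg:bounds_on_rents} executes only polynomially many iterations. The first thing to establish is that the loop is genuinely phased. A room's rent is decreased only when it lies outside $R(L)\cup R(L_f)$, and such a room has rent strictly above its lower bound, so it can at worst reach $l_j$ and enter $L_f$ (a stop event) without ever entering $L$; hence while $L\neq\varnothing$ no room enters $L$, and once $L=\varnothing$ any room that re-enters $L$ must lie in $R^{-1}(U)$ and triggers the ``No solution'' test on the next check. Symmetrically for $U$. Thus $|L|$ and $|U|$ are non-increasing along the whole run (each, once $0$, stays $0$), the picture ``first drive $L$ to $\varnothing$, then $U$'' is accurate, and it suffices to bound the iterations while $L\neq\varnothing$; the $U$-phase is the mirror image under reversing every weak-envy edge and swapping increase/decrease and lower/upper.

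The key structural claim I would prove is that $D:=R(L)\cup R(L_f)$ — the rooms the algorithm is forbidden to decrease — is monotone non-decreasing, hence changes at most $n$ times. Adding a weak-envy edge only enlarges reachability, so it cannot shrink $D$, and an edge that moves a room from $L$ into $L_f$ leaves that room and its reachable set inside $D$. The delicate case is the disappearance of a weak-envy edge $(\sigma(i),\sigma(j))$: this happens only when $r_{\sigma(j)}$ rises strictly faster than $r_{\sigma(i)}$, i.e.\ the tail $\sigma(i)$ carries a non-maximal update rate. Tracking the rates (rooms in $R(L)$ rise at unit rate, rooms outside $D$ fall, the frozen set $R(L_f)\setminus R(L)$ is stationary), the tail is then either outside $D$, so its outgoing edges are irrelevant to reachability from $L$ or $L_f$, or a frozen room whose head lies in $R(L)\subseteq D$, so nothing leaves $D$. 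The same reasoning (with arrows reversed) gives monotonicity of the mirror set $R^{-1}(U)\cup R^{-1}(U_f)$ in the $U$-phase.

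With monotonicity of $D$ in hand I would peel the remaining iterations off in a nested fashion. Inside a fixed-$D$ epoch the set $L$ is non-increasing, so it shrinks at most $n$ times; inside a fixed $(D,L)$ epoch the set $R(L)$ is non-decreasing — no edge out of $R(L)$ is ever deleted, since its tail runs at the maximal rate, and new edges can only enlarge $R(L)$ — so it grows at most $n$ times. Once $D$, $L$ and $R(L)$ are fixed, $S_{inc}=R(L)$ and $S_{dec}=\calR\setminus D$ are fixed, the update rates are constant, and the rent vector traces a single line segment; along a line each threshold hyperplane $\{r_j=l_j\}$, $\{r_j=u_j\}$ and $\{r_{\sigma(j)}-r_{\sigma(i)}=v_{i\sigma(j)}-v_{i\sigma(i)}\}$ is crossed at most once, so at most $O(n^2)$ events occur in such an epoch. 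Multiplying the nested bounds gives a polynomial bound on the number of iterations, and since every event time is the solution of an explicit linear equation, strongly polynomial running time, as asserted in the Remark at the end of \Cref{sec:rent_bounds}.

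The main obstacle is exactly the monotonicity of $D$ and its mirror: weak-envy edges are both created and destroyed during the run, so reachability closures are not monotone in general, and the whole argument rests on the observation that any edge that disappears has its tail running at a non-maximal rate and hence sits outside the closure that defines $D$. The second delicate point is event bookkeeping — in particular checking that every event which would push a room past its upper (resp.\ lower) bound during the $L$-phase (resp.\ $U$-phase) forces a ``No solution'' within $O(1)$ further iterations, so that the clean two-phase picture, and with it the nested epoch count, is not disrupted by a room silently crossing into the ``wrong'' infeasible set.
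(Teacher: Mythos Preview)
Your nested-epoch scheme has a real gap in the innermost layer. You write ``Once $D$, $L$ and $R(L)$ are fixed, $S_{inc}=R(L)$ and $S_{dec}=\calR\setminus D$ are fixed,'' but this formula for $S_{dec}$ is only what the algorithm uses when $U=\varnothing$. When both $L\neq\varnothing$ and $U\neq\varnothing$ the pseudocode sets $S_{dec}=R^{-1}(U)$, which depends on $U$ and on the graph and is \emph{not} determined by $(D,L,R(L))$. Your phasing paragraph establishes that $|L|$ and $|U|$ are each non-increasing, but that does not force the algorithm to first empty $L$ and only then touch $U$; both can be nonempty and worked on simultaneously, and the paper's own proof treats exactly that case. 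So in the regime $L\neq\varnothing,\ U\neq\varnothing$ your ``constant rates along a single line segment'' claim fails, and the $O(n^2)$-events-per-innermost-epoch count is unsupported. The fix is clear---add $U$, $R^{-1}(U)$ (and by symmetry $R^{-1}(U_f)$) as further monotone layers---but as written the proposal does not terminate the bookkeeping.

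The paper takes a much shorter route and avoids all the edge-deletion analysis you set up. It uses $|L\cup U|$ as the outer potential and argues that between two consecutive drops of $|L\cup U|$ there are at most $O(n)$ iterations: in the regime $L,U\neq\varnothing$ each stop event that is not a drop introduces a new weak-envy edge either out of $R(L)$ or into $R^{-1}(U)$, so $|R(L)\cup R^{-1}(U)|$ strictly increases; in the regime where one of $L,U$ is empty the analogous quantity $|R^{-1}(U)\cup R^{-1}(U_f)|$ (or its mirror) plays the same role. Since $|L\cup U|$ can drop at most $n$ times, this gives $O(n^2)$ iterations in total---tighter than the $O(n^{4\text{--}5})$ your nesting would produce even after patching. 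Your monotonicity of $D=R(L)\cup R(L_f)$ is correct and nicely argued, but it is not needed: the paper never tracks $D$ or worries about disappearing edges, because the secondary potential it uses only involves new edges that land in or leave the sets being moved.
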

\begin{proof}
We show that after $O(n)$ iterations of the while loop, $|U \cup L|$ will decrease.  
Let's consider two cases:
\begin{enumerate}
    \item $U \neq \emptyset$ and $L \neq \emptyset$
    \item $U = \emptyset$ or $L = \emptyset$
\end{enumerate}

In the first case, we decrease the rents of $R^{-1}(U)$ and increase the rent of $R(L)$, so new edges will be coming into $R^{-1}(U)$ or going out of $R(L)$. Whenever a new edge is introduced, it will increase $|R^{-1}(U) \cup R(L)|$ or move a vertex from $U$ to $U_f$ or from $L$ to $L_f$. Since we only have $n$ vertices, it will decrease $|L \cup U|$ after at most $n$ iterations of the while loop.\\
For the second case, let's say $U \neq \emptyset$ and $L = \emptyset$. The algorithm will decrease the rent of $R^{-1}(U)$ and increase the rent of $\mathcal{R}\setminus (R^{-1}(U_f)\cup R^{-1}(U))$. Let's first do these changes at a very small rate, so we only move all $L_f$ out of it, so let's assume that $L_f = \emptyset$. After each step of the algorithm, either a new edge will come out from the $\mathcal{R}\setminus (R^{-1}(U_f)\cup R^{-1}(U))$ or a new edge will go in the $R^{-1}(U)$. Hence, either $|R^{-1}(U)|$ or $|(R^{-1}(U_f)\cup R^{-1}(U))|$ increases, which again can only happen at most $n$ times after which $|L \cup R|$ must decrease.
As $|U \cup L|$ can only decrease $n$ times, the algorithm will terminate in at most $n^2$ iterations of the while loop. 
\end{proof}
\end{proof}

\maximin*
\begin{proof}
% {Proof of \Cref{thm:maximin}}
\label{proof:maximin}
Assume the algorithm terminates with output $r$ with $\sigma$ as the associated matching. Suppose, for contradiction, that $(\sigma, r)$ does not maximize the minimum utility, and there exists another envy-free allocation $(\mu, r')$ that does, achieving a higher minimum utility.

By \Cref{thm:welfare}, since both $\sigma$ and $\mu$ are welfare-maximizing matchings, $(\sigma, r')$ is also envy-free. Then, by \Cref{cor:equal_utility}, the utilities of all agents under $(\mu, r')$ and $(\sigma, r')$ are identical. Hence, $(\sigma, r')$ achieves the same (higher) minimum utility as $(\mu, r')$, and so the minimum utility under $(\sigma, r')$ is strictly higher than under $(\sigma, r)$.

Now consider the set $M$ of rooms whose occupants have the minimum utility at the end of the algorithm. Since $r'$ achieves a higher minimum utility, every agent assigned to a room in $M$ must receive strictly more utility under $r'$ than under $r$. Note that the algorithm terminates under one of two conditions:
\begin{enumerate}
    \item $R(L_f) \cap M \neq \varnothing$, or
    \item $R^{-1}(M) \cup R^{-1}(U_f) = \mathcal{R}$.
\end{enumerate}

\textbf{Case 1:} $R(L_f) \cap M \neq \varnothing$. \\
There exists a room $j \in L_f$ and a room $k \in M$ such that there is a directed path from $j$ to $k$ in the envy graph $G_E(\sigma, r)$:
$$
j = h_1 \to h_2 \to \cdots \to h_t = k.
$$
Since $(\sigma, r')$ is envy-free and provides higher utility to the agent in room $k$, it must be that $r'_k < r_k$. To avoid introducing strong envy from the agent in $h_{t-1}$ to $k$, we must have $r'_{h_{t-1}} < r_{h_{t-1}}$. Continuing this argument along the path, we eventually get $r'_j < r_j$. But since $j \in L_f$, we have $r_j = l_j$, and any further decrease would violate the lower bound. Thus, $r'_j$ is infeasible—a contradiction.

\textbf{Case 2:} $R^{-1}(M) \cup R^{-1}(U_f) = \mathcal{R}$. \\
This implies that every agent can reach some room in $M \cup U_f$ via a directed path in $G_E(\sigma, r)$. Using the same logic as in Case 1, we conclude that $r'_j \leq r_j$ for all $j$, and $r'_j < r_j$ for all $j \in M$. But this contradicts the fact that the total rent is same in both $r$ and $r'$ , as we have strictly decreased the rent of some rooms and weakly decreased the rest.

Therefore, our initial assumption is false, and $(\sigma, r)$ must be a maximin utility solution.

To prove that the algorithm terminates, we argue that $|R^{-1}(M) \cup R^{-1}(U_f)|$ increases after at most $n$ iterations of the while loop. Note that when a new edge is created in the algorithm, it should either go into $R^{-1}(M)$ or out of $\calR \setminus (R^{-1}(M) \cup R^{-1}(U_f))$. Either way, we will have a new edge going into $(R^{-1}(M) \cup R^{-1}(U_f))$ and so, the set of vertices that can reach $(R^{-1}(M)$ or $R^{-1}(U_f)$ only increase. Since this can happen only $O(n)$ times, the algorithm terminates within $O(n)$ iterations of the while loop. 

% When the algorithm returns "No solution", one of the following two events must have occurred:
% \begin{enumerate}
%     \item $U \cap \left( R(L_F) \cup R(L) \right) \neq \varnothing$
%     \item $L \cap \left( R^{-1}(U_F) \cup R^{-1}(U) \right) \neq \varnothing$
% \end{enumerate}

% In case (1), there exists a room $j \in L_F \cup L$ and a room $k \in U$ such that there is a path from $j$ to $k$ in the weak envy graph under rent $r$. Let this path be $j, h_1, h_2, \dots, h_t, k$.

% Since $r'$ is an EF rent that satisfies the bounds, we have $r'_k < r_k$. Because there is a weak envy edge from $h_t$ to $k$, it follows that $r'_{h_t} < r_{h_t}$. Continuing this reasoning along the path, we obtain $r'_j < r_j$.

% However, since $j \in L_F \cup L$, we know that $r_j \leq \ell_j$, and thus $r'_j < \ell_j$, which contradicts our assumption that $r'$ satisfies the bounds.

% An analogous argument applies for case (2), completing the proof.    
\end{proof}

\leximin*
\begin{proof}
% {Proof of \Cref{thm:leximin}}
\label{proof:leximin}

Assume the algorithm starts with matching $\sigma$ and terminates with the output $r$ but say $r'(\neq r)$ is the leximin utility rents. Let's say iteration $l$ is the first time the algorithm fixes the rent of a room in $r$ to a value that is different to that of $r'$ by either adding it to $F$ or exiting the while loop. Note that all the rooms whose rents match in $r$ and $r'$ and have been frozen previously are in $F$. Like in \Cref{thm:maximin} are two cases how iteration $l$ ends. The proofs of both these cases very much mimic \Cref{thm:maximin} and we explain them briefly.

\textbf{Case 1:} $R(L_f) \cap M \neq \varnothing$. \\
There exists a room $j \in L_f$ and a room $k \in M$ such that there is a directed path from $j$ to $k$ in the envy graph $G_E(\sigma, r)$:
$
j = h_1 \to h_2 \to \cdots \to h_t = k.
$
Since $(\sigma, r')$ is envy-free and provides higher utility to the agent in room $k$, it must be that $r'_k < r_k$. To avoid introducing strong envy from the agent in $h_{t-1}$ to $k$, we must have $r'_{h_{t-1}} < r_{h_{t-1}}$. Continuing this argument along the path, we eventually get $r'_j < r_j$. But since $j \in L_f$, we have $r_j = l_j$ or $r_j = r'_j$, and any further decrease would violate either one of these necessary conditions giving a contradiction.

\textbf{Case 2:} $R^{-1}(M) \cup R^{-1}(U_f) = \mathcal{R}$. \\
This implies that every agent can reach some room in $M \cup U_f$ via a directed path in $G_E(\sigma, r)$. Using the same logic as in Case 1, we conclude that $r'_j \leq r_j$ for all $j$, and $r'_j < r_j$ for all $j \in M$. But this contradicts the fact that the total rent is the same in both $r$ and $r'$ , as we have strictly decreased the rent of some rooms and weakly decreased the rest. Since neither case is possible, $(\sigma,r)$ is the leximin envy-free solution.

Proof of termination of this algorithm proceeds in the same way as \Cref{alg:maximin}. 

To prove that the algorithm terminates we argue that $|R^{-1}(M) \cup R^{-1}(U_f)|$ increases after at most $n$ iterations of the while loop before $|F|$ increases. Note that when a new edge is created in the algorithm, it should either go into $R^{-1}(M)$ or out of $\calR \setminus (R^{-1}(M) \cup R^{-1}(U_f))$. Either way, we will have a new edge going into $(R^{-1}(M) \cup R^{-1}(U_f))$ and so, The set of vertices that can reach $(R^{-1}(M)$ or $R^{-1}(U_f)$ only increase. Since, this can happen only $O(n)$ times, the size of $|F|$ increases within $O(n)$ iterations of the while loop. And since $|F|$ can only increase from 0 to $n$, the algorithm terminates in $O(n^2)$ iterations of the while loop.

\end{proof}

\minabsolutespread*
\begin{proof}
% {Proof of \Cref{thm:min_absolute_spread}}
\label{proof:min_absolute_spread}
The argument to show that the algorithm terminates is very similar to \Cref{alg:maximin} and \Cref{alg:leximin} and is omitted for brevity.

First, note that $r$ is a maximin solution. Assume the algorithm starts with matching $\sigma$ and terminates with the output $r$. Note that $r$ is a maximin utility rent vector - after computing the maximin solution, the algorithm freezes the rents of these rooms and only changes the rents of other rooms. Let $ r'(\neq r)$ be envy-free rents satisfying the budgets with a lesser spread than $r$. AS $r$ is a maximin utility rent vector, the minimum utility of agents in $r'$ cannot be higher than in $r$. And since $r'$ has a smaller spread than $r$, the maximum utility of an agent in $r'$ must be less than that in $r$.

Now consider the set $M$ of rooms whose occupants have the maximum utility(under $r$) at the end of the algorithm. Every agent assigned to a room in $M$ must receive less utility under $r'$ than under $r$. Note that the algorithm terminates under one of two conditions:
\begin{enumerate}
    \item $R^{-1}(U_f) \cap M \neq \varnothing$, or
    \item $R(M) \cup R(L_f) = \mathcal{R}$.
\end{enumerate}

\textbf{Case 1:} $R^{-1}(U_f) \cap M \neq \varnothing$. \\
There exists a room $k \in U_f$ and a room $j \in M$ such that there is a directed path from $j$ to $k$ in the envy graph $G_E(\sigma, r)$:
$$
j = h_1 \to h_2 \to \cdots \to h_t = k.
$$
Since $(\sigma, r')$ is envy-free and provides lower utility to the agents in $M$, it must be that $r'_j > r_j$, i.e., rent of room $j$ must go up from $r$ to $r'$. To avoid introducing strong envy from the agent in $j$ to $h_1$, the rent of $h_1$ should go up by at least that of $j$. So, we have $r'_{h_1} - r_{h_1} \geq r'_j - r_j > 0$.
Continuing this argument along the path, we get $r'_k - r_k \geq r'_j - r_j > 0$.

But since $k \in U_f$, we have either $r_k = u_k$ or $k$ is a room with minimum utility in $r$. If $r_k=u_k$, then $r'_k > r_k = u_k$ making $r'$ infeasible. If $k$ is a room with minimum utility in $r$, then $r'_k - r_k \geq r'_j - r_j \Leftrightarrow u'_\sigma(j)-u'_\sigma(k) \geq u_\sigma(j)-u_\sigma(k)$. Note that $u_\sigma(j)-u_\sigma(k)$ is just the spread of $u$ as $j\in M$ and $k$ are the rooms with maximum and minimum utility respectively.

And so, $r'$ is either infeasible or has at most the spread of $r$—a contradiction.

\textbf{Case 2:} $R(M) \cup R(L_f) = \mathcal{R}$. \\
This implies that every room is reachable from some room in $M \cup L_f$ via a directed path in $G_E(\sigma, r)$. Using the same logic as in Case 1 for every of these paths, we get that $ \forall k\in \calR, j\in M, r'_k - r_k \geq r'_j - r_j$, and $r'_j > r_j$ for all $j \in M$. But this contradicts the fact that the total rent is same in both $r$ and $r'$ , as we have strictly increased the rent of all rooms from $r$ to $r'$.

Therefore, our initial assumption is false, and $(\sigma, r)$ must be a min-spread utility envy-free rent solution.    
\end{proof}

\minrelativespread*
\begin{proof}
% {Proof of \Cref{thm:min_relative_spread}}
\label{proof:min_relative_spread}
By \cref{thm:min_absolute_spread} we know the algorithm returns envy-free rents that minimize the absolute spread, that is also the maximin solution. We claim that these two imply minimum relative spread utility.\\
Assume that $(\sigma,r)$ is the output of the \cref{alg:min_spread}, and agents $a,b$ have minimum and maximum utilities, respectively. Let's say $(\sigma' , r')$ is the min relative spread utility solution, and agents $a',b'$ have minimum and maximum utilities, respectively. \\
By abusing the notation let's call $$u_a= u_{a\sigma(a)(r)},\quad u_b=u_{b\sigma(b)(r)},\quad u_{a'} = u_{a'\sigma'(a')(r')}, \quad u_{b'}=u_{b'\sigma'(b')(r')}$$

We know $(\sigma,r)$ has maximin utility and so, $u_a \geq u_{a'}$. Since, $(\sigma,r)$ has min-spread utility, we have $u_b-u_a \leq u_{b'}-u_{a'}$. Therefore, $\frac{u_b-u_a}{u_a}\leq \frac{u_{b'}-u_{a'}}{u_{a'}}$ 

$$\frac{u_b-u_a}{u_a}\leq \frac{u_{b'}-u_{a'}}{u_{a'}} \Leftrightarrow \frac{u_b}{u_a}\leq \frac{u_{b'}}{u_{a'}} $$

So, the relative spread of the maximum and minimum utility agents under vector $r$ is always at most that of the same agents under vector $r'$. Since, $r'$ has the minimum relative spread, so does $r$, completing the proof.
\end{proof}

\end{document}